\documentclass[lettersize,journal]{IEEEtran}

\usepackage{array}
\usepackage{stfloats}
\usepackage{verbatim}
\usepackage{cite}
\usepackage{amsmath,amssymb,amsfonts}
\usepackage{amssymb}
\usepackage{graphicx}
\usepackage{algorithm}
\usepackage{algpseudocode}
\usepackage{textcomp}
\usepackage{paralist}
\usepackage{eurosym}
\usepackage{flushend}
\usepackage{url}
\usepackage[utf8]{inputenc}
\usepackage{soul}
\usepackage{caption}
\usepackage[list=true]{subcaption}
\usepackage{booktabs}
\usepackage{tabu}
\usepackage{wrapfig}
\usepackage{arydshln}
\usepackage[normalem]{ulem}
\usepackage{multirow}
\usepackage{hyperref}
\usepackage{siunitx}
\usepackage{diagbox}
\usepackage{tikz}
\usetikzlibrary{arrows.meta,positioning,calc,fit,backgrounds,shapes.geometric,shadows}
\usepackage[table]{xcolor}
\usepackage{colortbl}
\usepackage[capitalize]{cleveref}
\usepackage[T1]{fontenc}
\usepackage{nicefrac}
\usepackage{microtype}
\usepackage{adjustbox}
\usepackage{tabularx}
\usepackage{fontawesome5}

\newcommand{\rotHeader}[1]{\rotatebox[origin=lb]{90}{\strut #1}}

\newtheorem{proposition}{Proposition}

\newtheorem{proof}{Proof}

\newcolumntype{P}[1]{>{\raggedright\arraybackslash}p{#1}}

\begin{document}

\title{Permutation-Invariant Time--Frequency Segmentation and Parameter Estimation of Multiple Overlapping Interference Signals}

\def\authorrefmark#1{\ensuremath{^{#1}}}
\author{\IEEEauthorblockN{
Lucas Heublein\authorrefmark{1},
Christian Wielenberg\authorrefmark{1,}\authorrefmark{2},
Christopher Mutschler\authorrefmark{1,}\authorrefmark{2}},
Felix Ott\authorrefmark{1}\\
\IEEEauthorblockA{\authorrefmark{1}Fraunhofer Institute for Integrated Circuits IIS, 90411 Nürnberg, Germany}\\
\IEEEauthorblockA{\authorrefmark{2}Machine Learning and Positioning Systems Lab, University of Technology Nürnberg (UTN), 90461 Nürnberg}\\
{\tt\small\{lucas.heublein, christian.wielenberg, christopher.mutschler, \scalebox{0.8}{\faEnvelope[regular]}\,felix.ott\}@iis.fraunhofer.de}
\thanks{Manuscript received October 02, 2026; revised XX XX, 2026.}
}

\markboth{UNDER REVIEW, October~2026}%
{Heublein \MakeLowercase{\textit{et al.}}: Segmentation and Parameter Estimation of Multiple Interference Signals}

\IEEEpubid{0000--0000/00\$00.00~\copyright~2021 IEEE}

\maketitle

\begin{abstract}
Radio-frequency interference threatens satellite navigation, wireless communications, radar, and industrial sensing. Reliable operation therefore requires spectrum situational awareness to detect interference and determine the number, time--frequency occupancy, and characteristics of active sources. In realistic environments, multiple sources may overlap, while multipath propagation and unequal received powers can merge their signatures or cause strong emissions to mask weaker ones. Source-specific segmentation and characterization are therefore particularly challenging. In this paper, we propose PI-SCAN (\underline{P}ermutation-\underline{I}nvariant \underline{S}egmentation, \underline{C}haracterization, and \underline{A}ssignment \underline{N}etwork), a joint set-based learning framework for source-specific segmentation and characterization of overlapping interference signals. A logarithmic power spectrogram derived from complex IQ samples is mapped to six unordered prediction slots. For each slot, the model estimates objectness, interference type, center frequency, bandwidth, received signal strength, binary offset carrier (BOC) modulation, and a source-specific segmentation mask. Permutation-invariant Hungarian assignment aligns the predicted slots with the unordered ground-truth sources, while a weighted multi-task objective jointly optimizes the individual prediction tasks. The framework is evaluated using a ray-traced industrial environment containing mixtures of one to six simultaneously active interferers drawn from 94 waveform configurations across six interference classes. The full seven-task configuration achieves a detection F1-score of $74.8\%$, type and BOC accuracies of $97.3\%$ and $98.4\%$, and mean absolute errors of $1.29\,\mathrm{MHz}$, $2.54\,\mathrm{MHz}$, and $3.19\,\mathrm{dB}$ for center frequency, bandwidth, and signal strength. Using only objectness, type, and mask prediction increases the F1-score to $91.1\%$, highlighting the trade-off between detection and comprehensive characterization.
\end{abstract}

\begin{IEEEkeywords}
Joint segmentation and characterization, set-based prediction, permutation-invariant learning, multi-task learning, time--frequency segmentation, overlapping-signal recognition.
\end{IEEEkeywords}

\section{Introduction}
\label{label_introduction}

Radio-frequency (RF) interference affects applications in which weak signals must be recovered in increasingly congested electromagnetic environments. GNSS receivers encounter intentional jammers and unintentional emissions, wireless systems accommodate several coexisting technologies in shared bands, radar systems face compound jamming, and radio telescopes observe diverse terrestrial and satellite interference \cite{borio_impact_detection,fonseca_rat_detection,sarkar_radyololet,akeret_rfi_unet}. In realistic scenarios, multiple independent sources may be active simultaneously and overlap in time and frequency, allowing strong emissions to obscure weaker sources and complicating their individual analysis \cite{liu_composite_jamming,heublein_vidal_jammertest}.

Reliable spectrum situational awareness therefore requires more than determining whether interference is present. A monitoring system should estimate the number of active sources, delineate the time--frequency region occupied by each source, classify its interference type, and characterize relevant physical properties such as center frequency, bandwidth, and received signal strength. Maintaining the association between each segmented source and its predicted characteristics is particularly important when several emissions overlap. In addition to supporting interference assessment and mitigation, these source-resolved outputs can be combined with measurements from antenna arrays or distributed monitoring stations for direction-of-arrival estimation and emitter localization \cite{borio_spectral_spatial,wielenberg_sicdf}. Real-world multi-emitter measurements further demonstrate the importance of jointly analyzing waveform type, spectral occupancy, received power, and receiver impact under realistic propagation conditions \cite{heublein_vidal_jammertest,heublein_benschuh}.

Early GNSS monitoring methods detected interference from receiver observables such as automatic gain control, \(C/N_0\), and correlator statistics, or characterized individual emissions through spectral analysis \cite{balaei_cw_interference,axell_jamming_detection,borio_impact_detection,mitch_jammer_characteristics}. Recent learning-based approaches classify time--frequency representations using support-vector machines, convolutional networks, or multi-stage architectures \cite{morales_jammer_classification,elango_disruptive_gnss,mehr_dual_stage}. Beyond GNSS, semantic-segmentation networks identify interference-contaminated time--frequency bins, whereas object detectors localize signals using rectangular bounding boxes and assign technology or modulation labels \cite{akeret_rfi_unet,fonseca_rat_detection,wood_signal_identification,boegner_wideband_detection}. Recent multilabel, transformer-based set-prediction, and YOLO-based approaches additionally address compound or simultaneously active signals \cite{guo_multilabel_jamming,vagollari_transformer_rf,qu_hifi_yolo}. Nevertheless, these methods generally provide recording-level labels, a shared foreground mask, or rectangular signal extents and do not jointly produce source-specific masks together with comprehensive physical characteristics. Consequently, the joint segmentation and characterization of an unordered set of overlapping interference sources remains insufficiently addressed.

\IEEEpubidadjcol

\textbf{Contributions.} We summarize our contributions as follows. (i) First, we develop a simulated dataset of realistic multi-source interference scenarios using the Sionna ray-tracing framework~\cite{hoydis_cammerer}. (ii) Second, we propose a unified set-based framework that jointly detects, segments, and characterizes up to six simultaneously active and overlapping interference sources from their spectrogram representation. (iii) Third, we introduce permutation-invariant source assignment to preserve the association between each source-specific mask and its predicted type, center frequency, signal strength, bandwidth, and BOC property. (iv) Fourth, we investigate MobileNetV2--U-Net~\cite{sandler_mobilenetv2,ronneberger_unet} and attention-based~\cite{vaswani_transformer} architectures under a common prediction interface and systematically evaluate alternative prediction objectives and loss functions. (v) Finally, we demonstrate the method's ability to resolve overlapping sources and quantify the trade-off between interference detection performance and comprehensive multi-parameter characterization.

\section{Related Work}
\label{label_related_work}

GNSS interference monitoring generally comprises detection, classification, and characterization. Traditional detection methods operate either in the receiver domain, where changes in automatic gain control (AGC), carrier-to-noise-density ratio \(C/N_0\), and correlator statistics indicate anomalous received power or tracking degradation, or in the signal domain, where power-spectral-density and time--frequency analyses expose structured emissions \cite{balaei_cw_interference,axell_jamming_detection,borio_impact_detection}. Complementary characterization studies analyze properties such as center frequency, occupied bandwidth, received power, temporal modulation, and spatial behavior \cite{mitch_jammer_characteristics,borio_spectral_spatial}. More recent approaches formulate interference monitoring as a learning problem: support-vector machines and convolutional neural networks classify spectrograms into predefined jammer types, feature-based methods categorize events recorded by distributed monitoring networks, and transfer-learning or dual-stage architectures combine interference detection with the estimation of its type or power \cite{morales_jammer_classification,diez_rfi_classification,elango_disruptive_gnss,mehr_dual_stage}. Initial work on composite interference further considers the simultaneous presence of multiple jammer types, but primarily performs recording-level detection and classification rather than source-resolved segmentation and characterization \cite{liu_composite_jamming}. Related work extends these directions through simulated data for dynamic interference characterization and direction finding \cite{wielenberg_sicdf}, real-world multi-source measurements and characterization \cite{heublein_vidal_jammertest}, receiver-aware interference-impact analysis using the spectral separation coefficient \cite{heublein_benschuh}, and hardware-aware model optimization for embedded inference \cite{heublein_deutel}. Nevertheless, jointly localizing, separating, and characterizing an unknown set of overlapping interference sources within a single spectrogram remains insufficiently addressed, motivating the set-based approach.

Recent machine-learning approaches increasingly treat time--frequency representations as structured images instead of assigning a single label to an entire recording. Pixel-wise semantic segmentation has been applied to RF-interference flagging using U-Net architectures, enabling individual time--frequency bins to be distinguished from uncontaminated regions \cite{akeret_rfi_unet}. Object-detection approaches subsequently introduced instance-level localization: Fonseca et al.~\cite{fonseca_rat_detection} used spectrogram bounding boxes to identify radio-access technologies and derive center frequency, bandwidth, and temporal parameters, while Wood et al.~\cite{wood_signal_identification} evaluated RetinaNet and YOLOv5 for detecting signals with overlapping frequency ranges. Large-scale studies have since compared semantic-segmentation and object-detection networks on complex-valued spectrograms containing numerous signal classes \cite{boegner_wideband_detection}. Related radar-spectrum methods combine YOLO-based localization with wavelet-domain processing to improve low-power detection and parameter estimation \cite{sarkar_radyololet}, whereas recent analyses demonstrate that STFT resolution, windowing, and overlap substantially influence the performance of YOLO- and DETR-based detectors \cite{kang_optimal_preprocessing}. Image segmentation has also been used to isolate compound radar-jamming signatures before attention-based classification, although the resulting segmentation is not source-specific \cite{li_compound_segmentation}. For our experimental comparison, we consider three recent approaches representing complementary formulations: MLCJC-CNN~\cite{guo_multilabel_jamming} performs recording-level multilabel recognition of the constituents of compound radar jamming but does not localize or characterize individual sources; the transformer of Vagollari et al.~\cite{vagollari_transformer_rf} directly processes frequency-domain IQ samples and generates a set of localized technology-class predictions, but does not produce dense time--frequency masks or the considered GNSS interference parameters; and HIFI-YOLO~\cite{qu_hifi_yolo} jointly localizes and recognizes overlapping modulation signals using spectrogram bounding boxes, which provide only rectangular signal extents and modulation labels. In contrast, our method represents every interference source as a permutation-invariant set element and jointly predicts its dense source-specific mask, objectness, type, center frequency, signal strength, bandwidth, and BOC property. It therefore supports instance-level segmentation and physical characterization of up to six overlapping GNSS interference sources, including sources that occupy the same time--frequency bins or exhibit non-rectangular spectral structures.
\section{Methodology of PI-SCAN}
\label{label_method}

In Sec.~\ref{label_method_problem}, we formulate the joint segmentation and characterization problem, while Sec.~\ref{label_method_overview} summarizes the complete processing pipeline. Sec.~\ref{label_method_preprocessing} and~\ref{label_method_architectures} describe the input preparation and evaluated network architectures. Sec.~\ref{label_method_model_tasks} introduces the set-based prediction tasks and their loss functions, followed by the permutation-invariant assignment in Sec.~\ref{label_method_permutation} and the training objective in Sec.~\ref{label_method_training_objective}.

\begin{figure*}[!t]
    \centering
    \resizebox{\textwidth}{!}{%
    \begin{tikzpicture}[
        block/.style={
            draw,
            thick,
            rounded corners=1pt,
            align=center,
            minimum height=6mm,
            minimum width=10mm,
            fill=blue!8,
            font=\scriptsize
        },
        branch/.style={
            draw,
            thick,
            rounded corners=1pt,
            align=center,
            minimum height=6mm,
            minimum width=20mm,
            fill=green!9,
            font=\scriptsize
        },
        output/.style={
            draw,
            thick,
            rounded corners=1pt,
            align=center,
            minimum height=6mm,
            minimum width=27mm,
            fill=orange!12,
            font=\scriptsize
        },
        training/.style={
            draw,
            thick,
            rounded corners=1pt,
            align=center,
            minimum height=6mm,
            minimum width=10mm,
            fill=red!8,
            font=\scriptsize
        },
        inference/.style={
            draw,
            thick,
            rounded corners=1pt,
            align=center,
            minimum height=6mm,
            minimum width=10mm,
            fill=purple!8,
            font=\scriptsize
        },
        thumbnail/.style={
            draw=black!45,
            thin,
            rounded corners=0.5pt,
            fill=white,
            inner sep=0.3mm
        },
        arrow/.style={
            -{Latex[length=2.2mm]},
            thick
        }
    ]

    \node[block] (iq) at (0,0)
        {Complex IQ\\snapshot\\
         $\mathbf{x}\in\mathbb{C}^{L}$};

    \node[block] (preprocess) at (2.1,0)
        {Spectrogram\\formation
         and\\normalization};

    \node[block] (spectrogram) at (4.5,0)
        {Input spectrogram\\
         $\mathbf{X}\in\mathbb{R}^{512\times512}$};

    \node[block] (encoder) at (6.9,0)
        {Shared feature\\
         encoder};

    \begin{scope}[
        shift={([yshift=-1.5mm]encoder.south)},
        x=1mm,
        y=1mm
    ]
        \draw[fill=blue!12, draw=blue!50]
            (-4,-4) rectangle (2,-1);
        \draw[fill=blue!20, draw=blue!50]
            (-2.8,-3.2) rectangle (3.2,-0.2);
        \draw[fill=blue!30, draw=blue!50]
            (-1.6,-2.4) rectangle (4.4,0.6);
    \end{scope}

    \node[branch] (heads) at (9.7,0.8)
        {Global pooling\\and
         slot-wise\\prediction heads};

    \node[branch] (decoder) at (9.7,-0.8)
        {Skip-connected\\
         mask decoder};

    \node[output] (attributes) at (12.7,0.8)
        {Objectness, type, center\\frequency,
         bandwidth,\\RSSI, and BOC};

    \node[output] (masks) at (12.7,-0.8)
        {Source-specific\\
         time--frequency masks};

    \node[output, minimum width=29mm] (slots) at (15.9,0)
        {Joint source slots\\
         $\{\widehat{\mathbf{y}}_s\}_{s=1}^{6}$};

    \node[
        inference,
        inner xsep=1mm,
        inner ysep=0.5mm
    ] (inference) at (19.0,0) {
        \begin{tabular}{@{}c@{}}
            \tikz[baseline=-0.5ex]{
                \fill[green!65!black] (0,0) circle (0.6mm);
                \fill[green!65!black] (2mm,0) circle (0.6mm);
                \fill[gray!45] (4mm,0) circle (0.6mm);
                \draw[purple!70,dashed]
                    (-1mm,-1.3mm) -- (5mm,-1.3mm);
            }\\[0.9mm]
            Objectness\\thresholding and\\active-source selection
        \end{tabular}
    };

    \node[training] (groundtruth) at (12.45,-2.0)
        {Unordered ground-truth set\\
         $\mathcal{Y}=\{\mathbf{y}_k\}_{k=1}^{K}$};

    \node[training] (matching) at (15.9,-2.0) {
        \begin{tikzpicture}[
            x=1.0mm,
            y=1.0mm,
            baseline=(current bounding box.center)
        ]
            \foreach \y in {-1,0,1}{
                \fill[orange!75] (0,\y) circle (0.28);
                \fill[red!60]    (4,\y) circle (0.28);
            }
            \draw[gray!70] (0,-1) -- (4,0);
            \draw[gray!70] (0,0)  -- (4,1);
            \draw[gray!70] (0,1)  -- (4,-1);
            \node[
                anchor=west,
                align=center,
                font=\scriptsize,
                inner sep=0pt,
                outer sep=0pt
            ] at (5.5,0)
            {Permutation-invariant\\Hungarian assignment};
        \end{tikzpicture}
    };

    \node[training] (loss) at (18.95,-2.0)
        {Weighted multi-task\\
         training objective};

    \node[thumbnail, below=1.2mm of iq] (iqfigure) {
        \includegraphics[
            width=14mm
        ]{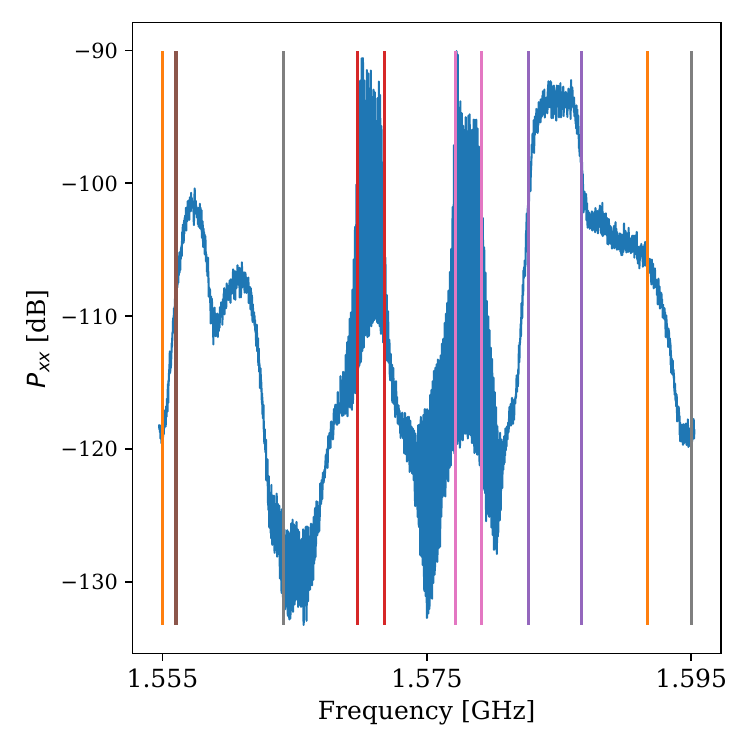}
    };
    
    \node[thumbnail, below=1.2mm of spectrogram] (specfigure) {
        \includegraphics[
            width=17mm
        ]{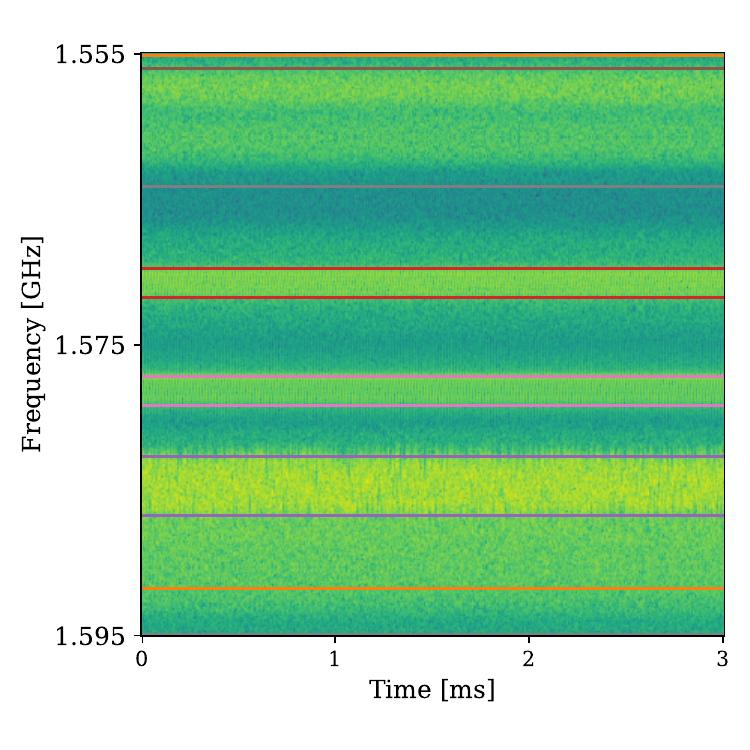}
    };
    
    \node[thumbnail, below=1.2mm of masks, xshift=-3cm] (maskfigure) {
        \includegraphics[
            width=18mm,
            keepaspectratio
        ]{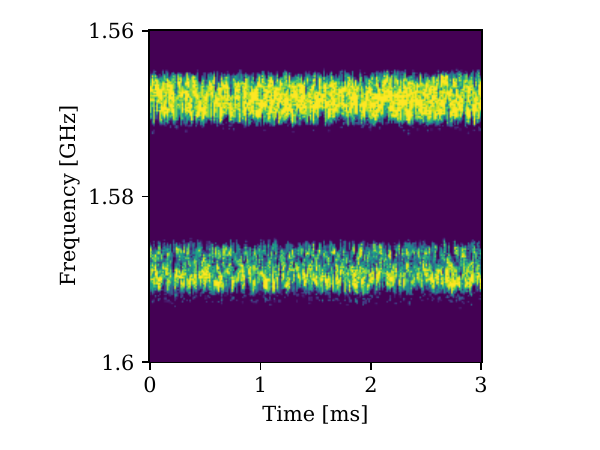}
    };

    \draw[arrow] (iq) -- (preprocess);
    \draw[arrow] (preprocess) -- (spectrogram);
    \draw[arrow] (spectrogram) -- (encoder);

    \draw[arrow] (encoder.east) -- ++(0.45,0) |- (heads.west);
    \draw[arrow] (encoder.east) -- ++(0.45,0) |- (decoder.west);

    \draw[arrow] (heads) -- (attributes);
    \draw[arrow] (decoder) -- (masks);

    \draw[arrow] (attributes.east) -- (slots.north west);
    \draw[arrow] (masks.east) -- (slots.south west);

    \draw[arrow] (slots.east) -- (inference.west);

    \draw[arrow] (slots.south) -- (matching.north);
    \draw[arrow] (groundtruth.east) -- (matching.west);
    \draw[arrow] (matching) -- (loss);

    \node[
        draw=red!55,
        dashed,
        line width=0.8pt,
        rounded corners=1.5pt,
        fit=(groundtruth)(matching)(loss),
        inner sep=1.6mm,
        label={[font=\footnotesize,text=red!65!black]above right:Training only}
    ] {};

    \draw[
        gray!60,
        thin
    ] (maskfigure.north east) -- (masks.south west);

    \end{tikzpicture}%
    }
    \caption{\textbf{Overview of PI-SCAN -- \underline{p}ermutation-\underline{i}nvariant \underline{s}egmentation, \underline{c}haracterization, and \underline{a}ssignment \underline{n}etwork.} A normalized time--frequency representation is
    processed by a shared encoder and parallel characterization and
    segmentation branches. The resulting outputs are combined into six
    source-specific prediction slots. During training, these slots are assigned
    to the unordered ground-truth sources using Hungarian matching before the
    weighted multi-task objective is evaluated. During inference, active
    sources are selected according to their predicted objectness scores.}
    \label{fig:method_overview}
\end{figure*}

\subsection{Problem Formulation and Input Representation}
\label{label_method_problem}

Let
\begin{equation}
    \mathbf{x}
    =
    \left[x_0,\ldots,x_{L-1}\right]^{\mathsf{T}}
    \in \mathbb{C}^{L}
\end{equation}
denote a received complex baseband snapshot containing a superposition of multiple interference signals and the corresponding channel effects. Each snapshot comprises $L$ IQ samples recorded at a sampling rate of $f_{\mathrm{s}}$. The IQ sequence is transformed into a single-channel, logarithmic power spectrogram
\begin{equation}
    \mathbf{X}\in\mathbb{R}^{N_f\times N_t},
    \qquad N_f=N_t=512,
\end{equation}
where the rows and columns represent frequency bins and time frames, respectively. The construction and normalization of this representation are described in Section~\ref{label_method_preprocessing}. The spectrogram contains the superimposed time--frequency signatures of all simultaneously active interference sources and constitutes the input to the proposed model. For each snapshot, the ground truth is represented as an unordered set $\mathcal{Y} = \left\{ \mathbf{y}_k \right\}_{k=1}^{K}$, with $1\leq K\leq 6$,
where $K$ denotes the number of active interference sources. Each source annotation is defined as
\begin{equation}
\label{eq:source_annotation}
    \mathbf{y}_k
    =
    \left(
        c_k,
        f_k,
        b_k,
        q_k,
        r_k,
        \mathbf{M}_k
    \right).
\end{equation}
Here, $c_k\in\{1,\ldots,C\}$ denotes the interference type, with $C=6$ classes corresponding to noise, chirp, frequency-hopping, modulated, multitone, and pulsed interference. Furthermore, $f_k$ and $b_k$ denote the center-frequency offset and occupied bandwidth in MHz, respectively, $q_k\in\{0,1\}$ indicates whether the waveform employs binary offset carrier (BOC) modulation~\cite{betz_boc}, and $r_k$ denotes its received signal strength indicator (RSSI) in dBm. The binary mask
\begin{equation}
    \mathbf{M}_k
    \in
    \{0,1\}^{N_f\times N_t}
\end{equation}
specifies the time--frequency support of the $k$-th interference source. In contrast to conventional semantic segmentation, the source masks are not required to be mutually exclusive. Consequently, two sources $i\neq j$ may occupy the same time--frequency bins, such that
\begin{equation}
    \left\|
        \mathbf{M}_i\odot\mathbf{M}_j
    \right\|_0 > 0,
\end{equation}
where $\odot$ denotes element-wise multiplication. This multi-label mask representation is essential for preserving individual source annotations in regions containing overlapping interference.

Given only the composite spectrogram $\mathbf{X}$, the objective is to jointly estimate the number of active sources and their individual characterization parameters and segmentation masks. Following the direct set-prediction paradigm~\cite{carion_detr}, the model produces a fixed set of $S=6$ candidate predictions, $\widehat{\mathcal{Y}} = \left\{ \widehat{\mathbf{y}}_s \right\}_{s=1}^{S}$, where each output slot is represented by
\begin{equation}
\label{eq:slot_prediction}
    \widehat{\mathbf{y}}_s
    =
    \left(
        \widehat{o}_s,
        \widehat{\mathbf{p}}_s,
        \widehat{f}_s,
        \widehat{b}_s,
        \widehat{q}_s,
        \widehat{r}_s,
        \widehat{\mathbf{M}}_s
    \right).
\end{equation}
The objectness score $\widehat{o}_s$ indicates whether slot $s$ represents an active source, while $\widehat{\mathbf{p}}_s\in[0,1]^C$ contains the interference-type probabilities. The remaining quantities estimate the center frequency, bandwidth, BOC indicator, RSSI, and source-specific time--frequency mask. Since neither the ground-truth sources nor the prediction slots possess an intrinsic ordering, the learning problem is permutation invariant. A one-to-one assignment between active sources and prediction slots is therefore determined during training, as described in Section~\ref{label_method_permutation}.

\subsection{Method Overview}
\label{label_method_overview}

Figure~\ref{fig:method_overview} summarizes the proposed end-to-end
methodology. The received IQ snapshot is first transformed into a normalized
time--frequency representation, as detailed in
Section~\ref{label_method_preprocessing}. A shared feature encoder subsequently
provides representations to two parallel branches: global prediction heads
estimate the presence, type, center frequency, bandwidth, RSSI, and BOC
property of each interference source, while a skip-connected decoder estimates
the corresponding source-specific segmentation masks. These outputs are
combined into a fixed set of six joint source predictions, as described in
Section~\ref{label_method_model_tasks}. During training, the unordered
predictions and ground-truth sources are associated through the
permutation-invariant assignment introduced in
Section~\ref{label_method_permutation}, after which the complete model is
optimized using the multi-task objective presented in
Section~\ref{label_method_training_objective}. At inference time, the
objectness scores determine which prediction slots correspond to active
interference sources.

\subsection{Pre-Processing}
\label{label_method_preprocessing}

Each complex IQ snapshot is transformed into a two-dimensional power spectral density (PSD) representation using the short-time Fourier transform (STFT)~\cite{allen_rabiner_stft}. A Blackman window~\cite{harris_windows} of $N_{\mathrm{FFT}}=512$ samples and an overlap of $256$ samples are employed, corresponding to a hop size of $256$ samples and an overlap ratio of $50\%$. No detrending is applied, and the complete two-sided spectrum is retained. At the sampling rate of $40.5\,\mathrm{MHz}$, this configuration provides a frequency-bin spacing of approximately $79.1\,\mathrm{kHz}$, a window duration of $12.64\,\mu\mathrm{s}$, and a temporal hop of $6.32\,\mu\mathrm{s}$. For an input length of $131{,}328$ IQ samples, the resulting spectrogram contains $512$ frequency bins and $512$ time frames.

The frequency axis is shifted such that zero frequency is located at the center of the representation. The PSD values are subsequently converted to the logarithmic domain according to
\begin{equation}
    \mathbf{S}_{\mathrm{dB}}
    =
    10\log_{10}\!\left(\mathbf{S}_{\mathrm{PSD}}\right).
\end{equation}
To reduce variations in the absolute scale between different snapshots, each spectrogram is standardized independently:
\begin{equation}
    \label{eq:spectrogram_normalization}
    \widetilde{\mathbf{X}}
    =
    \frac{
        \mathbf{S}_{\mathrm{dB}}-\mu_{\mathbf{S}}
    }{
        \sigma_{\mathbf{S}}
    },
\end{equation}
where $\mu_{\mathbf{S}}$ and $\sigma_{\mathbf{S}}$ denote the mean and standard deviation of the respective log-PSD spectrogram. The resulting normalized representation $\widetilde{\mathbf{X}}\in\mathbb{R}^{512\times512}$ is provided to the network as a single input channel. For model configurations requiring multiple input channels, the same normalized spectrogram is replicated across the corresponding channels.

\subsection{Network Architectures}
\label{label_method_architectures}

We evaluate two network architectures that share the same set-based output representation and task-specific prediction interface. In both cases, the normalized spectrogram is processed by a shared encoder, followed by parallel characterization and segmentation branches. The characterization branch predicts objectness, interference type, center frequency, signal strength, bandwidth and its uncertainty, and the BOC property for each of the $S=6$ source slots. In parallel, the segmentation branch estimates a source-specific time--frequency mask for every slot. Keeping the output representation and training formulation fixed enables a controlled comparison of the underlying feature-extraction and refinement mechanisms.

\paragraph{MobileNetV2--U-Net}
The first architecture combines a MobileNetV2 encoder~\cite{sandler_mobilenetv2}, initialized using weights pretrained on ImageNet~\cite{deng_imagenet}, with a lightweight U-Net-style decoder~\cite{ronneberger_unet}. The input convolution is adapted to the number of spectrogram channels. Five encoder stages produce feature maps with $16$, $24$, $32$, $96$, and $1{,}280$ channels at downsampling factors of $2$, $4$, $8$, $16$, and $32$, respectively. The bottleneck is globally averaged and processed using dropout with $p=0.2$, followed by separate linear heads for objectness, interference type, center frequency, signal strength, bandwidth, and BOC detection. The bandwidth mean is constrained using a softplus activation, while its log-variance is restricted to $[-6,6]$. The segmentation decoder uses four encoder skip connections and channel widths of $128$, $64$, $32$, and $16$. Each stage bilinearly upsamples the decoder features, concatenates them with the corresponding encoder features, and applies pointwise and depthwise convolutions with batch normalization and ReLU6 activation. A final upsampling and projection produce one full-resolution mask-logit map per source slot.

\paragraph{Attention-Based Variant}
The second architecture retains the same prediction heads, slot representation, and mask resolution but introduces attention-based feature refinement~\cite{woo_cbam,oktay_attention_unet,vaswani_transformer}. This mechanism emphasizes interference-relevant features and suppresses less informative background regions. Because the prediction interface and training procedure remain unchanged, the influence of attention can be evaluated independently of the remaining methodological components.

\subsection{Joint Set-Based Segmentation and Characterization}
\label{label_method_model_tasks}

Let
\begin{equation}
    \mathcal{A}
    =
    \left\{
        (s,k)
    \right\}
\end{equation}
denote the one-to-one assignment between prediction slot $s$ and ground-truth source $k$. The computation of this assignment is described in Section~\ref{label_method_permutation}. All characterization and segmentation losses are evaluated only for matched pairs $(s,k)\in\mathcal{A}$, while the objectness loss is evaluated for all prediction slots. Table~\ref{tab:method_loss_search_space} summarizes the evaluated loss functions. The binary cross-entropy between a probability $p$ and a target $y\in\{0,1\}$ is defined as
\begin{equation}
\label{eq:bce_definition}
    \ell_{\mathrm{BCE}}(p,y)
    =
    -y\log(p)
    -(1-y)\log(1-p).
\end{equation}

\paragraph{Objectness}
The objectness task determines whether a prediction slot represents an active interference source. The binary target for slot $s$ is defined as
\begin{equation}
    o_s =
    \begin{cases}
        1, & \text{if } (s,k)\in\mathcal{A}
        \text{ for some } k,\\
        0, & \text{otherwise}.
    \end{cases}
\end{equation}
Let $p_s^{\mathrm{O}}=\sigma(\widehat{o}_s)$ denote the predicted objectness probability and
\begin{equation}
    p_{t,s}
    =
    o_s p_s^{\mathrm{O}}
    +(1-o_s)(1-p_s^{\mathrm{O}}).
\end{equation}
A focal-loss variant based on~\cite{lin_focal} is employed to reduce the contribution of easily classified slots:
\begin{equation}
\label{eq:objectness_focal}
    \ell_{\mathrm{Focal}}(p_s^{\mathrm{O}},o_s)
    =
    -\alpha
    \left(1-p_{t,s}\right)^{\gamma}
    \log(p_{t,s}),
\end{equation}
with $\alpha=0.25$ and $\gamma=2$. To account for the different roles of active and inactive slots, their losses are averaged separately:
\begin{equation}
\label{eq:objectness_loss}
    \mathcal{L}_{\mathrm{O}}
    =
    w_{+}
    \underset{s:o_s=1}{\operatorname{mean}}
    \ell_{\mathrm{Focal}}
    +
    w_{-}
    \underset{s:o_s=0}{\operatorname{mean}}
    \ell_{\mathrm{Focal}},
\end{equation}
where $w_{+}=1$ and $w_{-}=1.5$; an empty term is omitted.

\paragraph{Type Classification}
For each matched source, the model predicts a probability distribution
$\widehat{\mathbf{p}}_s$ over the $C=6$ interference types. The corresponding multiclass cross-entropy loss is
\begin{equation}
\label{eq:type_loss}
    \mathcal{L}_{\mathrm{T}}
    =
    -\frac{1}{K}
    \sum_{(s,k)\in\mathcal{A}}
    \log\left(
        \widehat{p}_{s,c_k}
    \right),
\end{equation}
where $c_k$ denotes the ground-truth type of source $k$.

\paragraph{Center Frequency}
The center-frequency offset is estimated as a continuous value in MHz. Its prediction error is
\begin{equation}
    e^{\mathrm{CF}}_{s,k}
    =
    \widehat{f}_s-f_k.
\end{equation}
To reduce the influence of large regression errors, the Huber
function~\cite{huber_robust}
\begin{equation}
    h_{\delta}(e)
    =
    \begin{cases}
        \frac{1}{2}e^2,
        & |e|\leq\delta,\\[1mm]
        \delta\left(|e|-\frac{1}{2}\delta\right),
        & |e|>\delta
    \end{cases}
\end{equation}
is used with $\delta=5$. The center-frequency loss is therefore
\begin{equation}
\label{eq:center_frequency_loss}
    \mathcal{L}_{\mathrm{CF}}
    =
    \frac{1}{K}
    \sum_{(s,k)\in\mathcal{A}}
    h_{\delta}\!\left(
        e^{\mathrm{CF}}_{s,k}
    \right).
\end{equation}

\paragraph{Signal Strength}
The received signal strength is likewise formulated as a regression task. With
\begin{equation}
    e^{\mathrm{SS}}_{s,k}
    =
    \widehat{r}_s-r_k,
\end{equation}
the signal-strength loss is
\begin{equation}
\label{eq:signal_strength_loss}
    \mathcal{L}_{\mathrm{SS}}
    =
    \frac{1}{K}
    \sum_{(s,k)\in\mathcal{A}}
    h_{\delta}\!\left(
        e^{\mathrm{SS}}_{s,k}
    \right),
    \qquad \delta=5.
\end{equation}

\paragraph{Bandwidth}
For the bandwidth task, each slot predicts a positive mean
$\widehat{b}_s$ and a corresponding log-variance
$\widehat{v}_s\in[-6,6]$. The latter permits the model to represent
input-dependent uncertainty. The bandwidth loss is given by the
heteroscedastic Gaussian negative log-likelihood~\cite{kendall_gal_uncertainty}
\begin{equation}
\label{eq:bandwidth_loss}
    \mathcal{L}_{\mathrm{BW}}
    =
    \frac{1}{K}
    \sum_{(s,k)\in\mathcal{A}}
    \left[
        \frac{1}{2}
        \exp\!\left(-\widehat{v}_s\right)
        \left(\widehat{b}_s-b_k\right)^2
        +
        \frac{1}{2}\widehat{v}_s
    \right].
\end{equation}
The bandwidth mean is constrained to be positive using a softplus activation.

\paragraph{BOC}
The presence of binary offset carrier modulation is treated as a binary classification task. With
$p_s^{\mathrm{BOC}}=\sigma(\widehat{q}_s)$, the BOC loss is
\begin{equation}
\label{eq:boc_loss}
    \mathcal{L}_{\mathrm{BOC}}
    =
    \frac{1}{K}
    \sum_{(s,k)\in\mathcal{A}}
    \ell_{\mathrm{BCE}}
    \left(
        p_s^{\mathrm{BOC}},q_k
    \right).
\end{equation}

\paragraph{Mask}
For each matched source, the segmentation branch produces a mask probability
$\widehat{M}_{s,u,v}$ for every time--frequency bin $(u,v)$. The mask loss is the pixel-wise binary cross-entropy
\begin{equation}
\label{eq:mask_loss}
    \mathcal{L}_{\mathrm{M}}
    =
    \frac{1}{K N_f N_t}
    \sum_{(s,k)\in\mathcal{A}}
    \sum_{u=1}^{N_f}
    \sum_{v=1}^{N_t}
    \ell_{\mathrm{BCE}}
    \left(
        \widehat{M}_{s,u,v},
        M_{k,u,v}
    \right).
\end{equation}
Because each source is assigned an independent mask, multiple masks may be active at the same time--frequency position, allowing the model to preserve overlapping interference components.

\subsection{Permutation-Invariant Source Assignment}
\label{label_method_permutation}

\begin{table*}[!t]
\setlength{\tabcolsep}{2.5pt}
\centering
\footnotesize
\caption{Loss-function search space for the individual prediction
components. The reference losses define the baseline configuration in
Table~\ref{tab:loss_functions}. Alternative losses are first evaluated
individually, while selected cross-component combinations are reported in
Table~\ref{tab:objective_loss_combinations}.}
\label{tab:method_loss_search_space}
\begin{tabularx}{\textwidth}{
    P{0.12\textwidth}
    P{0.20\textwidth}
    P{0.26\textwidth}
    P{0.39\textwidth}
}
\toprule
\textbf{Component} &
\textbf{Reference Loss} &
\textbf{Evaluated Alternatives} &
\textbf{Loss Parameters} \\
\midrule
Objectness &
Focal~\cite{lin_focal} &
Varifocal~\cite{zhang_varifocalnet}; quality focal loss (QFL)~\cite{li_generalized_focal} &
Focal exponent $\gamma=2$; scaling factor $\alpha=0.25$;
positive/negative weights $w_{+}=1$ and $w_{-}=1.5$ \\
Type classification &
Cross-entropy (CE) &
Focal cross-entropy~\cite{lin_focal}; label-smoothed cross-entropy~\cite{szegedy_label_smoothing} &
Focal exponent $\gamma=2$; label-smoothing parameter
$\varepsilon_{\mathrm{LS}}$ \\
Center frequency &
Huber~\cite{huber_robust} &
Smooth L1~\cite{girshick_fast_rcnn}; Log-Cosh; Gaussian NLL; quantile~\cite{koenker_quantile} &
Huber threshold $\delta=5$; Smooth-L1 parameter $\beta=0.1$;
unit variance for Gaussian NLL; quantile $\tau=0.5$ \\
Signal strength &
Huber~\cite{huber_robust} &
Smooth L1; Log-Cosh; Gaussian NLL; quantile &
Huber threshold $\delta=5$; Smooth-L1 parameter $\beta=0.1$;
unit variance for Gaussian NLL; quantile $\tau=0.5$ \\
Bandwidth &
Heteroscedastic Gaussian NLL~\cite{kendall_gal_uncertainty} &
Gaussian NLL; Laplace NLL; Cauchy NLL &
Predicted log-variance or log-scale constrained to $[-6,6]$;
bandwidth mean constrained by a softplus activation \\
BOC &
Binary cross-entropy (BCE) &
Weighted BCE; focal BCE~\cite{lin_focal} &
Positive weight $w_{\mathrm{BOC}}=5$ for weighted BCE;
focal exponent $\gamma=2$ \\
Mask &
Binary cross-entropy (BCE) &
Dice~\cite{milletari_vnet}; intersection over union (IoU)~\cite{rahman_iou}; focal mask~\cite{lin_focal};
compound BCE--Dice; Tversky~\cite{salehi_tversky} &
Focal exponent $\gamma=2$; BCE--Dice weights $0.5/0.5$;
Tversky parameters $\alpha_{\mathrm{Tv}}=\beta_{\mathrm{Tv}}=0.5$ \\
\bottomrule
\end{tabularx}
\end{table*}

The interference sources in a snapshot form an unordered set; therefore, the index of a ground-truth source does not correspond to a predefined prediction slot. Following direct set-prediction approaches based on bipartite
matching~\cite{carion_detr}, a pairwise cost matrix
$\mathbf{C}\in\mathbb{R}^{S\times K}$ is constructed between the
$S=6$ prediction slots and the $K$ ground-truth sources. For prediction slot $s$ and source $k$, the assignment cost is defined as
\begin{align}
\label{eq:hungarian_cost}
    C_{s,k}
    ={}&
    c_{\mathrm{O}}\,
    \ell_{\mathrm{BCE}}
    \left(p_s^{\mathrm{O}},1\right)
    -
    c_{\mathrm{T}}\log
    \left(\widehat{p}_{s,c_k}\right)
    \nonumber\\
    &+
    \frac{c_{\mathrm{CF}}}{10}
    \left|\widehat{f}_s-f_k\right|
    +
    \frac{c_{\mathrm{BW}}}{10}
    \left|\widehat{b}_s-b_k\right|
    +
    \frac{c_{\mathrm{SS}}}{10}
    \left|\widehat{r}_s-r_k\right|
    \nonumber\\
    &+
    c_{\mathrm{MB}}\,
    C_{\mathrm{BCE}}^{\mathrm{M}}(s,k)
    +
    c_{\mathrm{MD}}\,
    C_{\mathrm{Dice}}^{\mathrm{M}}(s,k),
\end{align}
where the applied coefficients are $c_{\mathrm{O}}=0.25$, $c_{\mathrm{T}}=0.5$, $c_{\mathrm{CF}}=1$, $c_{\mathrm{BW}}=1$, $c_{\mathrm{SS}}=0.5$, $c_{\mathrm{MB}}=0.5$, and $c_{\mathrm{MD}}=1$.
The mask terms measure the pixel-wise binary cross-entropy and Dice dissimilarity~\cite{milletari_vnet} between the predicted and ground-truth masks. For computational efficiency, the masks are resized to at most $64\times64$ bins when computing the assignment cost. The BOC prediction is optimized after assignment but does not contribute to the matching cost.

The optimal assignment is obtained using the Hungarian algorithm~\cite{kuhn_hungarian}:
\begin{equation}
\label{eq:hungarian_assignment}
    \pi^{\star}
    =
    \arg\min_{\pi\in\Pi_{K,S}}
    \sum_{k=1}^{K}
    C_{\pi(k),k},
\end{equation}
where $\Pi_{K,S}$ denotes the set of injective mappings from the $K$ ground-truth sources to the $S$ prediction slots. The assigned slots receive positive objectness targets and are used for the characterization and segmentation losses, whereas all remaining slots are treated as inactive.

During the first $10\%$ of the training epochs, the mask costs are excluded by setting $c_{\mathrm{MB}}=c_{\mathrm{MD}}=0$. This warm-up prevents initially uninformative mask predictions from dominating the assignment. The mask terms are subsequently activated so that spatial similarity contributes directly to the source association. Likewise, the matching coefficients of disabled regression or segmentation tasks are set to zero. The discrete assignment is treated as fixed during backpropagation, and gradients are computed only through the losses of the resulting matched prediction--target pairs.

\subsection{Multi-Task Training Objective}
\label{label_method_training_objective}

Following the permutation-invariant assignment, the model jointly
optimizes the seven task-specific objectives in a multi-task learning
formulation~\cite{caruana_multitask} (refer to Section~\ref{label_method_model_tasks}).
The seven task losses are collected in
\begin{equation}
    \boldsymbol{\mathcal{L}}
    =
    \left[
        \mathcal{L}_{\mathrm{O}},
        \mathcal{L}_{\mathrm{T}},
        \mathcal{L}_{\mathrm{CF}},
        \mathcal{L}_{\mathrm{SS}},
        \mathcal{L}_{\mathrm{BW}},
        \mathcal{L}_{\mathrm{BOC}},
        \mathcal{L}_{\mathrm{M}}
    \right].
\end{equation}
The total loss is given by
\begin{align}
\label{eq:multitask_objective}
    \mathcal{L}_{\mathrm{total}}
    ={}&
    \lambda_{\mathrm{O}}\mathcal{L}_{\mathrm{O}}
    +
    \lambda_{\mathrm{T}}\mathcal{L}_{\mathrm{T}}
    +
    \lambda_{\mathrm{CF}}\mathcal{L}_{\mathrm{CF}}
    +
    \lambda_{\mathrm{SS}}\mathcal{L}_{\mathrm{SS}}
    \nonumber\\
    &+
    \lambda_{\mathrm{BW}}\mathcal{L}_{\mathrm{BW}}
    +
    \lambda_{\mathrm{BOC}}\mathcal{L}_{\mathrm{BOC}}
    +
    \lambda_{\mathrm{M}}\mathcal{L}_{\mathrm{M}},
\end{align}
where the coefficients $\lambda_{\mathrm{O}}$,
$\lambda_{\mathrm{T}}$, $\lambda_{\mathrm{CF}}$,
$\lambda_{\mathrm{SS}}$, $\lambda_{\mathrm{BW}}$,
$\lambda_{\mathrm{BOC}}$, and $\lambda_{\mathrm{M}}$ control the
contribution of the respective tasks. A task is included by assigning it a
positive weight and excluded by setting its weight to zero. This formulation
allows the same model architecture to be trained using different combinations
of detection, classification, characterization, and segmentation objectives. For each enabled prediction component, we evaluate the task-specific
loss functions summarized in Table~\ref{tab:method_loss_search_space}.
Starting from the reference configuration, we first replace one loss
function at a time and subsequently evaluate selected cross-component
combinations. A loss-function alternative changes only the objective associated with the
respective component; the network architecture, data split, training budget,
and remaining loss functions are kept fixed. In addition to this
one-factor-at-a-time analysis, selected loss functions are combined with
different subsets of the seven prediction components. An omitted component is
implemented by setting its task weight to zero. For the complete joint
configuration, the default loss-weight vector is
\begin{equation}
    \boldsymbol{\lambda}
    =
    \left(
        0.5,\,
        1.0,\,
        0.1,\,
        0.1,\,
        0.1,\,
        1.0,\,
        1.0
    \right),
    \label{eq:default_loss_weights}
\end{equation}
following the task order
$\left(
\mathrm{O},
\mathrm{T},
\mathrm{CF},
\mathrm{SS},
\mathrm{BW},
\mathrm{BOC},
\mathrm{M}
\right)$.
Further details on spectral separability are provided in App.~\ref{label_appendix_separability}.
\section{Experimental Setup}
\label{label_experiments}

\paragraph{Simulation Environment} The dataset is generated using NVIDIA Sionna~\cite{hoydis_cammerer}, an open-source, GPU-accelerated framework for wireless physical-layer and link-level simulations, together with its geometry- and material-aware ray-tracing engine, Sionna RT. Our ablation study employs a 3D environment characterized by dense multipath propagation. The scene consists of a large rectangular enclosure with predominantly metallic walls, ceiling, and floor, as well as interior shelving structures that act as additional reflectors and scatterers. Sionna RT computes the physically grounded propagation paths between the interference sources and the receiver, including the resulting channel and multipath effects. The maximum reflection depth is configurable, allowing the multipath richness to be systematically controlled. The resulting complex baseband signals are exported as raw IQ samples, thereby preserving the amplitude and phase information required for interference segmentation and characterization.

\paragraph{Transmitter \& Receiver Setup} Signal sources are placed at a fixed location within the scene and transmit at a center frequency of $1.57542\,\text{GHz}$, corresponding to the GPS-L1 band center frequency. The transmitted waveform is determined by a configurable source-signal parameter and can be selected from multiple signal types. The receiver is modeled as a single antenna. The interference waveforms are originally digitized at a sampling rate of $125\,\text{MHz}$ and resampled to the simulation sampling rate of $f_{\mathrm{s}}=40.5\,\text{MHz}$. Each simulated snapshot comprises $L=131{,}328$ complex IQ samples, corresponding to an observation duration of approximately $3.24\,\text{ms}$. The spectrogram is computed using segments of $512$ samples with an overlap of $256$ samples, resulting in a hop size of $256$ samples and an overlap ratio of $50\%$. For the complex IQ signals, this configuration yields a time--frequency representation with $512$ frequency bins and $512$ time frames, a frequency-bin spacing of approximately $79.1\,\text{kHz}$, and a temporal hop of approximately $6.32\,\mu\text{s}$.

\begin{figure}[!t]
    \centering
	\begin{minipage}[t]{0.325\linewidth}
        \centering
    	\includegraphics[trim=10 10 10 10, clip, width=1.0\linewidth]{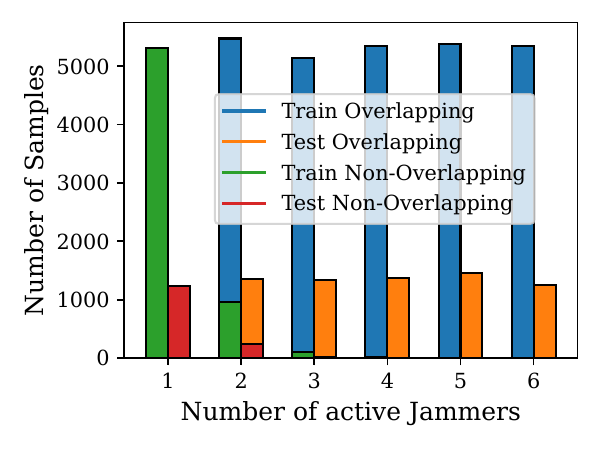}
        \subcaption{Bandwidth samples by number of active jammers.}
        \label{label_figure_statistics1_1}
    \end{minipage}
    \hfill
	\begin{minipage}[t]{0.325\linewidth}
        \centering
    	\includegraphics[trim=10 10 10 10, clip, width=1.0\linewidth]{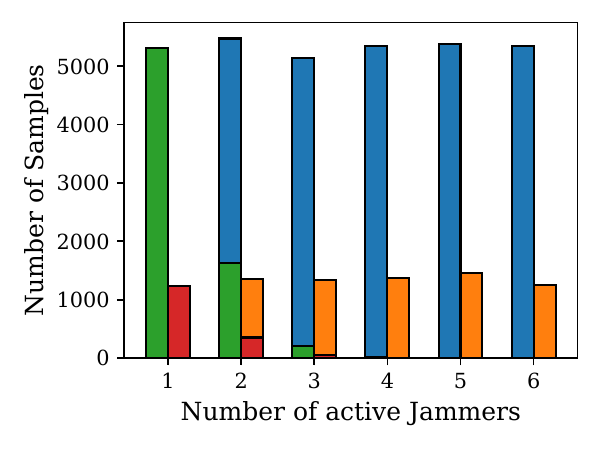}
        \subcaption{Segmentation samples by number of active jammers.}
        \label{label_figure_statistics1_2}
    \end{minipage}
    \hfill
	\begin{minipage}[t]{0.325\linewidth}
        \centering
    	\includegraphics[trim=10 10 10 10, clip, width=1.0\linewidth]{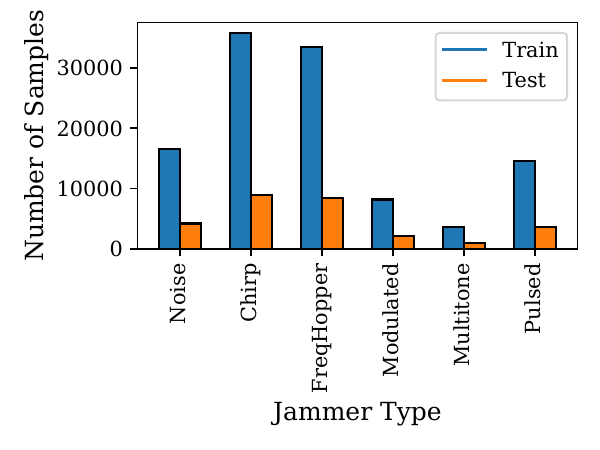}
        \subcaption{Distribution across jammer types.}
        \label{label_figure_statistics1_3}
    \end{minipage}
    \caption{Composition of the training and test sets for overlapping multi-interference characterization. Panels (a) and (b) show the number of overlapping and non-overlapping samples for one to six active jammers in the bandwidth-characterization and segmentation tasks, respectively, while panel (c) shows the distribution across jammer types.}
    \label{label_figure_statistics1}
\end{figure}

\begin{figure}[!t]
    \centering
	\begin{minipage}[t]{0.49\linewidth}
        \centering
    	\includegraphics[trim=10 10 10 10, clip, width=1.0\linewidth]{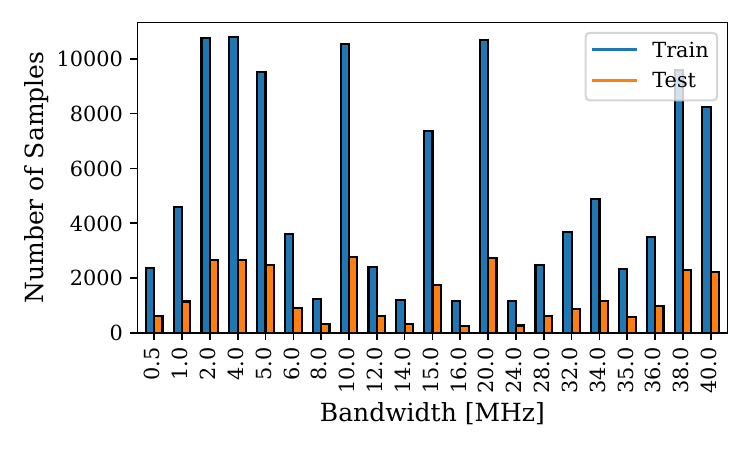}
        \subcaption{Distribution across jammer bandwidths.}
        \label{label_figure_statistics2_1}
    \end{minipage}
    \hfill
	\begin{minipage}[t]{0.49\linewidth}
        \centering
    	\includegraphics[trim=10 10 10 10, clip, width=1.0\linewidth]{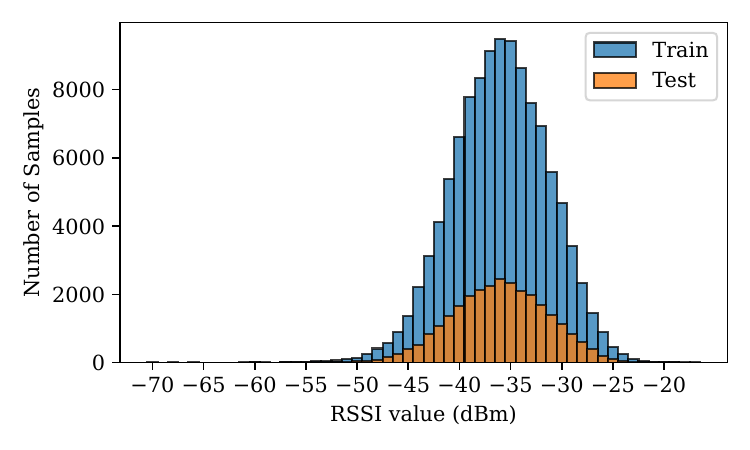}
        \subcaption{Distribution across RSSI levels.}
        \label{label_figure_statistics2_2}
    \end{minipage}
    \caption{Distribution of the jammer parameters in the training and test sets: (a) bandwidth and (b) received signal strength indicator (RSSI).}
    \label{label_figure_statistics2}
\end{figure}

\paragraph{Interference Parameters} Each multi-interference snapshot combines between one and six simultaneously active jammers selected from a catalogue of $94$ waveform configurations. The catalogue comprises six interference classes: \textit{chirp}, \textit{frequency hopping}, \textit{modulated}, \textit{multitone}, \textit{noise}, and \textit{pulsed}. It includes linear and parabolic chirps with different sweep rates, frequency-hopping and BOC-based variants, BPSK- and BOC-modulated signals, multitone emissions, band-limited noise, and pulsed waveforms. Their occupied bandwidths range from $0.5$ to $40\,\text{MHz}$. For every jammer $j$, the transmit power is independently sampled as $P_{\mathrm{TX},j}\sim\mathcal{U}(10,20)\,\text{dBm}$. Furthermore, each waveform is assigned a frequency shift
\begin{equation}
\begin{split}
    \Delta f_j\sim\mathcal{U} \big(-&\min\big(10, (40.5-\text{BW}\_\text{signal}) \cdot 0.9\big)\,\text{MHz}, \\
    &\min\big(10, (40.5-\text{BW}\_\text{signal}) \cdot 0.9\big)\,\text{MHz}\big).
\end{split}
\end{equation}
The shift is accepted only if at least $90\%$ of the jammer's occupied bandwidth remains within the simulated observation band, thereby limiting spectral truncation to at most $10\%$. The shifted jammer signals are subsequently propagated through the Sionna RT channel and superimposed at the receiver to generate the overlapping interference mixture.

\paragraph{Dataset Statistics} Figures~\ref{label_figure_statistics1} and~\ref{label_figure_statistics2} summarize the composition of the generated dataset. The training set contains approximately four times as many samples as the test set, while both splits exhibit similar distributions. Overlapping samples are approximately balanced across scenarios containing one to six simultaneously active jammers, whereas non-overlapping samples are concentrated primarily in scenarios with one or two active jammers. The jammer-type distribution is imbalanced, with \textit{chirp} and \textit{frequency-hopping} signals occurring most frequently, followed by \textit{noise} and \textit{pulsed} interference, while \textit{modulated} and \textit{multitone} signals are less common. The dataset covers 21 discrete bandwidth configurations ranging from $0.5$ to $40$\,MHz, with varying numbers of samples per bandwidth. RSSI values extend approximately from $-70$ to $-20$\,dBm and are concentrated between $-45$ and $-25$\,dBm, with the highest density around $-36$\,dBm. Overall, the similar training and test distributions provide consistent coverage of mixture complexity, jammer types, bandwidths, and received signal strengths across both splits.

\begin{figure*}[!t]
    \centering
	\begin{minipage}[t]{0.138\linewidth}
        \centering
    	\includegraphics[trim=10 22 10 22, clip, width=1.0\linewidth]{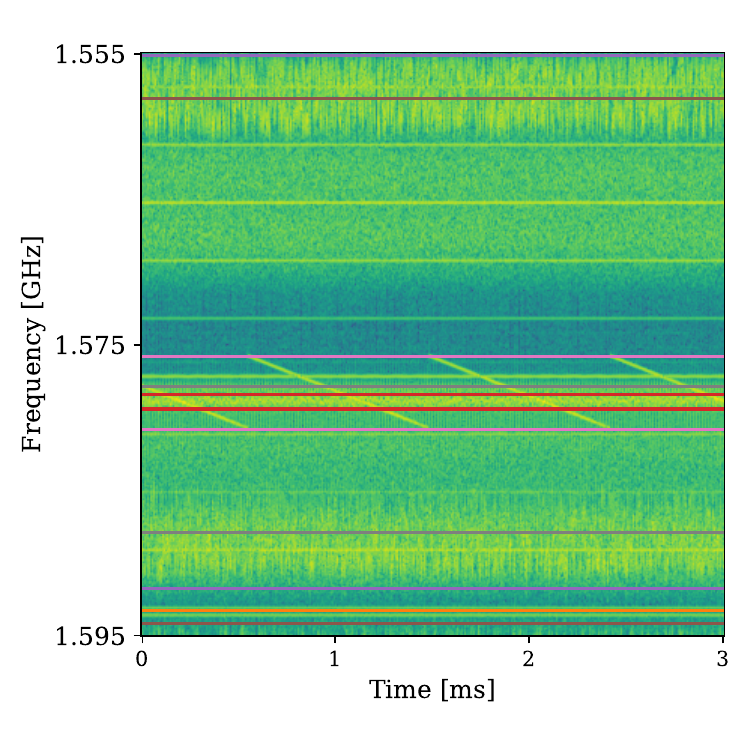}
        \vspace{-0.5cm}
        \subcaption{Spectrogram.}
        \label{label_figure_example_sample1_1}
    \end{minipage}
    \hfill
	\begin{minipage}[t]{0.135\linewidth}
        \centering
    	\includegraphics[trim=10 10 10 10, clip, width=1.0\linewidth]{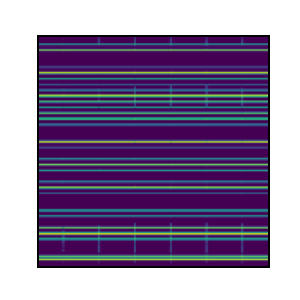}
        \vspace{-0.5cm}
        \subcaption{Source 1.}
        \label{label_figure_example_sample1_2}
    \end{minipage}
    \hfill
	\begin{minipage}[t]{0.135\linewidth}
        \centering
    	\includegraphics[trim=10 10 10 10, clip, width=1.0\linewidth]{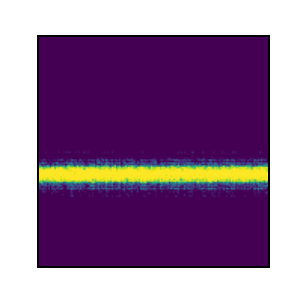}
        \vspace{-0.5cm}
        \subcaption{Source 2.}
        \label{label_figure_example_sample1_3}
    \end{minipage}
    \hfill
	\begin{minipage}[t]{0.135\linewidth}
        \centering
    	\includegraphics[trim=10 10 10 10, clip, width=1.0\linewidth]{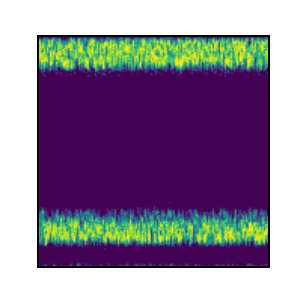}
        \vspace{-0.5cm}
        \subcaption{Source 3.}
        \label{label_figure_example_sample1_4}
    \end{minipage}
    \hfill
	\begin{minipage}[t]{0.135\linewidth}
        \centering
    	\includegraphics[trim=10 10 10 10, clip, width=1.0\linewidth]{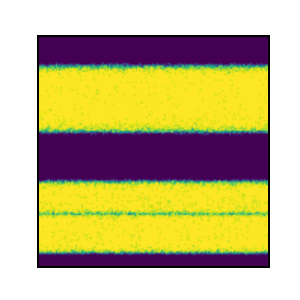}
        \vspace{-0.5cm}
        \subcaption{Source 4.}
        \label{label_figure_example_sample1_5}
    \end{minipage}
    \hfill
	\begin{minipage}[t]{0.135\linewidth}
        \centering
    	\includegraphics[trim=10 10 10 10, clip, width=1.0\linewidth]{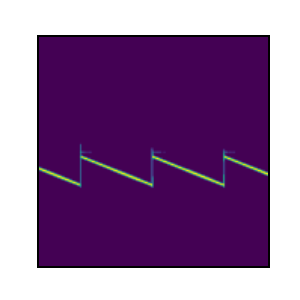}
        \vspace{-0.5cm}
        \subcaption{Source 5.}
        \label{label_figure_example_sample1_6}
    \end{minipage}
    \hfill
	\begin{minipage}[t]{0.135\linewidth}
        \centering
    	\includegraphics[trim=10 10 10 10, clip, width=1.0\linewidth]{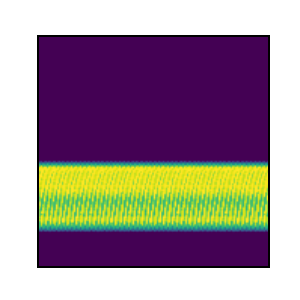}
        \vspace{-0.5cm}
        \subcaption{Source 6.}
        \label{label_figure_example_sample1_7}
    \end{minipage}
    \vspace{-0.15cm}
    \caption{First exemplary snapshot containing six overlapping interference signals: (a) composite spectrogram and (b)--(g) source-specific ground-truth segmentation masks.}
    \label{label_figure_example_sample1}
    \vspace{-0.45cm}
\end{figure*}

\begin{figure*}[!t]
    \centering
	\begin{minipage}[t]{0.138\linewidth}
        \centering
    	\includegraphics[trim=10 22 10 22, clip, width=1.0\linewidth]{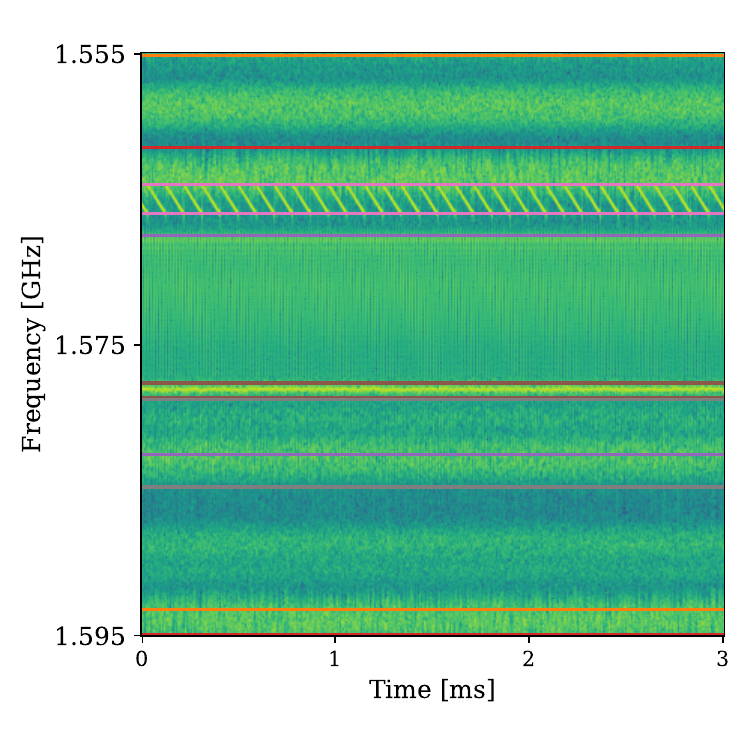}
        \vspace{-0.5cm}
        \subcaption{Spectrogram.}
        \label{label_figure_example_sample2_1}
    \end{minipage}
    \hfill
	\begin{minipage}[t]{0.135\linewidth}
        \centering
    	\includegraphics[trim=10 10 10 10, clip, width=1.0\linewidth]{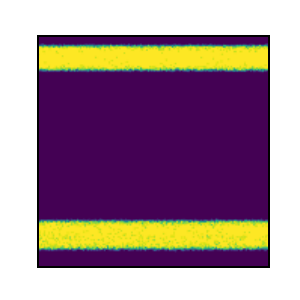}
        \vspace{-0.5cm}
        \subcaption{Source 1.}
        \label{label_figure_example_sample2_2}
    \end{minipage}
    \hfill
	\begin{minipage}[t]{0.135\linewidth}
        \centering
    	\includegraphics[trim=10 10 10 10, clip, width=1.0\linewidth]{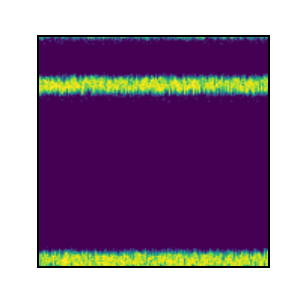}
        \vspace{-0.5cm}
        \subcaption{Source 2.}
        \label{label_figure_example_sample2_3}
    \end{minipage}
    \hfill
	\begin{minipage}[t]{0.135\linewidth}
        \centering
    	\includegraphics[trim=10 10 10 10, clip, width=1.0\linewidth]{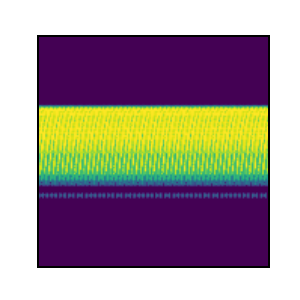}
        \vspace{-0.5cm}
        \subcaption{Source 3.}
        \label{label_figure_example_sample2_4}
    \end{minipage}
    \hfill
	\begin{minipage}[t]{0.135\linewidth}
        \centering
    	\includegraphics[trim=10 10 10 10, clip, width=1.0\linewidth]{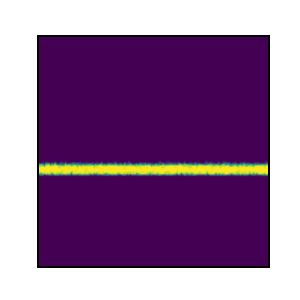}
        \vspace{-0.5cm}
        \subcaption{Source 4.}
        \label{label_figure_example_sample2_5}
    \end{minipage}
    \hfill
	\begin{minipage}[t]{0.135\linewidth}
        \centering
    	\includegraphics[trim=10 10 10 10, clip, width=1.0\linewidth]{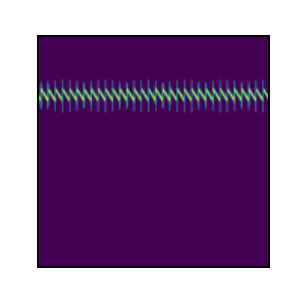}
        \vspace{-0.5cm}
        \subcaption{Source 5.}
        \label{label_figure_example_sample2_6}
    \end{minipage}
    \hfill
	\begin{minipage}[t]{0.135\linewidth}
        \centering
    	\includegraphics[trim=10 10 10 10, clip, width=1.0\linewidth]{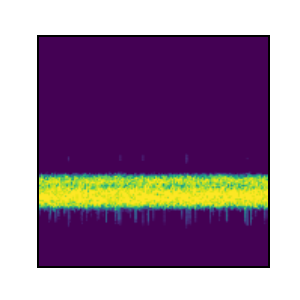}
        \vspace{-0.5cm}
        \subcaption{Source 6.}
        \label{label_figure_example_sample2_7}
    \end{minipage}
    \vspace{-0.15cm}
    \caption{Second exemplary snapshot containing six overlapping interference signals: (a) composite spectrogram and (b)--(g) source-specific ground-truth segmentation masks.}
    \label{label_figure_example_sample2}
    \vspace{-0.45cm}
\end{figure*}

\begin{figure*}[!t]
    \centering
	\begin{minipage}[t]{0.194\linewidth}
        \centering
    	\includegraphics[trim=10 22 10 22, clip, width=1.0\linewidth]{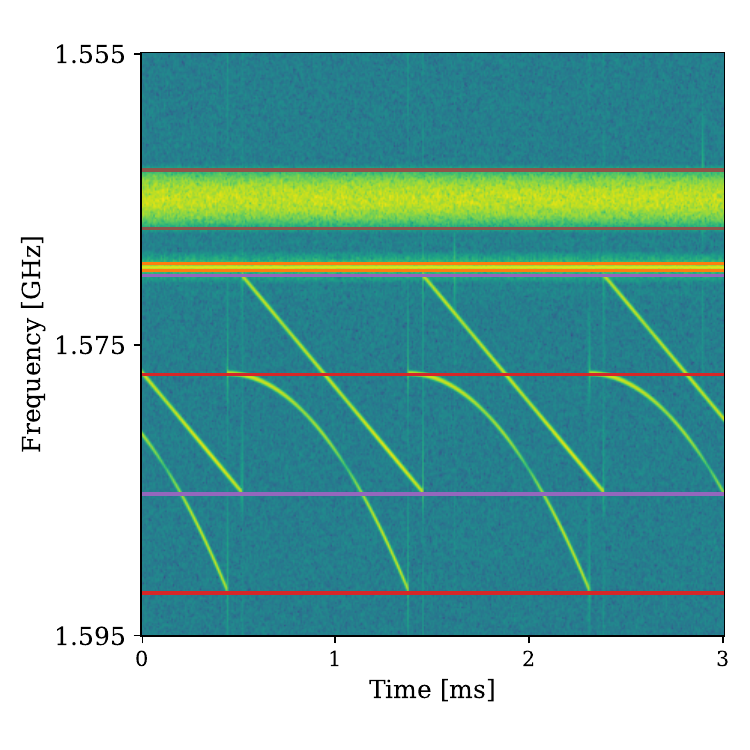}
        \vspace{-0.5cm}
        \subcaption{Composite spectrogram.}
        \label{label_figure_example_sample3_1}
    \end{minipage}
    \hfill
	\begin{minipage}[t]{0.19\linewidth}
        \centering
    	\includegraphics[trim=10 10 10 10, clip, width=1.0\linewidth]{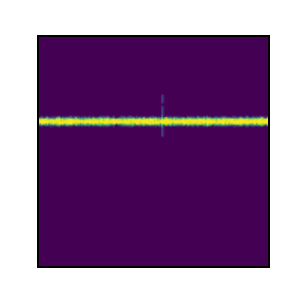}
        \vspace{-0.5cm}
        \subcaption{Source 1.}
        \label{label_figure_example_sample3_2}
    \end{minipage}
    \hfill
	\begin{minipage}[t]{0.19\linewidth}
        \centering
    	\includegraphics[trim=10 10 10 10, clip, width=1.0\linewidth]{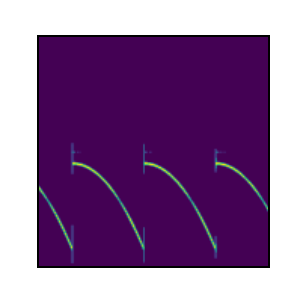}
        \vspace{-0.5cm}
        \subcaption{Source 2.}
        \label{label_figure_example_sample3_3}
    \end{minipage}
    \hfill
	\begin{minipage}[t]{0.19\linewidth}
        \centering
    	\includegraphics[trim=10 10 10 10, clip, width=1.0\linewidth]{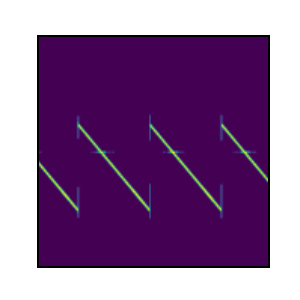}
        \vspace{-0.5cm}
        \subcaption{Source 3.}
        \label{label_figure_example_sample3_4}
    \end{minipage}
    \hfill
	\begin{minipage}[t]{0.19\linewidth}
        \centering
    	\includegraphics[trim=10 10 10 10, clip, width=1.0\linewidth]{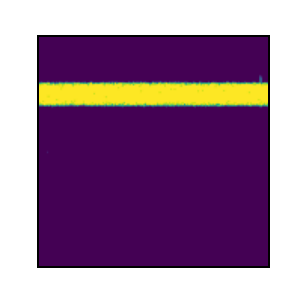}
        \vspace{-0.5cm}
        \subcaption{Source 4.}
        \label{label_figure_example_sample3_5}
    \end{minipage}
    \vspace{-0.15cm}
    \caption{Exemplary snapshot containing four overlapping interference signals: (a) composite spectrogram and (b)--(e) source-specific ground-truth segmentation masks.}
    \label{label_figure_example_sample3}
    \vspace{-0.2cm}
\end{figure*}

\paragraph{Exemplary Samples} Figures~\ref{label_figure_example_sample1}--\ref{label_figure_example_sample3} present representative spectrogram snapshots and their corresponding source-specific segmentation masks. The first two examples each contain six simultaneously active interference signals, whereas the third contains four. The mixtures comprise diverse time--frequency structures, including frequency hoppers, multitone emissions, broadband noise-like signals, and linear or nonlinear frequency sweeps. In the composite spectrograms, these signatures partially obscure one another, and strong emissions can mask weaker components, making direct separation and characterization ambiguous. The corresponding masks isolate the time--frequency support of each interference source while preserving regions shared by multiple signals. These paired mixture--mask samples provide source-specific supervision for the proposed joint segmentation and characterization approach, enabling individual interference signals to be separated before or jointly with estimating their types and signal parameters. Figure~\ref{label_figure_spectrogram_welch_examples} compares four exemplary multi-interference spectrograms with their corresponding Welch power spectral density (PSD) estimates. The examples contain different combinations of stationary narrowband components, broadband emissions, and frequency-varying signals with varying degrees of spectral overlap. While the spectrograms preserve their temporal evolution, Welch's method averages the signal energy over the observation interval and represents occupied frequency ranges as peaks or elevated spectral regions. The agreement between these regions and the colored ground-truth frequency boundaries confirms that the generated signals occupy their intended bandwidths. However, contributions from spectrally overlapping sources merge in the Welch estimates and cannot be assigned unambiguously to individual interferers. This loss of temporal and source-specific information motivates the proposed segmentation-based approach for separating and characterizing overlapping interference signals.

\begin{figure}[!t]
    \centering
	\begin{minipage}[t]{0.25\linewidth}
        \centering
    	\includegraphics[trim=10 22 10 22, clip, width=1.0\linewidth]{images/example_samples/Spec_2.pdf}
    \end{minipage}
    \hfill
	\begin{minipage}[t]{0.231\linewidth}
        \centering
    	\includegraphics[trim=10 10 10 10, clip, width=1.0\linewidth]{images/example_Welch_plots/Welch_2.pdf}
    \end{minipage}
	\begin{minipage}[t]{0.250\linewidth}
        \centering
    	\includegraphics[trim=10 22 10 22, clip, width=1.0\linewidth]{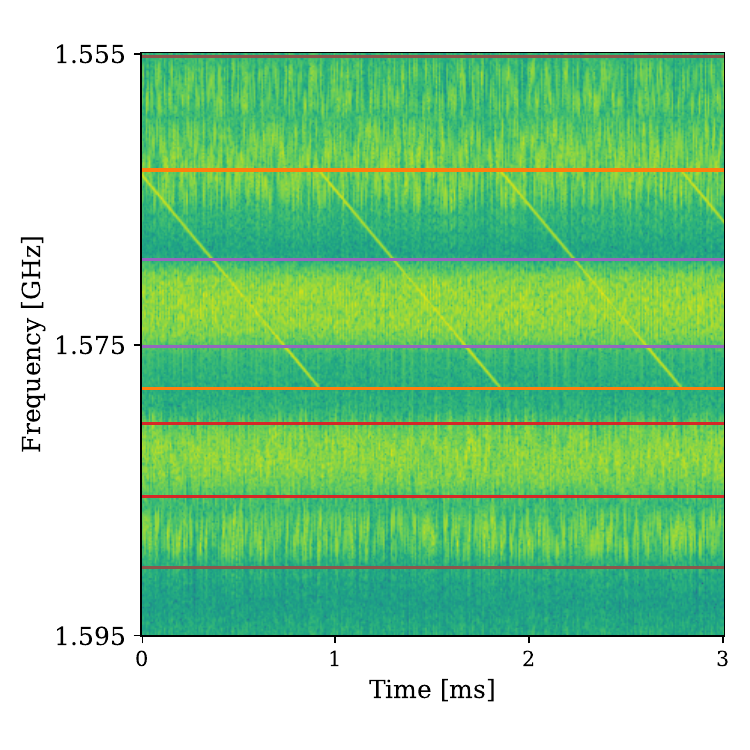}
    \end{minipage}
    \hfill
	\begin{minipage}[t]{0.231\linewidth}
        \centering
    	\includegraphics[trim=10 10 10 10, clip, width=1.0\linewidth]{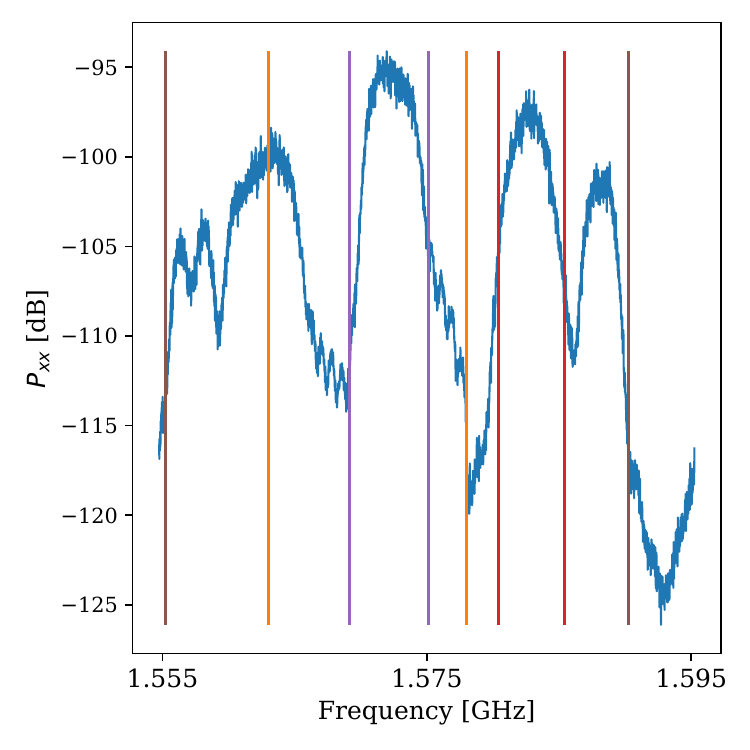}
    \end{minipage}
	\begin{minipage}[t]{0.25\linewidth}
        \centering
    	\includegraphics[trim=10 22 10 22, clip, width=1.0\linewidth]{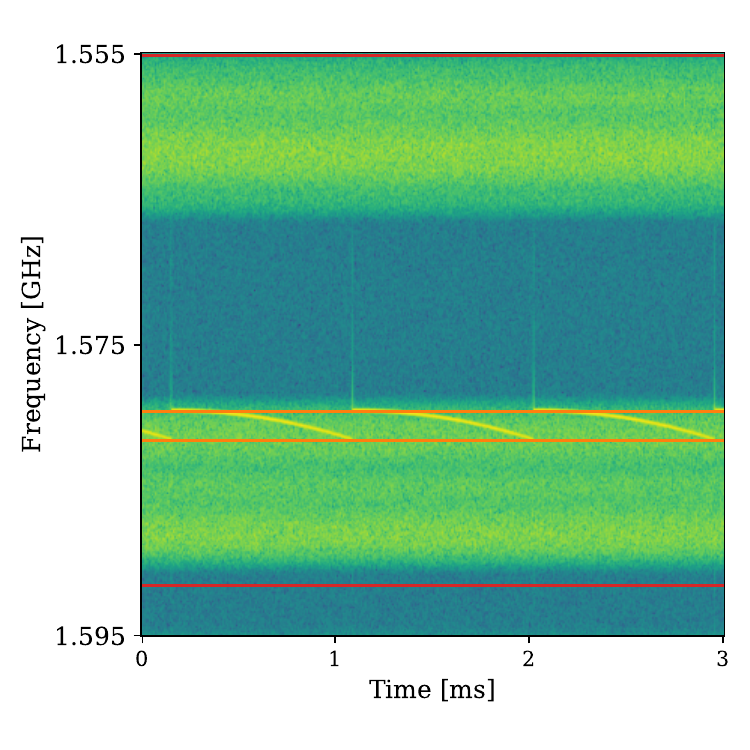}
    \end{minipage}
    \hfill
	\begin{minipage}[t]{0.231\linewidth}
        \centering
    	\includegraphics[trim=10 10 10 10, clip, width=1.0\linewidth]{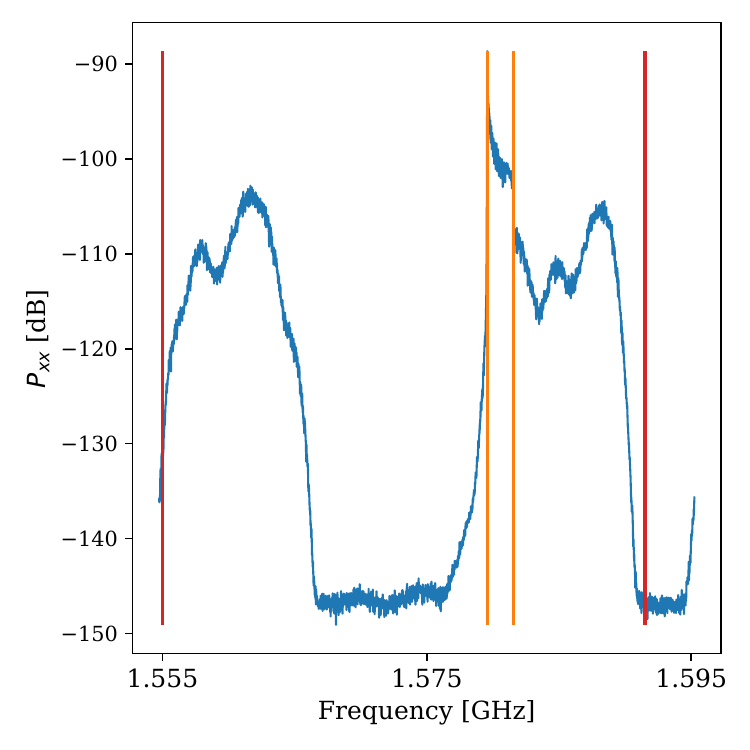}
    \end{minipage}
	\begin{minipage}[t]{0.25\linewidth}
        \centering
    	\includegraphics[trim=10 22 10 22, clip, width=1.0\linewidth]{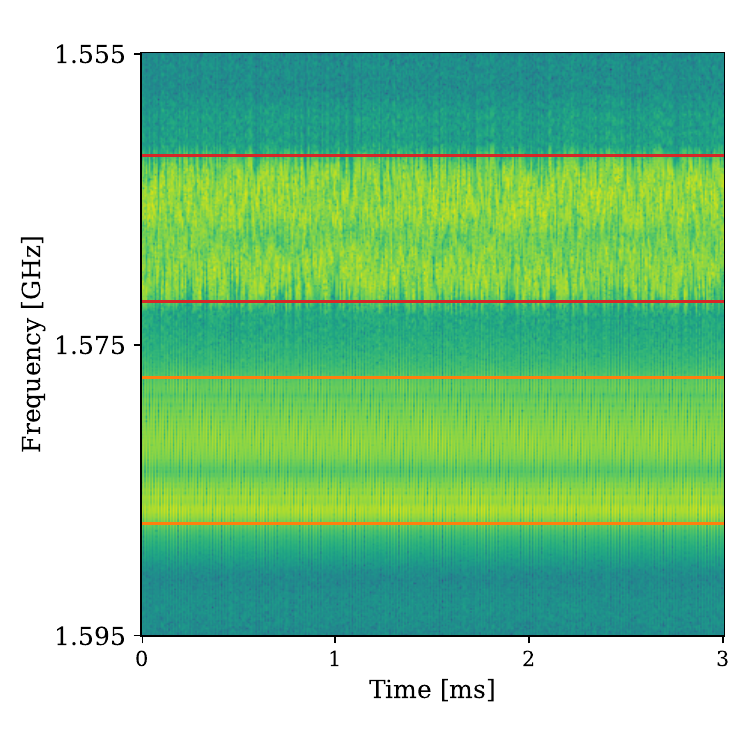}
    \end{minipage}
    \hfill
	\begin{minipage}[t]{0.231\linewidth}
        \centering
    	\includegraphics[trim=10 10 10 10, clip, width=1.0\linewidth]{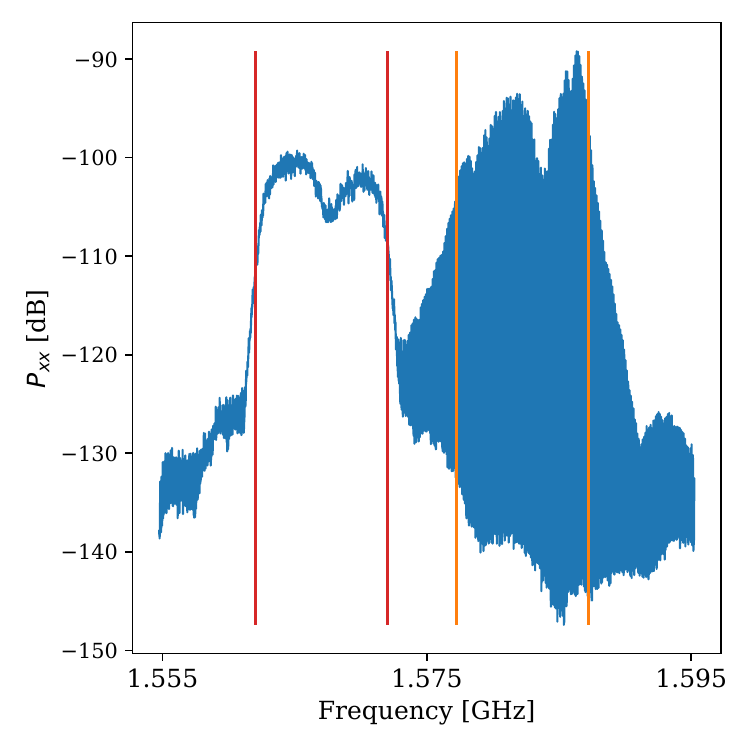}
    \end{minipage}
    \caption{Representative multi-interference snapshots (left) and their corresponding Welch power spectral density estimates (right). Each spectrogram shows a different signal mixture, while the colored lines indicate the ground-truth lower and upper frequency bounds of the individual interference signals.}
    \label{label_figure_spectrogram_welch_examples}
\end{figure}

\section{Evaluation}
\label{label_evaluation}

\subsection{Hardware and Training Setup}
\label{label_evaluation_setup}

All experiments are conducted on compute nodes equipped with NVIDIA Tesla V100-SXM2 GPUs with $32\,\mathrm{GB}$ of VRAM, Intel Xeon CPUs, and $192\,\mathrm{GB}$ of system memory. Each model is trained on a single GPU using a batch size of $32$, with the training samples shuffled at the beginning of each epoch. We train all models for $100$ epochs using the Adam optimizer with an initial learning rate of $10^{-2}$ and its default exponential-decay and moment coefficients. The learning rate is reduced by a factor of $10$ at epochs $60$ and $80$, resulting in learning rates of $10^{-3}$ and $10^{-4}$, respectively. For reproducibility, the Python, NumPy, and PyTorch random-number generators are initialized with a fixed seed for each run, while deterministic cuDNN execution is enabled. Detailed definitions of the evaluation metrics are provided in Appendix~\ref{label_appendix_evaluation_metrics}. Preliminary experiments showed that the MobileNetV2--U-Net variant substantially outperformed the attention-based variant proposed in Section~\ref{label_method_architectures}. Accordingly, all reported results were obtained using the variant without attention.

\subsection{Loss Elements}
\label{label_evaluation_loss_elements}

\begin{table}[t]
\centering
\footnotesize
\setlength{\tabcolsep}{2pt}
\caption{Ablation results for the prediction tasks and mask.
Blank cells indicate omitted components. \textbf{Bold} denotes best results.}
\label{tab:task_ablation}
\begin{tabular}{@{}*{7}{c}@{\hspace{1em}}*{9}{r}@{}}
\toprule
\multicolumn{7}{c}{\textbf{Enabled Components}} &
\multicolumn{9}{c}{\textbf{Evaluation Metrics}} \\
\cmidrule(r){1-7}\cmidrule(l){8-16}
\rotHeader{Objectness} &
\rotHeader{Type Classification} &
\rotHeader{Center Frequency} &
\rotHeader{Signal Strength} &
\rotHeader{Bandwidth} &
\rotHeader{BOC} &
\rotHeader{Mask} &
\rotHeader{Precision} &
\rotHeader{Recall} &
\rotHeader{F1-Score [\%]} &
\rotHeader{Count Accuracy [\%]} &
\rotHeader{Type Accuracy [\%]} &
\rotHeader{BOC Accuracy [\%]} &
\rotHeader{Frequency Error [MHz]} &
\rotHeader{Bandwidth Error [MHz]} &
\rotHeader{RSSI Error [dB]} \\
\midrule
x & x & x & x & x & x & x & 0.70 & 0.71 & 70.6 & 47.2 & 91.6 & 96.5 & 1.33 & 2.79  & 3.34  \\
x & x &   & x & x & x & x & 0.66 & 0.69 & 67.3 & 40.0 & 88.9 & 95.9 & -- & 2.81  & 3.33  \\
x & x & x &   & x & x & x & 0.80 & 0.80 & 79.9 & 59.9 & \textbf{93.7} & \textbf{97.3} & 1.13 & 1.85  & -- \\
x & x & x & x &   & x & x & 0.72 & 0.74 & 73.2 & 27.5 & 88.9 & 89.4 & 1.29 & -- & 3.30  \\
x & x & x & x & x &   & x & 0.71 & 0.72 & 71.3 & 46.6 & 92.0 & -- & 1.33 & 2.68  & 3.32  \\
x & x & x & x & x & x &   & 0.70 & 0.71 & 70.4 & 48.4 & 91.2 & 96.6 & 1.36 & 2.73  & 3.31  \\
x & x &   &   &   &   & x & 0.93 & \textbf{0.90} & \textbf{91.6} & \textbf{64.8} & 97.3 & -- & -- & -- & -- \\
x & x & x &   & x &   &   & 0.80 & 0.80 & 80.1 & 60.1 & 93.5 & -- & \textbf{1.10} & \textbf{1.84}  & -- \\
x & x & x &   & x & x &   & 0.80 & 0.80 & 80.1 & 60.2 & \textbf{93.7} & 97.2 & 1.12 & 1.86  & -- \\
x & x &   & x &   &   & x & 0.74 & 0.77 & 75.3 & 28.8 & 87.9 & -- & -- & -- & \textbf{2.71}  \\
x & x &   &   &   &   &   & \textbf{0.93} & \textbf{0.90} & 91.1 & 63.8 & 97.0 & -- & -- & --  & --  \\
\bottomrule
\end{tabular}
\end{table}

We first analyze the contribution of the seven components of the multi-task
objective in Eq.~\ref{eq:multitask_objective}: objectness
$\mathcal{L}_{\mathrm{O}}$ (Eq.~\ref{eq:objectness_loss}), type classification
$\mathcal{L}_{\mathrm{T}}$ (Eq.~\ref{eq:type_loss}), center frequency
$\mathcal{L}_{\mathrm{CF}}$ (Eq.~\ref{eq:center_frequency_loss}), signal strength
$\mathcal{L}_{\mathrm{SS}}$ (Eq.~\ref{eq:signal_strength_loss}), bandwidth
$\mathcal{L}_{\mathrm{BW}}$ (Eq.~\ref{eq:bandwidth_loss}), BOC detection
$\mathcal{L}_{\mathrm{BOC}}$ (Eq.~\ref{eq:boc_loss}), and mask prediction
$\mathcal{L}_{\mathrm{M}}$ (Eq.~\ref{eq:mask_loss}). Components are omitted by
setting their corresponding loss weights to zero. As shown in Table~\ref{tab:task_ablation}, the complete objective achieves an
F1-score of \(70.6\%\), a count accuracy of \(47.2\%\), and type and BOC
accuracies of \(91.6\%\) and \(96.5\%\). The center-frequency, bandwidth, and
signal-strength errors are \(1.33\,\mathrm{MHz}\), \(2.79\,\mathrm{MHz}\),
and \(3.34\,\mathrm{dB}\), respectively. Removing center-frequency estimation
degrades most metrics, indicating that it provides useful supervision for the
shared representation. Omitting bandwidth estimation increases the F1-score
slightly to \(73.2\%\), but substantially reduces count, type, and BOC
accuracy, showing that bandwidth information supports source separation.
Removing either BOC or mask prediction has only a minor effect on the
remaining metrics. The largest improvement results from omitting signal-strength estimation. In
this case, the F1-score increases to \(79.9\%\), count accuracy to \(59.9\%\),
and the center-frequency and bandwidth errors decrease to
\(1.13\,\mathrm{MHz}\) and \(1.85\,\mathrm{MHz}\). This behavior suggests
negative transfer from signal-strength regression to the other objectives.

\begin{figure}[!t]
    \centering
	\begin{minipage}[t]{0.49\linewidth}
        \centering
    	\includegraphics[trim=10 10 10 10, clip, width=1.0\linewidth]{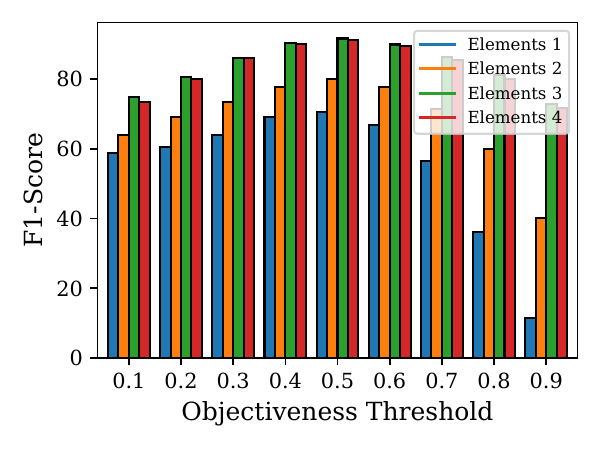}
        \subcaption{F1-Score [\%].}
        \label{label_figure_objectiveness1}
    \end{minipage}
    \hfill
	\begin{minipage}[t]{0.49\linewidth}
        \centering
    	\includegraphics[trim=10 10 10 10, clip, width=1.0\linewidth]{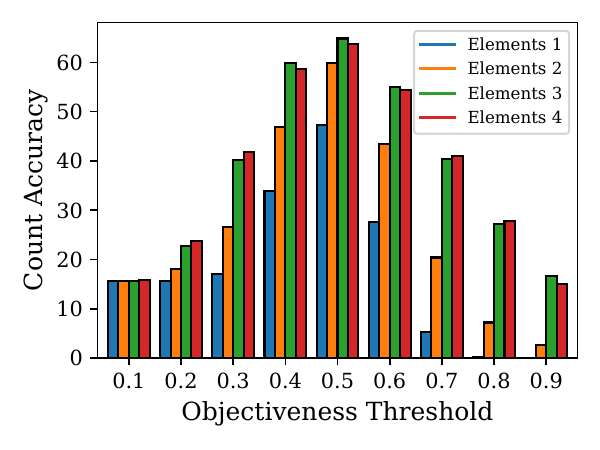}
        \subcaption{Count accuracy [\%].}
        \label{label_figure_objectiveness2}
    \end{minipage}
    \caption{Influence of the objectness threshold on source-detection and count-estimation performance for four component combinations from Table~\ref{tab:task_ablation}: the complete objective (row~1), the objective without signal-strength estimation (row~3), objectness, type classification, and mask prediction (row~7), and objectness with type classification (row~11).}
    \label{label_figure_objectiveness}
\end{figure}

The best two configurations use objectness, type classification, and mask
prediction, or objectness and type classification alone. They achieve
F1-scores of \(91.6\%\) and \(91.1\%\), count accuracies of \(64.8\%\) and
\(63.8\%\), and type accuracies of \(97.3\%\) and \(97.0\%\), respectively.
Mask supervision therefore provides a small additional improvement. Overall, the results reveal a trade-off between comprehensive characterization
and source-detection performance.

Figure~\ref{label_figure_objectiveness} evaluates four representative
configurations: the complete objective (Elements~1), the objective without
signal-strength estimation (Elements~2), objectness, type, and mask prediction
(Elements~3), and objectness with type classification (Elements~4). All
configurations attain their highest F1-score and count accuracy at an
objectness threshold of \(0.5\). Lower thresholds activate too many prediction
slots and therefore reduce count accuracy, whereas higher thresholds reject
true sources and reduce recall. Elements~3 and Elements~4 outperform the characterization-oriented
configurations across almost the entire threshold range and behave similarly
around the optimum, with a small advantage from mask supervision. Elements~2
also consistently outperforms the complete Elements~1 configuration,
confirming that the adverse effect of signal-strength estimation is not
specific to the selected threshold.

\subsection{Loss Functions}
\label{label_evaluation_loss_functions}

\begin{table*}[t]
\centering
\scriptsize
\setlength{\tabcolsep}{3.7pt}
\caption{Evaluation of alternative loss functions. The first row is the
reference configuration; each subsequent row changes one loss function.
Precision/recall are scores, whereas F1/accuracy values are
reported in percent. \textbf{Bold} denotes best results.}
\label{tab:loss_functions}
\vspace{-0.1cm}
\begin{tabular}{@{}*{7}{l}@{\hspace{0.8em}}*{9}{r}@{}}
\toprule
\multicolumn{7}{c}{\textbf{Loss Function by Component}} &
\multicolumn{9}{c}{\textbf{Evaluation Metrics}} \\
\cmidrule(r){1-7}\cmidrule(l){8-16}
\rotHeader{Objectness} &
\rotHeader{Type Classif.} &
\rotHeader{\shortstack[c]{Center\\Frequency}} &
\rotHeader{\shortstack[c]{Signal\\Strength}} &
\rotHeader{Bandwidth} &
\rotHeader{BOC} &
\rotHeader{Mask} &
\rotHeader{Precision} &
\rotHeader{Recall} &
\rotHeader{F1-Score [\%]} &
\rotHeader{\shortstack[c]{Count\\Accuracy [\%]}} &
\rotHeader{\shortstack[c]{Type\\Accuracy [\%]}} &
\rotHeader{\shortstack[c]{BOC\\Accuracy [\%]}} &
\rotHeader{\shortstack[c]{Frequency\\Error [MHz]}} &
\rotHeader{\shortstack[c]{Bandwidth\\Error [MHz]}} &
\rotHeader{\shortstack[c]{RSSI\\Error [dB]}} \\
\midrule
Focal & CE & Huber & Huber & Heteroscedastic & BCE & BCE
& 0.70 & 0.71 & 70.8 & 46.8 & 91.7 & 96.8 & 1.36 & 2.69 & 3.35 \\
\addlinespace[2pt]
Varifocal & CE & Huber & Huber & Heteroscedastic & BCE & BCE
& 0.47 & \textbf{0.79} & 58.8 & 15.7 & 92.5 & \textbf{97.0} & 1.35 & 2.50 & \textbf{3.23} \\
QFL & CE & Huber & Huber & Heteroscedastic & BCE & BCE
& 0.76 & 0.22 & 33.6 & 15.4 & 94.7 & 93.2 & 1.46 & 4.10 & 3.34 \\
\addlinespace[2pt]
Focal & Focal CE & Huber & Huber & Heteroscedastic & BCE & BCE
& 0.70 & 0.70 & 69.9 & 49.4 & 90.3 & 96.5 & 1.34 & 2.75 & 3.30 \\
Focal & LabelSmoothing & Huber & Huber & Heteroscedastic & BCE & BCE
& 0.70 & 0.71 & 70.8 & 47.0 & 91.8 & 96.6 & 1.36 & 2.71 & 3.38 \\
\addlinespace[2pt]
Focal & CE & L1 & Huber & Heteroscedastic & BCE & BCE
& 0.68 & 0.70 & 68.8 & 47.5 & 91.3 & 96.5 & 1.48 & 2.86 & 3.32 \\
Focal & CE & Log-Cosh & Huber & Heteroscedastic & BCE & BCE
& 0.69 & 0.71 & 70.1 & 50.3 & 91.7 & 96.5 & 1.41 & 2.80 & 3.35 \\
Focal & CE & Gaussian-NLL & Huber & Heteroscedastic & BCE & BCE
& 0.68 & 0.70 & 69.0 & 46.4 & 91.2 & 96.3 & 1.43 & 2.90 & 3.37 \\
Focal & CE & Quantile & Huber & Heteroscedastic & BCE & BCE
& 0.67 & 0.69 & 67.8 & 47.1 & 90.8 & 96.0 & 1.61 & 2.91 & 3.34 \\
\addlinespace[2pt]
Focal & CE & Huber & L1 & Heteroscedastic & BCE & BCE
& 0.73 & 0.74 & 73.8 & 52.1 & 92.5 & 96.6 & 1.24 & 2.35 & 3.39 \\
Focal & CE & Huber & Log-Cosh & Heteroscedastic & BCE & BCE
& 0.74 & 0.75 & \textbf{74.5} & \textbf{52.7} & 92.4 & 96.7 & 1.27 & 2.20 & 3.40 \\
Focal & CE & Huber & Gaussian-NLL & Heteroscedastic & BCE & BCE
& 0.69 & 0.71 & 70.1 & 46.3 & 91.5 & 96.5 & 1.38 & 2.76 & 3.36 \\
Focal & CE & Huber & Quantile & Heteroscedastic & BCE & BCE
& \textbf{0.76} & 0.76 & 76.1 & 56.0 & \textbf{93.0} & \textbf{97.0} & \textbf{1.14} & \textbf{1.97} & 3.56 \\
\addlinespace[2pt]
Focal & CE & Huber & Huber & Gaussian-NLL & BCE & BCE
& 0.70 & 0.71 & 70.2 & 45.5 & 91.7 & 96.4 & 1.36 & 2.80 & 3.38 \\
Focal & CE & Huber & Huber & Laplace-NLL & BCE & BCE
& 0.72 & 0.73 & 72.5 & 52.2 & 91.6 & 96.4 & 1.33 & 2.51 & 3.41 \\
Focal & CE & Huber & Huber & Cauchy-NLL & BCE & BCE
& 0.71 & 0.72 & 71.5 & 48.0 & 91.4 & 96.1 & 1.39 & 2.40 & 3.38 \\
\addlinespace[2pt]
Focal & CE & Huber & Huber & Heteroscedastic & Weighted-BCE & BCE
& 0.70 & 0.71 & 70.7 & 45.8 & 91.6 & 96.4 & 1.40 & 2.72 & 3.34 \\
Focal & CE & Huber & Huber & Heteroscedastic & Focal-BCE & BCE
& 0.70 & 0.72 & 71.1 & 48.0 & 91.7 & 96.0 & 1.32 & 2.71 & 3.39 \\
\addlinespace[2pt]
Focal & CE & Huber & Huber & Heteroscedastic & BCE & Dice
& 0.68 & 0.70 & 69.0 & 45.4 & 91.3 & 96.1 & 1.36 & 2.88 & 3.35 \\
Focal & CE & Huber & Huber & Heteroscedastic & BCE & IoU
& 0.71 & 0.72 & 71.3 & 49.2 & 91.9 & 96.7 & 1.34 & 2.71 & 3.35 \\
Focal & CE & Huber & Huber & Heteroscedastic & BCE & Focal-Mask
& 0.70 & 0.71 & 70.3 & 45.3 & 91.3 & 96.8 & 1.39 & 2.74 & 3.32 \\
Focal & CE & Huber & Huber & Heteroscedastic & BCE & BCE-Dice
& 0.70 & 0.71 & 70.6 & 46.9 & 91.8 & 96.5 & 1.32 & 2.78 & 3.33 \\
Focal & CE & Huber & Huber & Heteroscedastic & BCE & Tversky
& 0.72 & 0.73 & 72.6 & 49.7 & 91.9 & 96.8 & 1.31 & 2.57 & 3.36 \\
\bottomrule
\end{tabular}%
\end{table*}

We next evaluate the loss-function alternatives introduced in
Table~\ref{tab:method_loss_search_space}. Starting from the reference
configuration in the first row of Table~\ref{tab:loss_functions}, one loss
function is replaced at a time while the remaining objectives, network
architecture, and training setup are kept unchanged. The reference
configuration obtains an F1-score of \(70.8\%\), a count accuracy of
\(46.8\%\), and type and BOC accuracies of \(91.7\%\) and \(96.8\%\),
respectively. For objectness prediction, the reference focal loss provides the most balanced
precision and recall. Varifocal loss increases recall to \(0.79\), but reduces
precision to \(0.47\), resulting in an F1-score of \(58.8\%\). Conversely,
quality focal loss favors precision (\(0.76\)) at the expense of recall
(\(0.22\)), producing the lowest F1-score of \(33.6\%\). Both alternatives
also reduce count accuracy to approximately \(15\%\). For type
classification, focal cross-entropy slightly decreases F1 and type accuracy,
whereas label smoothing performs almost identically to conventional
cross-entropy. Similarly, none of the alternative center-frequency losses
outperforms Huber loss.

The most pronounced improvements are obtained by changing the
signal-strength loss. L1 and Log-Cosh increase the F1-score to \(73.8\%\) and
\(74.5\%\), respectively. Quantile loss provides the best overall result,
reaching a precision and recall of \(0.76\), an F1-score of \(76.1\%\), and a
count accuracy of \(56.0\%\). It also improves type and BOC accuracy to
\(93.0\%\) and \(97.0\%\) and reduces the frequency and bandwidth errors to
\(1.14\,\mathrm{MHz}\) and \(1.97\,\mathrm{MHz}\). However, its RSSI error
increases from \(3.35\,\mathrm{dB}\) to \(3.56\,\mathrm{dB}\). The improved
characterization of the other properties therefore comes at the
expense of signal-strength accuracy. Together with the component ablation in
Table~\ref{tab:task_ablation}, this result confirms that the signal-strength
objective has the strongest influence on multi-task optimization.

For bandwidth estimation, Laplace NLL offers the best balance, increasing the
F1-score to \(72.5\%\) and count accuracy to \(52.2\%\), while reducing the
bandwidth error to \(2.51\,\mathrm{MHz}\). Cauchy NLL obtains an even lower
bandwidth error of \(2.40\,\mathrm{MHz}\), but smaller improvements in the
detection metrics. Changing the BOC loss has only a minor influence:
focal BCE slightly improves the F1-score to \(71.1\%\), but reduces BOC
accuracy to \(96.0\%\).

Among the mask objectives, Tversky loss performs best, achieving an F1-score
of \(72.6\%\), a count accuracy of \(49.7\%\), and type and BOC accuracies of
\(91.9\%\) and \(96.8\%\). IoU loss also provides a small improvement, whereas
Dice, focal-mask, and combined BCE--Dice losses do not outperform the
reference BCE objective. Overall, the results show that the choice of
signal-strength loss has a substantially greater effect than the alternative
classification, bandwidth, BOC, or mask losses. Quantile loss consequently
provides the strongest detection-oriented configuration, while the reference
Huber loss remains preferable when RSSI estimation accuracy is prioritized.

\subsection{Joint Objective and Source Assignment}
\label{label_evaluation_hungarian_objectiveness}

\begin{table*}[t]
\centering
\scriptsize
\caption{Evaluation of selected objective and loss-function combinations. Objectness is active.
Blank cells indicate omitted objectives. Precision and recall are scores;
F1 and accuracy values are reported in percent. \textbf{Bold} denotes best results.}
\label{tab:objective_loss_combinations}
\vspace{-0.1cm}
\begin{tabular}{@{}*{7}{l}@{\hspace{0.8em}}*{9}{r}@{}}
\toprule
\multicolumn{7}{c}{\textbf{Loss Function by Component}} &
\multicolumn{9}{c}{\textbf{Evaluation Metrics}} \\
\cmidrule(r){1-7}\cmidrule(l){8-16}
\rotHeader{Objectness} &
\rotHeader{Type Classif.} &
\rotHeader{\shortstack[c]{Center\\Frequency}} &
\rotHeader{\shortstack[c]{Signal\\Strength}} &
\rotHeader{Bandwidth} &
\rotHeader{BOC} &
\rotHeader{Mask} &
\rotHeader{Precision} &
\rotHeader{Recall} &
\rotHeader{F1-Score} &
\rotHeader{\shortstack[c]{Count\\Accuracy [\%]}} &
\rotHeader{\shortstack[c]{Type\\Accuracy [\%]}} &
\rotHeader{\shortstack[c]{BOC\\Accuracy [\%]}} &
\rotHeader{\shortstack[c]{Frequency\\Error [MHz]}} &
\rotHeader{\shortstack[c]{Bandwidth\\Error [MHz]}} &
\rotHeader{\shortstack[c]{RSSI\\Error [dB]}} \\
\midrule
Focal & CE & Huber & Huber & Heteroscedastic & BCE & BCE
& 0.81 & 0.69 & 74.8 & 50.6 & 97.3 & \textbf{98.4} & 1.29 & 2.54 & \textbf{3.19} \\
Focal & CE & Huber &       & Heteroscedastic & BCE & BCE
& 0.88 & 0.79 & 83.5 & \textbf{61.7} & 97.2 & 98.1 & \textbf{1.12} & \textbf{1.62} & -- \\
Focal & CE &       &       &                  &     & BCE
& \textbf{0.98} & \textbf{0.85} & \textbf{91.1} & 59.9 & 98.7 & -- & -- & -- & -- \\
Focal & CE & Huber &       & Heteroscedastic & BCE &
& 0.88 & 0.77 & 82.0 & 57.4 & 97.3 & \textbf{98.4} & 1.13 & 1.67 & -- \\
\addlinespace[3pt]
Focal & CE & Huber & Quantile & Laplace-NLL & Focal-BCE & Tversky
& 0.72 & 0.61 & 65.9 & 49.6 & 95.6 & 96.3 & 1.39 & 3.47 & 3.39 \\
Focal & CE & Huber &          & Laplace-NLL & Focal-BCE & Tversky
& 0.71 & 0.61 & 65.4 & 48.3 & 95.7 & 96.7 & 1.45 & 3.61 & -- \\
Focal & CE &       &          &             &           & Tversky
& \textbf{0.98} & 0.80 & 88.1 & 50.5 & \textbf{99.1} & -- & -- & -- & -- \\
Focal & CE & Huber &          & Laplace-NLL & Focal-BCE &
& 0.86 & 0.74 & 79.8 & 55.2 & 96.7 & 97.7 & 1.28 & 2.08 & -- \\
\bottomrule
\vspace{-0.4cm}
\end{tabular}%
\end{table*}

\begin{table*}[t]
\centering
\scriptsize
\caption{Evaluation results for the three methods under different objective
configurations and Hungarian assignment settings. Blank loss cells indicate
omitted objectives. Reference methods are Radar-Compound-Jamming (RCJ)~\cite{li_compound_segmentation}, Hierarchical-Feature-Integration (HFI)~\cite{qu_hifi_yolo}, and Transformer-Net (TN)~\cite{vagollari_transformer_rf}. \textbf{Bold} denotes best results.}
\label{tab:objective_hungarian_configurations}
\vspace{-0.1cm}
\begin{tabular}{@{}*{9}{l}@{\hspace{0.8em}}*{9}{r}@{}}
\toprule
\multicolumn{9}{c}{\textbf{Model Configuration}} &
\multicolumn{9}{c}{\textbf{Evaluation Metrics}} \\
\cmidrule(r){1-9}\cmidrule(l){10-18}
\rotHeader{Method} &
\rotHeader{Objectness} &
\rotHeader{Type Classif.} &
\rotHeader{\shortstack[c]{Center\\Frequency}} &
\rotHeader{\shortstack[c]{Signal\\Strength}} &
\rotHeader{Bandwidth} &
\rotHeader{BOC} &
\rotHeader{Mask} &
\rotHeader{Hungarian} &
\rotHeader{Precision} &
\rotHeader{Recall} &
\rotHeader{F1-Score [\%]} &
\rotHeader{\shortstack[c]{Count\\Accuracy [\%]}} &
\rotHeader{\shortstack[c]{Type\\Accuracy [\%]}} &
\rotHeader{\shortstack[c]{BOC\\Accuracy [\%]}} &
\rotHeader{\shortstack[c]{Frequency\\Error [MHz]}} &
\rotHeader{\shortstack[c]{Bandwidth\\Error [MHz]}} &
\rotHeader{\shortstack[c]{RSSI\\Error [dB]}} \\
\midrule
\shortstack[l]{RCJ}
& Focal & CE &       &       &                  &     &     & type
& 0.92 & \textbf{0.91} & 91.5 & 68.7 & 96.5 & -- & -- & -- & -- \\
\shortstack[l]{HFI}
& Focal & CE &       &       &                  &     &     & type
& 0.45 & 0.75 & 56.1 & 16.1 & 75.4 & -- & -- & -- & -- \\
TN
& Focal & CE & Huber & Huber & Heteroscedastic & BCE & BCE & type
& 0.01 & 0.01 & 0.9 & 15.7 & 29.9 & 72.6 & \textbf{4.30} & 11.90 & \textbf{4.16} \\
\addlinespace[3pt]
\shortstack[l]{RCJ}
& Focal & CE &       &       &                  &     &     & type
& 0.92 & 0.89 & 90.9 & 74.8 & 96.6 & -- & -- & -- & -- \\
\shortstack[l]{HFI}
& Focal & CE &       &       &                  &     &     & type
& 0.68 & 0.33 & 44.3 & 30.9 & 70.7 & -- & -- & -- & -- \\
TN
& Focal & CE & Huber & Huber & Heteroscedastic & BCE & BCE & type
& 0.04 & 0.08 & 5.1 & 6.4 & 42.9 & \textbf{100.0} & 5.51 & \textbf{7.08} & 4.18 \\
\addlinespace[3pt]
\shortstack[l]{RCJ}
& Focal & CE &       &       &                  &     &     & type+objectness
& \textbf{0.99} & 0.87 & \textbf{92.2} & 64.5 & \textbf{99.0} & -- & -- & -- & -- \\
\shortstack[l]{HFI}
& Focal & CE &       &       &                  &     &     & type+objectness
& 0.65 & 0.34 & 44.8 & 16.3 & 71.2 & -- & -- & -- & -- \\
TN
& Focal & CE & Huber & Huber & Heteroscedastic & BCE & BCE & type+objectness
& 0.00 & 0.00 & 0.0 & 0.0 & 0.0 & 0.0 & nan & nan & nan \\
\addlinespace[3pt]
\shortstack[l]{RCJ}
& Focal & CE &       &       &                  &     &     & type+objectness
& 0.98 & 0.87 & \textbf{92.2} & \textbf{75.1} & 98.9 & -- & -- & -- & -- \\
\shortstack[l]{HFI}
& Focal & CE &       &       &                  &     &     & type+objectness
& 0.68 & 0.26 & 38.2 & 31.1 & 68.6 & -- & -- & -- & -- \\
TN
& Focal & CE & Huber & Huber & Heteroscedastic & BCE & BCE & type+objectness
& 0.00 & 0.00 & 0.0 & 0.0 & 0.0 & 0.0 & nan & nan & nan \\
\bottomrule
\vspace{-0.35cm}
\end{tabular}%
\end{table*}

For all configurations, prediction slots are assigned to the unordered
ground-truth sources using the Hungarian cost in Eq.~\ref{eq:hungarian_cost}
and assignment in Eq.~\ref{eq:hungarian_assignment}. At inference, slots with
an objectness probability of at least \(0.5\) are considered active, as this
threshold provides the best balance between F1-score and count accuracy in
Figure~\ref{label_figure_objectiveness}. Table~\ref{tab:objective_loss_combinations} compares selected objective and
loss-function combinations under this common procedure. With all objectives enabled, the reference configuration achieves an
F1-score of \(74.8\%\), type and BOC accuracies of \(97.3\%\) and \(98.4\%\),
and frequency, bandwidth, and RSSI errors of \(1.29\,\mathrm{MHz}\),
\(2.54\,\mathrm{MHz}\), and \(3.19\,\mathrm{dB}\). Removing signal-strength
estimation increases the F1-score to \(83.5\%\) and count accuracy from
\(50.6\%\) to \(61.7\%\), while reducing the frequency and bandwidth errors
to \(1.12\,\mathrm{MHz}\) and \(1.62\,\mathrm{MHz}\). This confirms the
negative interaction between signal-strength regression and the remaining
tasks.

The best detection result is obtained using objectness, type classification,
and BCE mask supervision, reaching a precision of \(0.98\), recall of
\(0.85\), F1-score of \(91.1\%\), and type accuracy of \(98.7\%\). Removing
mask supervision from the configuration with objectness, type, center
frequency, bandwidth, and BOC reduces the F1-score from \(83.5\%\) to
\(82.0\%\) and count accuracy from \(61.7\%\) to \(57.4\%\). Mask supervision
therefore supports source detection and assignment beyond its segmentation
objective.

The loss alternatives in the lower half of the table perform favorably when
evaluated individually, but their improvements do not transfer to the joint
objective. Combining quantile, Laplace NLL, focal BCE, and Tversky losses
reduces the F1-score to \(65.9\%\), compared with \(74.8\%\) for the reference
losses. Replacing BCE with Tversky in the reduced segmentation configuration
similarly decreases the F1-score from \(91.1\%\) to \(88.1\%\), despite
increasing type accuracy to \(99.1\%\). Overall, the results demonstrate
interactions between the losses through the shared representation and source
assignment. The objectness--type--mask configuration provides the strongest
detection performance, whereas omitting signal-strength estimation offers
the best compromise between detection and multi-parameter characterization.

\subsection{Comparison to Related Work}
\label{label_evaluation_related_work}

Table~\ref{tab:objective_hungarian_configurations} compares the proposed
approach with adapted implementations of RCJ, HFI, and TN. Since RCJ and HFI
predict only source presence and type, their comparison is restricted to the
corresponding detection and classification metrics, whereas TN is also
evaluated for characterization. RCJ provides the strongest reference performance, achieving F1-scores between
\(90.9\%\) and \(92.2\%\), count accuracies between \(64.5\%\) and \(75.1\%\),
and type accuracies between \(96.5\%\) and \(99.0\%\). Including objectness in
the Hungarian cost increases precision from \(0.92\) to \(0.98\)--\(0.99\)
and improves the F1-score to \(92.2\%\), although recall decreases slightly
to \(0.87\). Objectness-aware assignment therefore produces more selective
source predictions for RCJ. HFI performs substantially worse, with F1-scores between \(38.2\%\) and
\(56.1\%\), while including objectness further reduces recall. TN fails to
learn reliable source predictions: type-based matching yields F1-scores of
only \(0.9\%\) and \(5.1\%\), and adding objectness causes all prediction
slots to be rejected. The corresponding parameter errors are therefore
undefined. These results indicate that objectness-aware matching requires
sufficiently reliable objectness predictions.

The reduced objectness--type--mask configuration of the proposed method
achieves an F1-score of \(91.1\%\) and a type accuracy of \(98.7\%\), making
it competitive with RCJ while additionally producing source-specific masks.
The complete model reaches an F1-score of \(74.8\%\), type and BOC accuracies
of \(97.3\%\) and \(98.4\%\), and frequency, bandwidth, and RSSI errors of
\(1.29\,\mathrm{MHz}\), \(2.54\,\mathrm{MHz}\), and \(3.19\,\mathrm{dB}\).
Unlike RCJ and HFI, the proposed method therefore jointly detects, segments,
and characterizes multiple overlapping sources while retaining competitive
detection performance.

\subsection{Multi-Interference Characterization}
\label{label_evaluation_multi_interferences}

\begin{table}[t]
\centering
\scriptsize
\caption{Performance by number of jammers and overlap status.
Precision and recall are scores; F1 and accuracy values are percentages. \textbf{Bold} denotes best results.}
\label{tab:overlap_jammers}
\begin{tabular}{@{}cc@{\hspace{1em}}rrrrrrr@{}}
\toprule
\multicolumn{2}{c}{\textbf{Scenario}} &
\multicolumn{7}{c}{\textbf{Evaluation Metrics}} \\
\cmidrule(r){1-2}\cmidrule(l){3-9}
\rotHeader{Overlapping} &
\rotHeader{\shortstack[c]{Number of\\Jammers}} &
\rotHeader{Precision} &
\rotHeader{Recall} &
\rotHeader{F1-Score [\%]} &
\rotHeader{\shortstack[c]{Type\\Accuracy [\%]}} &
\rotHeader{\shortstack[c]{BOC\\Accuracy [\%]}} &
\rotHeader{\shortstack[c]{IoU}} &
\rotHeader{\shortstack[c]{DICE}} \\
\midrule
False & 1 & 0.97 & \textbf{1.00} & \textbf{98.6} & \textbf{100.0} & \textbf{100.0} & \textbf{0.92} & \textbf{0.96} \\
False & 2 & 0.99 & 0.98 & 98.2 & \textbf{100.0} & \textbf{100.0} & 0.86 & 0.92 \\
False & 3 & 0.98 & 0.83 & 90.0 & \textbf{100.0} & 97.8 & 0.81 & 0.88 \\
False & 4 & \textbf{1.00} & 0.75 & 85.7 & \textbf{100.0} & \textbf{100.0} & 0.80 & 0.88 \\
\addlinespace[3pt]
True  & 2 & 0.95 & 0.93 & 93.9 & 99.1 & 99.8 & 0.80 & 0.87 \\
True  & 3 & 0.93 & 0.86 & 89.2 & 98.6 & 98.9 & 0.74 & 0.82 \\
True  & 4 & 0.90 & 0.82 & 85.7 & 97.8 & 98.2 & 0.67 & 0.77 \\
True  & 5 & 0.86 & 0.75 & 80.3 & 96.7 & 98.0 & 0.61 & 0.71 \\
True  & 6 & 0.82 & 0.69 & 74.6 & 95.2 & 96.5 & 0.55 & 0.66 \\
\bottomrule
\end{tabular}
\end{table}

Table~\ref{tab:overlap_jammers} analyzes performance with respect to the
number of simultaneously active interference sources and their overlap.
For one or two non-overlapping sources, the model achieves F1-scores of
\(98.6\%\) and \(98.2\%\), respectively, together with almost perfect type
and BOC classification. With three and four sources, the F1-score decreases
to \(90.0\%\) and \(85.7\%\). Precision remains between \(0.98\) and \(1.00\),
whereas recall decreases to \(0.75\), indicating that the model primarily
misses sources rather than generating false detections as the cardinality
increases.

\begin{figure}[!t]
    \centering
	\begin{minipage}[t]{0.24\linewidth}
        \centering
    	\includegraphics[trim=26 12 54 10, clip, width=1.0\linewidth]{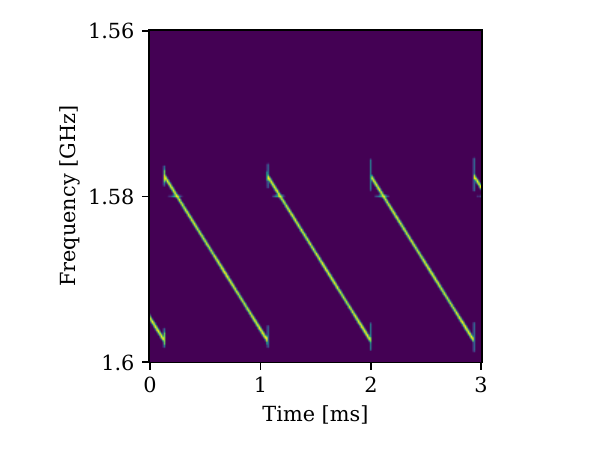}
    \end{minipage}
    \hfill
	\begin{minipage}[t]{0.24\linewidth}
        \centering
    	\includegraphics[trim=26 12 54 10, clip, width=1.0\linewidth]{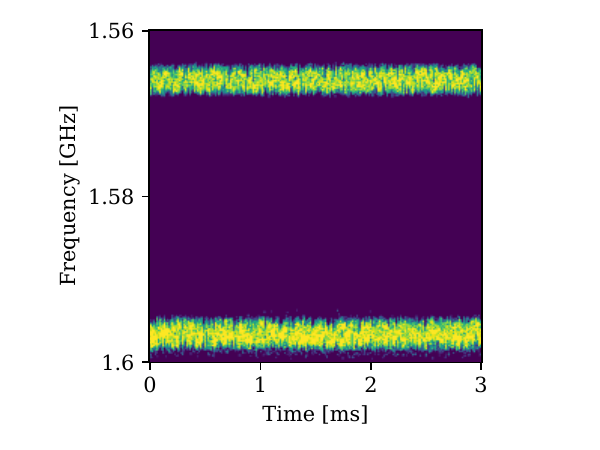}
    \end{minipage}
    \hfill
	\begin{minipage}[t]{0.24\linewidth}
        \centering
    	\includegraphics[trim=26 12 54 10, clip, width=1.0\linewidth]{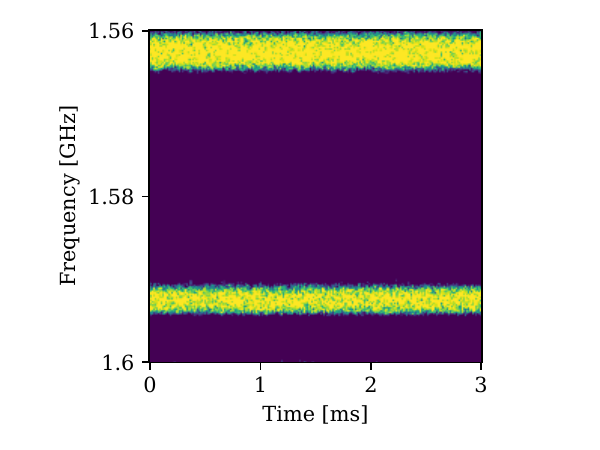}
    \end{minipage}
    \hfill
	\begin{minipage}[t]{0.24\linewidth}
        \centering
    	\includegraphics[trim=26 12 54 10, clip, width=1.0\linewidth]{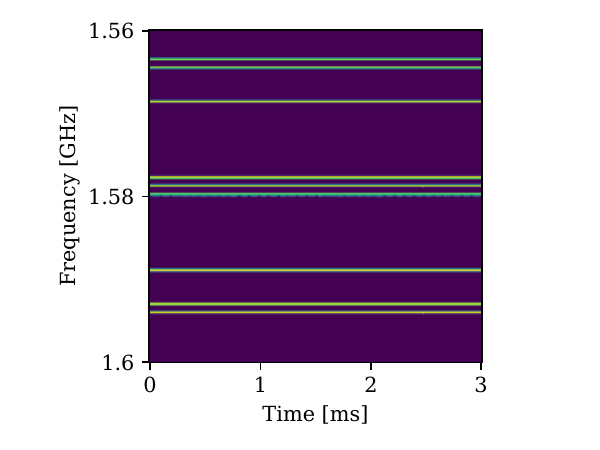}
    \end{minipage}
	\begin{minipage}[t]{0.24\linewidth}
        \centering
    	\includegraphics[trim=26 12 54 10, clip, width=1.0\linewidth]{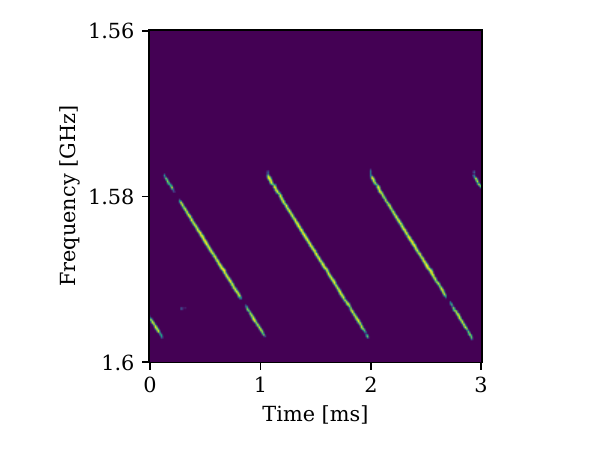}
    \end{minipage}
    \hfill
	\begin{minipage}[t]{0.24\linewidth}
        \centering
    	\includegraphics[trim=26 12 54 10, clip, width=1.0\linewidth]{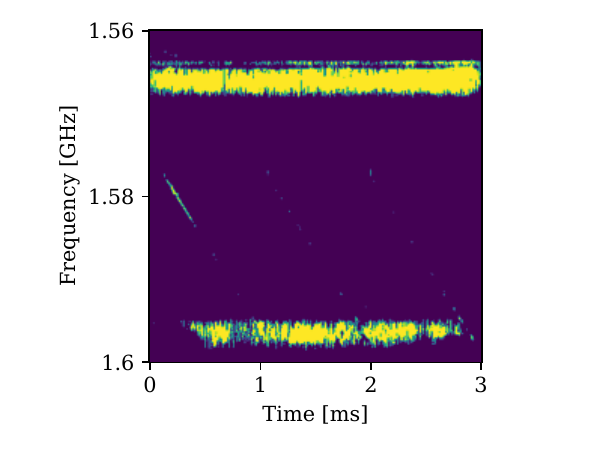}
    \end{minipage}
    \hfill
	\begin{minipage}[t]{0.24\linewidth}
        \centering
    	\includegraphics[trim=26 12 54 10, clip, width=1.0\linewidth]{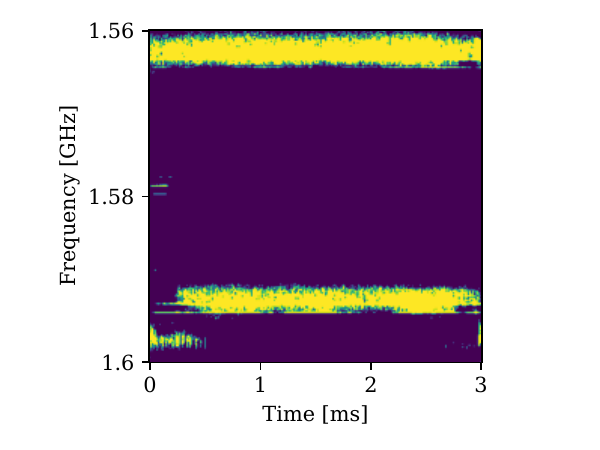}
    \end{minipage}
    \hfill
	\begin{minipage}[t]{0.24\linewidth}
        \centering
    	\includegraphics[trim=26 12 54 10, clip, width=1.0\linewidth]{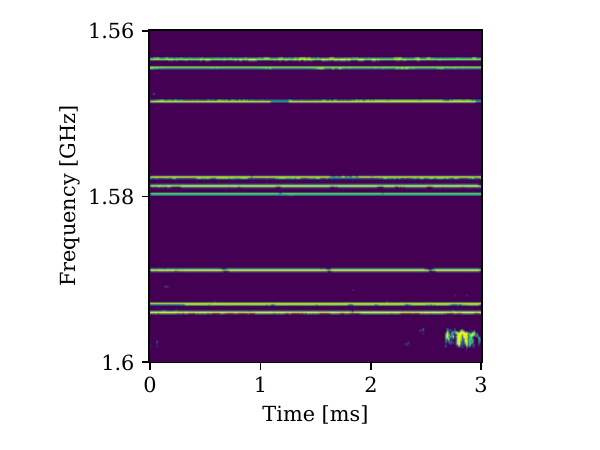}
    \end{minipage}
    \caption{Qualitative source-specific mask predictions for four overlapping interferences; top: ground truth; bottom: prediction.}
    \label{label_mask_example_4}
\end{figure}

\begin{figure}[!t]
    \centering
	\begin{minipage}[t]{0.155\linewidth}
        \centering
    	\includegraphics[trim=26 12 54 10, clip, width=1.0\linewidth]{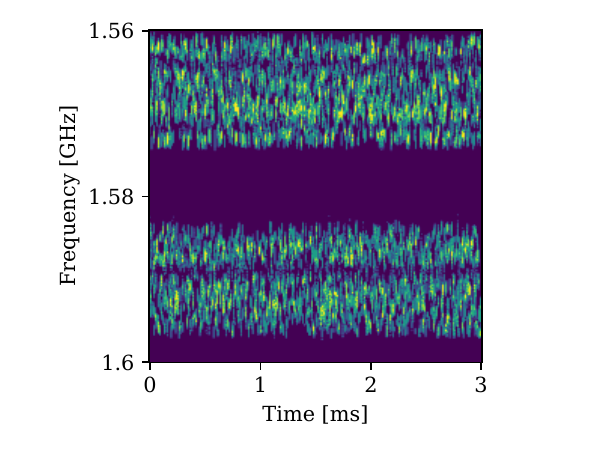}
    \end{minipage}
    \hfill
	\begin{minipage}[t]{0.155\linewidth}
        \centering
    	\includegraphics[trim=26 12 54 10, clip, width=1.0\linewidth]{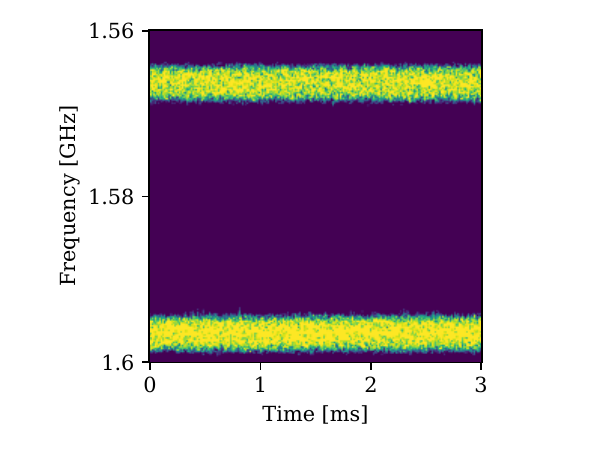}
    \end{minipage}
    \hfill
	\begin{minipage}[t]{0.155\linewidth}
        \centering
    	\includegraphics[trim=26 12 54 10, clip, width=1.0\linewidth]{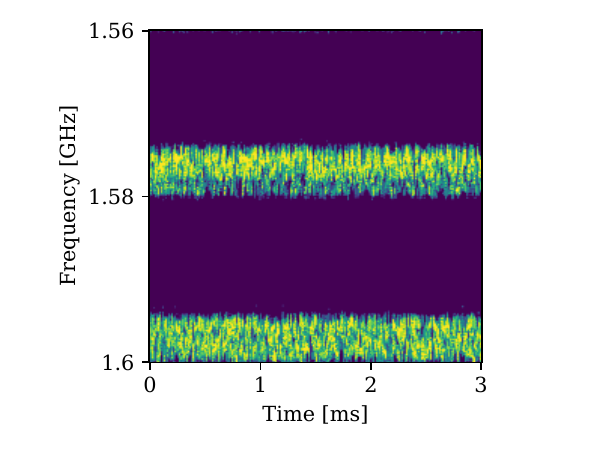}
    \end{minipage}
    \hfill
	\begin{minipage}[t]{0.155\linewidth}
        \centering
    	\includegraphics[trim=26 12 54 10, clip, width=1.0\linewidth]{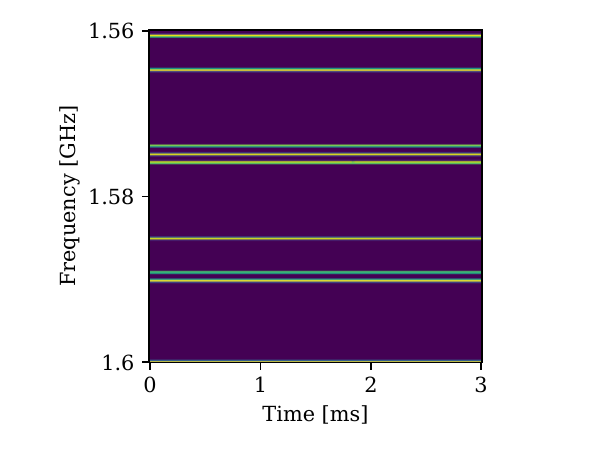}
    \end{minipage}
    \hfill
	\begin{minipage}[t]{0.155\linewidth}
        \centering
    	\includegraphics[trim=26 12 54 10, clip, width=1.0\linewidth]{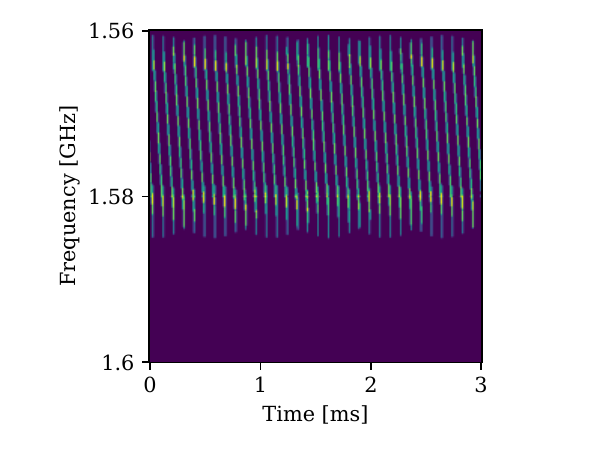}
    \end{minipage}
    \hfill
	\begin{minipage}[t]{0.155\linewidth}
        \centering
    	\includegraphics[trim=26 12 54 10, clip, width=1.0\linewidth]{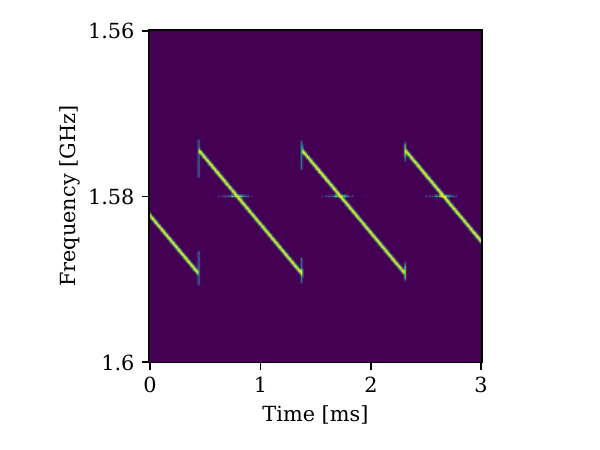}
    \end{minipage}
	\begin{minipage}[t]{0.155\linewidth}
        \centering
    	\includegraphics[trim=26 12 54 10, clip, width=1.0\linewidth]{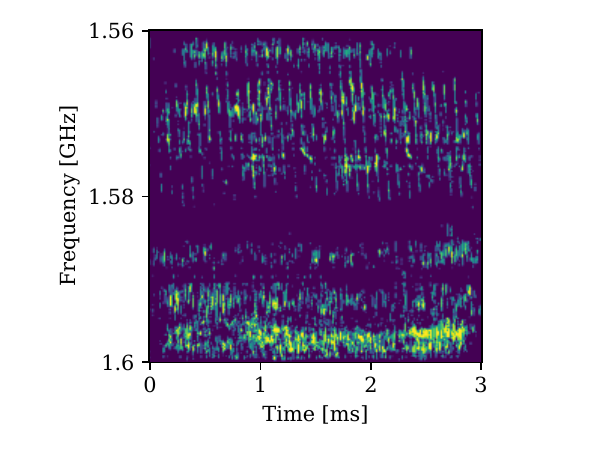}
    \end{minipage}
    \hfill
	\begin{minipage}[t]{0.155\linewidth}
        \centering
    	\includegraphics[trim=26 12 54 10, clip, width=1.0\linewidth]{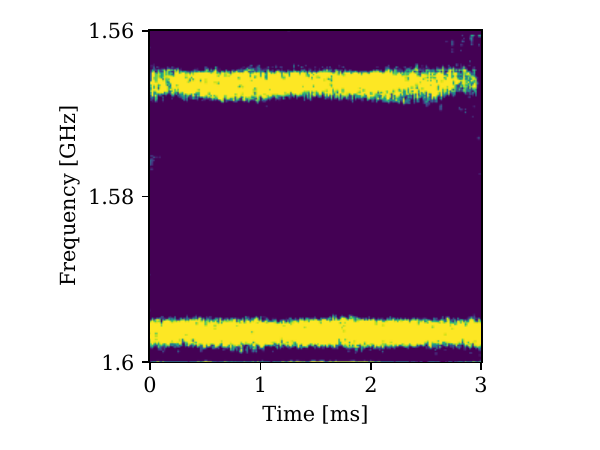}
    \end{minipage}
    \hfill
	\begin{minipage}[t]{0.155\linewidth}
        \centering
    	\includegraphics[trim=26 12 54 10, clip, width=1.0\linewidth]{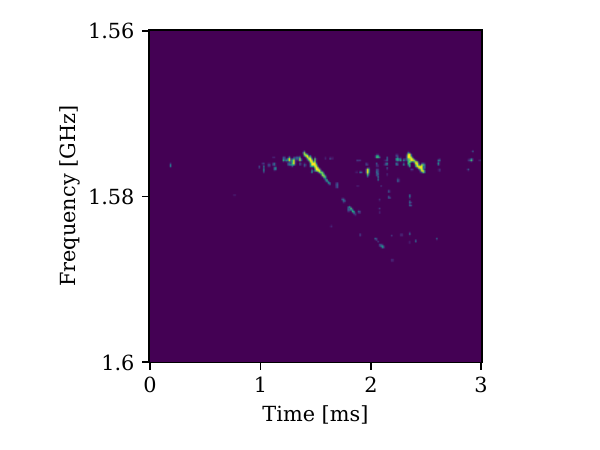}
    \end{minipage}
    \hfill
	\begin{minipage}[t]{0.155\linewidth}
        \centering
    	\includegraphics[trim=26 12 54 10, clip, width=1.0\linewidth]{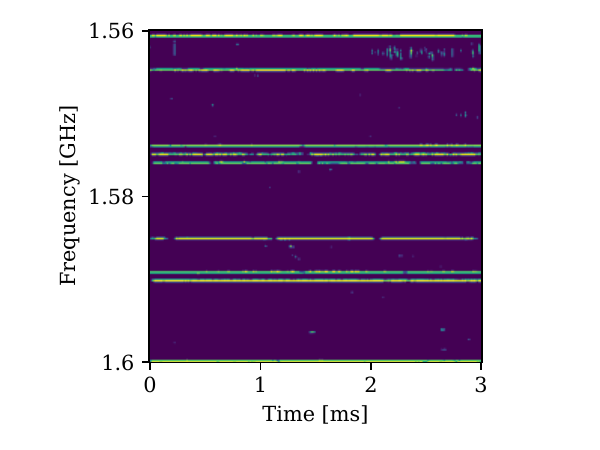}
    \end{minipage}
    \hfill
	\begin{minipage}[t]{0.155\linewidth}
        \centering
    	\includegraphics[trim=26 12 54 10, clip, width=1.0\linewidth]{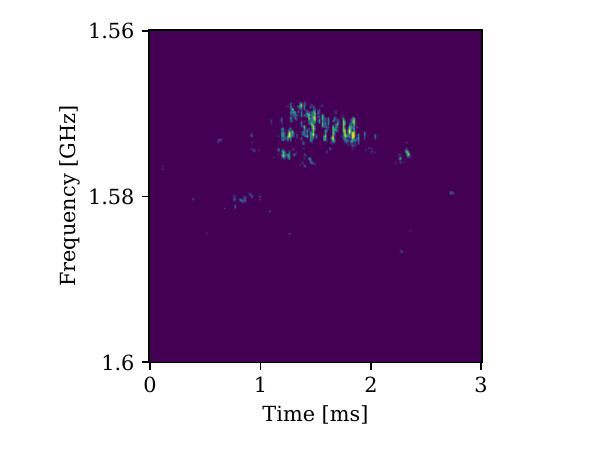}
    \end{minipage}
    \hfill
	\begin{minipage}[t]{0.155\linewidth}
        \centering
    	\includegraphics[trim=26 12 54 10, clip, width=1.0\linewidth]{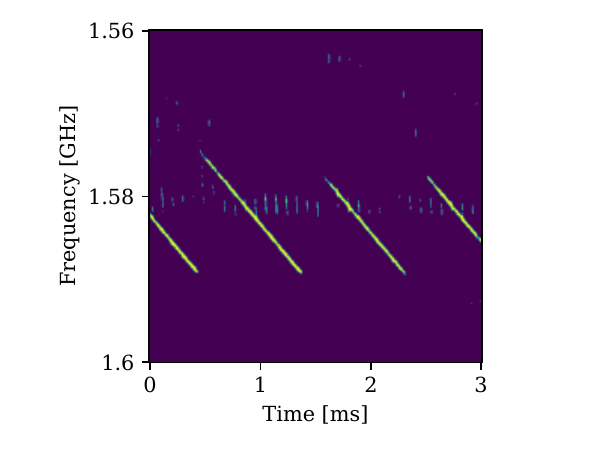}
    \end{minipage}
    \caption{Qualitative source-specific mask predictions for six overlapping interferences; top: ground truth; bottom: prediction.}
    \label{label_mask_example_6}
\end{figure}

For overlapping interference, performance decreases gradually from an
F1-score of \(93.9\%\) for two sources to \(74.6\%\) for six sources.
The corresponding reduction in recall from \(0.93\) to \(0.69\) indicates
that weak or strongly obscured sources become increasingly difficult to
detect. Nevertheless, type and BOC accuracies remain at \(95.2\%\) and
\(96.5\%\), respectively, even when all six prediction slots are occupied.
The principal challenge is therefore detecting every active source, whereas
the properties of successfully assigned sources remain accurately classified.

Figures~\ref{label_mask_example_4} and~\ref{label_mask_example_6} qualitatively
illustrate the source-specific segmentation results. Refer to App.~\ref{label_appendix_mask_predictions}, for additional results. For two and three
overlapping sources, the predicted masks closely reproduce both narrow
time-varying structures and broad stationary frequency bands. With four
sources, the model continues to separate chirped, broadband, and multitone
patterns, including structures occupying common time--frequency regions. For
five and six sources, the principal source signatures remain distinguishable,
although mask fragmentation, boundary deviations, and leakage between nearby
components become more pronounced. These observations are consistent with
the decreasing recall in Table~\ref{tab:overlap_jammers} and demonstrate that
the proposed set-based representation can separate diverse overlapping
interference structures up to the maximum supported source count.

\subsection{Inference Times}
\label{label_evaluation_inference_times}

Figure~\ref{label_figure_inference_times} shows that inference is fast and
largely independent of the selected objectives and loss functions. GPU
inference typically requires approximately \(8\)--\(9\,\mathrm{ms}\) per
sample, compared with approximately \(240\)--\(260\,\mathrm{ms}\) on the CPU.
The small variations between configurations indicate that adding prediction
elements or changing their training losses introduces negligible inference
overhead.

\begin{figure}[!t]
    \centering
	\begin{minipage}[t]{0.49\linewidth}
        \centering
    	\includegraphics[trim=10 10 10 10, clip, width=1.0\linewidth]{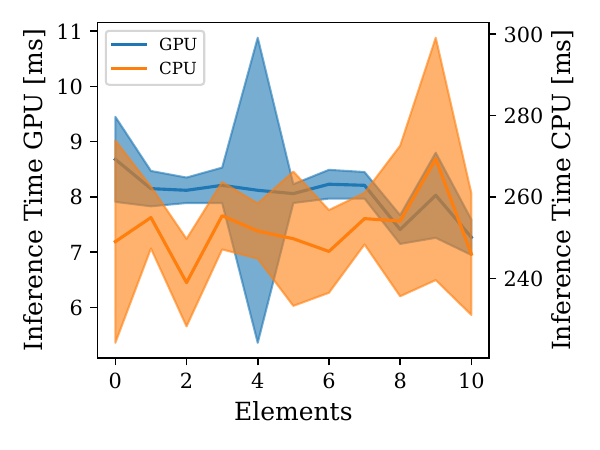}
        \subcaption{Elements.}
        \label{label_figure_inference_times1}
    \end{minipage}
    \hfill
	\begin{minipage}[t]{0.49\linewidth}
        \centering
    	\includegraphics[trim=10 10 10 10, clip, width=1.0\linewidth]{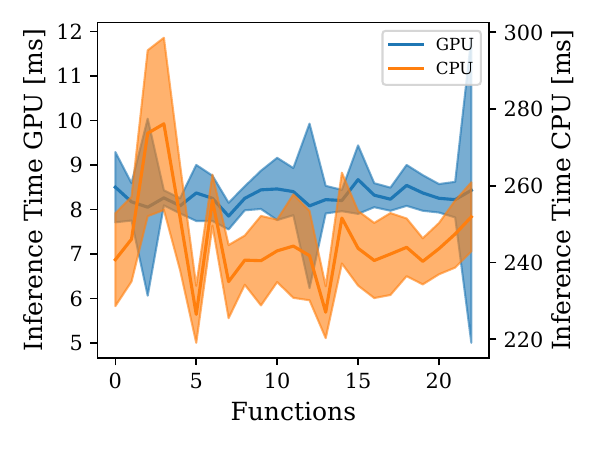}
        \subcaption{Loss functions.}
        \label{label_figure_inference_times2}
    \end{minipage}
    \caption{Inference times averaged over all test samples using an NVIDIA Tesla V100-SXM2 GPU with \(32\,\mathrm{GB}\) of VRAM and an Intel Xeon CPU for the corresponding elements (Table~\ref{tab:task_ablation}) and loss functions (Table~\ref{tab:loss_functions}).}
    \label{label_figure_inference_times}
\end{figure}

\section{Conclusion}
\label{label_conclusion}

This paper presented a unified set-based framework for jointly detecting,
segmenting, and characterizing multiple overlapping interference sources from
time--frequency representations. The framework maps each spectrogram to six
unordered source slots containing objectness, interference type, center
frequency, bandwidth, received signal strength, BOC property, and a
source-specific mask. Permutation-invariant Hungarian assignment preserves
the association between the predicted masks and source characteristics, while
a weighted multi-task objective enables flexible combinations of prediction
tasks. The method was evaluated on a Sionna ray-tracing dataset containing
realistic multipath propagation and mixtures drawn from \(94\) waveform
configurations across six interference classes. For limitations, refer to App.~\ref{label_appendix_limitations}.

The five principal findings are summarized as follows: (i) The seven-task model achieves \(74.8\%\) F1, type and BOC accuracies of \(97.3\%\) and \(98.4\%\), and errors of \(1.29\,\mathrm{MHz}\), \(2.54\,\mathrm{MHz}\), and \(3.19\,\mathrm{dB}\) for frequency, bandwidth, and signal strength. (ii) The objectness--type--mask configuration reaches \(91.1\%\) F1, revealing a trade-off between detection accuracy and comprehensive characterization. (iii) Loss improvements are not additive across tasks, while mask supervision improves both source detection and count estimation. (iv) For two to six overlapping sources, F1 decreases from \(93.9\%\) to \(74.6\%\), while type and BOC accuracies remain above \(95\%\). (v) The method is competitive with the strongest reference approach and requires only \(8\)--\(9\,\mathrm{ms}\) per sample on the GPU.

\section*{Acknowledgments}

This work has been carried out within the SAIP project, funding code 50NP2603A, sponsored by the German Federal Ministry for Economic Affairs and Climate Action (BMWK) and the German Federal Ministry for Transport (BMV), and supported by the German Space Agency at DLR, the Bundesnetzagentur (BNetzA), and the Federal Agency for Cartography and Geodesy (BKG).

\bibliography{TSP2026}
\bibliographystyle{IEEEtran}
\vspace{0.5cm}
\section*{Biography}
\vspace{-0.3cm}

\begin{IEEEbiography}[{\includegraphics[width=1in, height=1.25in, clip, keepaspectratio]{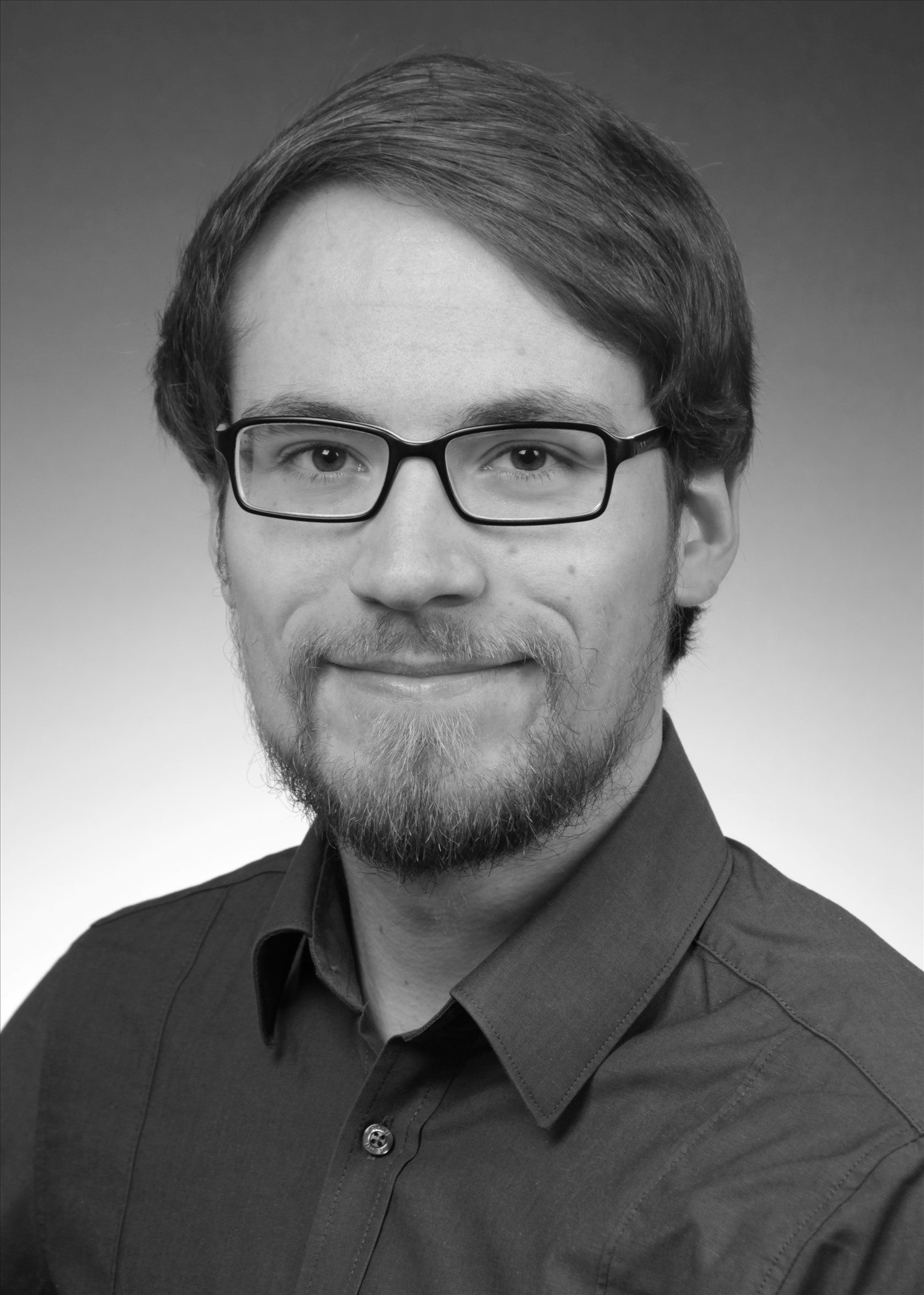}}]{Lucas Heublein} earned his M.Sc.~in Integrated Life Science from Friedrich-Alexander University (FAU) Erlangen-Nürnberg and completed his Master of Science in Computer Science in 2024. He began his association with the Hybrid Positioning \& Information Fusion group at the Fraunhofer Institute for Integrated Circuits IIS in 2020 as a student assistant and currently serves as a research assistant in the Self-Learning Systems group.
\end{IEEEbiography}

\vspace{-0.2cm}

\begin{IEEEbiography}[{\includegraphics[width=1in, height=1.25in, clip, keepaspectratio]{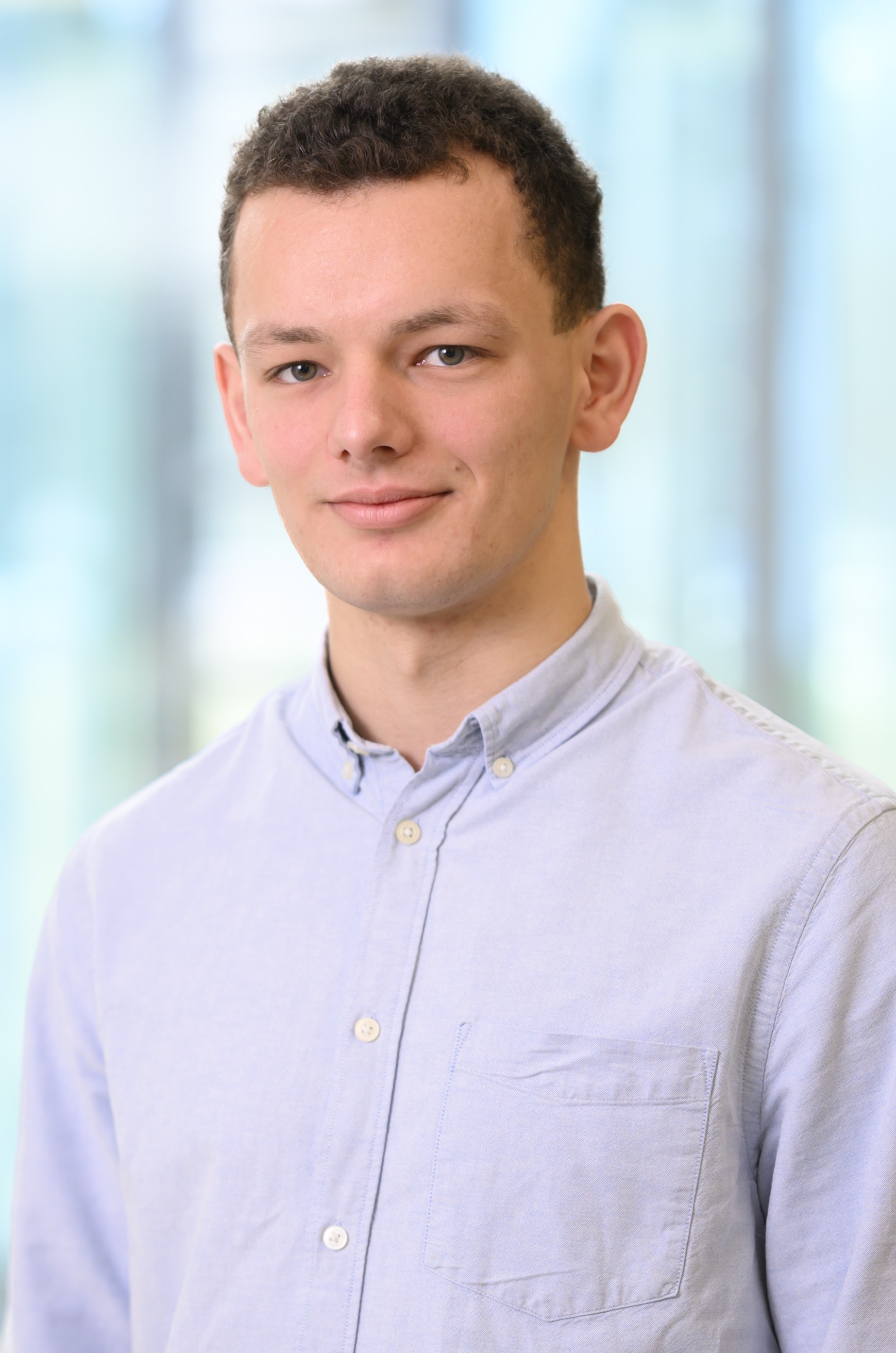}}]{Christian Wielenberg} is a research scientist at Fraunhofer IIS in Nuremberg, where he pursues a Ph.D.~on Structured Signal Separation. He received his Bachelor's degree in Computer Science from Otto-von-Guericke University Magdeburg and his Master's degree in Computer Science from FAU Erlangen-Nürnberg. His Master's thesis on GNSS jammer localization using deep learning was conducted in cooperation with Fraunhofer IIS. His research interests are RF signal processing, GNSS, wireless communication and bayesian inference.
\end{IEEEbiography}

\vspace{-0.2cm}

\begin{IEEEbiography}[{\includegraphics[width=1in, height=1.25in, clip, keepaspectratio]{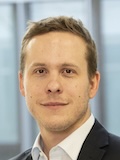}}]{Christopher Mutschler} has been a Full Professor at the University of Technology Nuremberg (UTN) since 2026 and heads the Machine Learning and Positioning Systems research group. At the same time, he is also Division Director for “Positioning and Networks” at Fraunhofer IIS in Nürnberg. His work combines fundamental research in the field of machine learning with the design and deployment of real-world wireless systems. His research focuses on reinforcement learning methods for wireless communications, with particular emphasis on wireless sensing and positioning. He investigates how learning-based approaches can enable adaptive, robust, and efficient wireless systems, especially in complex and dynamic environments.
\end{IEEEbiography}

\vspace{-0.2cm}

\begin{IEEEbiography}[{\includegraphics[width=1in, height=1.25in, clip, keepaspectratio]{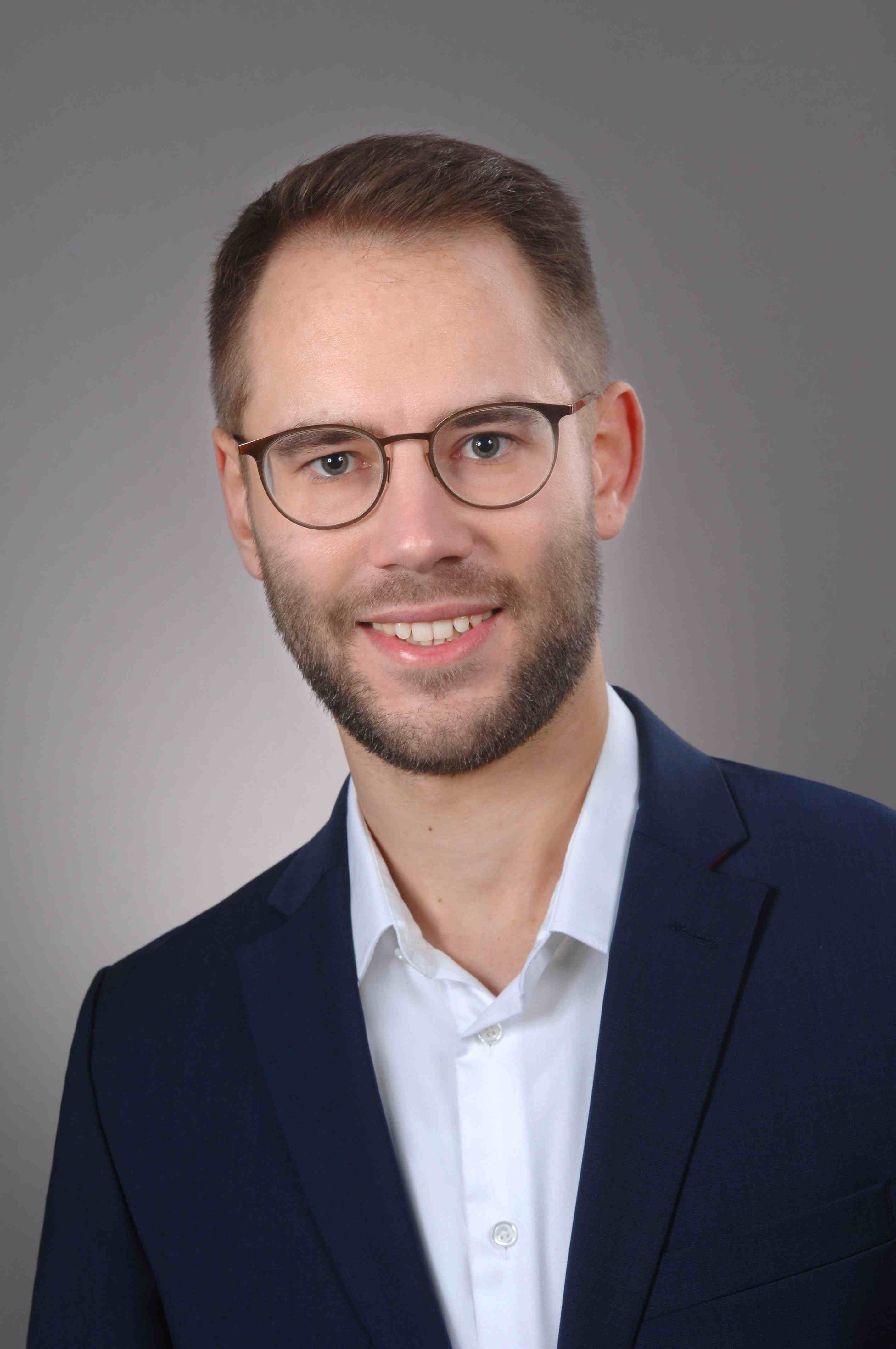}}]{Felix Ott} received his M.Sc.~degree in Computational Engineering from Friedrich-Alexander University (FAU) Erlangen-Nürnberg in 2019. He subsequently joined the Self-Learning Systems group within the Machine Intelligence department at the Fraunhofer Institute for Integrated Circuits IIS. In 2023, he was awarded his Ph.D.~from Ludwig-Maximilians University (LMU) Munich, where he conducted research in the Probabilistic Machine and Deep Learning group. He is currently serving as a postdoctoral researcher and manages the program Signals Intelligence (SIGINT). His research focuses on representation learning, domain adaptation, and few-shot learning for GNSS-based interference monitoring, such as signal classification, characterization, and source localization.
\end{IEEEbiography}

\clearpage
\section{Appendix}
\label{label_app}

\subsection{Spectral Separability and Identifiability}
\label{label_appendix_separability}

Classical time--frequency analysis establishes that the recoverability of
multiple signal components depends on their separation relative to the
frequency concentration of the analysis window. In particular, for
multicomponent amplitude- and frequency-modulated signals, components occupy
disjoint regions of the short-time Fourier transform (STFT) when their
instantaneous frequencies differ by more than twice the effective spectral
half-width of the window \cite{oberlin_fsst}. To provide a representation-level resolution reference, we analyze an
idealized single-frame case with a periodic discrete Blackman window,
exactly DFT-grid-aligned components, and finite discrete spectral supports.
This analysis concerns only the input representation and does not provide a
separation guarantee for the learned model.

Consider one STFT frame of length \(N\), containing \(K\) additive interference
components,
\begin{equation}
    x[n] = \sum_{s=1}^{K} x_s[n],
    \qquad n=0,\ldots,N-1.
\end{equation}
Let \(X_s[k]\) denote the \(N\)-point discrete Fourier transform (DFT) of
\(x_s[n]\), and let
\begin{equation}
    \mathcal{B}_s
    =
    \operatorname{supp}(X_s)
    =
    \{k:X_s[k]\neq 0\}
\end{equation}
denote its discrete spectral support. The circular distance between two
spectral supports is defined as
\begin{equation}
    d_N(\mathcal{B}_i,\mathcal{B}_j)
    =
    \min_{\substack{k\in\mathcal{B}_i\\
                    \ell\in\mathcal{B}_j}}
    \min_{q\in\mathbb{Z}}
    |k-\ell+qN|.
\end{equation}

\begin{proposition}[Idealized support separation for a periodic Blackman DFT]
\label{prop:blackman_separation}
Consider a single \(N\)-sample STFT frame. Assume that every source component
is exactly aligned with the \(N\)-point DFT grid and has an exact finite
discrete support \(\mathcal{B}_s\), such that
\[
    X_s[k]=0
    \qquad
    \text{for all } k\notin\mathcal{B}_s.
\]
Furthermore, assume that the periodic three-term Blackman window
\begin{equation}
    w[n]
    =
    a_0-a_1\cos\left(\frac{2\pi n}{N}\right)
        +a_2\cos\left(\frac{4\pi n}{N}\right),
\end{equation}
with \(a_0=0.42\), \(a_1=0.5\), and \(a_2=0.08\), is used. If
\begin{equation}
    d_N(\mathcal{B}_i,\mathcal{B}_j)\geq 5
    \qquad
    \text{for all } i\neq j,
    \label{eq:blackman_separation}
\end{equation}
then the windowed DFT supports of the interference components are pairwise
disjoint. Consequently, the noiseless power spectrogram of their mixture is
binwise additive and contains no cross-source terms:
\begin{equation}
    \left|
        \sum_{s=1}^{K}V_s[k]
    \right|^2
    =
    \sum_{s=1}^{K}|V_s[k]|^2,
\end{equation}
where \(V_s[k]\) is the windowed DFT of source \(s\).
\end{proposition}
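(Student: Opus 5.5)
The plan is to use the multiplication--convolution duality of the DFT. The windowed DFT of source \(s\) is the circular convolution
\[
V_s[k]=\frac{1}{N}\sum_{m=0}^{N-1} W[m]\,X_s[k-m \bmod N],
\]
where \(W[m]\) is the \(N\)-point DFT of \(w[n]\). Since the window is a sum of three harmonically related cosines on the periodic grid, \(W\) is supported on exactly five bins. Writing \(\cos(2\pi n/N)\) and \(\cos(4\pi n/N)\) as complex exponentials gives
\[
W[0]=Na_0,\quad W[\pm1]=-\tfrac{N}{2}a_1,\quad W[\pm2]=\tfrac{N}{2}a_2,
\]
with \(W[m]=0\) otherwise. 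This holds for \(N\geq 5\), so that the five indices are distinct modulo \(N\).

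The first step is therefore to record \(\operatorname{supp}(W)=\{-2,\ldots,2\}\) (mod \(N\)). The second step uses the finite-support assumption on \(X_s\) to conclude
\[
\operatorname{supp}(V_s)\subseteq \mathcal{B}_s\oplus\{-2,\ldots,2\},
\]
the set of bins whose circular distance to \(\mathcal{B}_s\) is at most \(2\).

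The third step is a triangle-inequality argument for the circular distance. Suppose some bin \(k\) lies in both dilated supports. Then there are \(k_i\in\mathcal{B}_i\) and \(k_j\in\mathcal{B}_j\) with circular distances to \(k\) of at most \(2\) each. Hence \(d_N(\mathcal{B}_i,\mathcal{B}_j)\leq 4\), which contradicts \eqref{eq:blackman_separation}. So the windowed supports are pairwise disjoint.

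The final step is to observe that at every bin \(k\) at most one \(V_s[k]\) is nonzero. Expanding \(|\sum_s V_s[k]|^2=\sum_s|V_s[k]|^2+\sum_{i\neq j}V_i[k]\overline{V_j[k]}\), every cross term vanishes, which gives the binwise additivity.

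The only step needing care is the first one: pinning down the five-bin support of \(W\) and its index convention. This means checking that the periodic window, with \(N\) rather than \(N-1\) in the cosine arguments, is exactly a trigonometric polynomial of degree \(2\) on \(\mathbb{Z}_N\). It also means tracking the circular wrap-around so that the distance definition with \(\min_q\) matches the convolution indices.

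I would also note explicitly why the threshold is \(5\). A gap of \(4\) bins can let the two dilations of radius \(2\) meet at a common bin. A gap of \(5\) guarantees that the dilations stay apart, since \(2+2<5\). This also clarifies that the result is sufficient rather than tight under the stated support assumptions.
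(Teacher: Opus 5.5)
Your proposal is correct and follows the paper's proof essentially step for step. Both arguments use the five-bin support $\{-2,\dots,2\}$ of the periodic Blackman window's DFT and the circular-convolution dilation $\operatorname{supp}(V_s)\subseteq\mathcal{B}_s\oplus\{-2,\dots,2\}$. Both then derive the contradiction $d_N(\mathcal{B}_i,\mathcal{B}_j)\le 4$ and conclude that the cross terms vanish. Your extra care about $N\ge 5$ is harmless, since the support inclusion, and hence the whole argument, holds for any $N$.
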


\begin{proof}
Let \(W[k]\) denote the \(N\)-point DFT of the periodic Blackman window.
Using
\begin{equation}
    \cos\left(\frac{2\pi mn}{N}\right)
    =
    \frac{1}{2}
    \left(
        e^{\mathrm{j}2\pi mn/N}
        +
        e^{-\mathrm{j}2\pi mn/N}
    \right),
\end{equation}
the DFT of the window is
\begin{align}
    W[k]
    ={}&
    Na_0\delta_N[k]
    -\frac{Na_1}{2}
        \left(
            \delta_N[k-1]+\delta_N[k+1]
        \right)
    \nonumber\\
    &+
    \frac{Na_2}{2}
        \left(
            \delta_N[k-2]+\delta_N[k+2]
        \right),
\end{align}
where \(\delta_N[\cdot]\) denotes the periodic Kronecker delta. Therefore,
\begin{equation}
    \operatorname{supp}(W)
    \subseteq
    \mathcal{D}
    :=
    \{-2,-1,0,1,2\}.
\end{equation}

Multiplication by the window in the time domain corresponds to circular
convolution in the DFT domain. Hence,
\begin{equation}
    V_s[k]
    =
    \frac{1}{N}(X_s*W)[k],
\end{equation}
and its support satisfies
\begin{equation}
    \operatorname{supp}(V_s)
    \subseteq
    \mathcal{B}_s\oplus\mathcal{D},
\end{equation}
where \(\oplus\) denotes circular Minkowski addition.

Suppose that the windowed supports of two components \(i\) and \(j\) intersect.
Then there exist \(k_i\in\mathcal{B}_i\), \(k_j\in\mathcal{B}_j\), and
\(d_i,d_j\in\mathcal{D}\) such that
\begin{equation}
    k_i+d_i
    \equiv
    k_j+d_j
    \pmod N.
\end{equation}
It follows that
\begin{equation}
    d_N(\mathcal{B}_i,\mathcal{B}_j)
    \leq |d_j-d_i|
    \leq 4,
\end{equation}
which contradicts Eq.~\ref{eq:blackman_separation}. Thus, the windowed supports
are pairwise disjoint.

By linearity of the STFT, the mixture coefficient is
\(V[k]=\sum_s V_s[k]\). Since at most one \(V_s[k]\) is nonzero at each
frequency bin,
\begin{align}
    |V[k]|^2
    &=
    \Big|\sum_s V_s[k]\Big|^2 \\
    &=
    \sum_s |V_s[k]|^2
    +
    \sum_{i\neq j}V_i[k]V_j^{*}[k] \\
    &=
    \sum_s |V_s[k]|^2.
\end{align}
The cross-source terms therefore vanish, completing the proof.
\end{proof}

For a sampling rate \(f_{\mathrm{s}}\), the DFT-bin spacing is
\begin{equation}
    \Delta f_{\mathrm{bin}}
    =
    \frac{f_{\mathrm{s}}}{N}.
\end{equation}
With \(f_{\mathrm{s}}=40.5\,\mathrm{MHz}\) and \(N=512\), this yields
\begin{equation}
    \Delta f_{\mathrm{bin}}
    \approx 79.10\,\mathrm{kHz}.
\end{equation}
Under the exact-support and DFT-grid-alignment assumptions of
Proposition~\ref{prop:blackman_separation}, the sufficient support-separation
condition corresponds to
\begin{equation}
    \Delta f_{\min}
    =
    5\Delta f_{\mathrm{bin}}
    \approx 0.396\,\mathrm{MHz}
\end{equation}
between the nearest occupied DFT bins of two components. This value is not a
general minimum separation between source center frequencies and does not
apply directly to off-grid, broadband, or time-varying signals.

The condition is sufficient but not necessary and applies only to the
idealized single-frame setting stated in
Proposition~\ref{prop:blackman_separation}. In particular, it assumes a
periodic discrete Blackman window, exact DFT-grid alignment, and spectral
coefficients that are identically zero outside the stated discrete supports.
Off-grid tones, broadband signals, rapidly varying chirps, additive noise,
and the continuous sidelobes of practical window responses generally produce
spectral leakage and violate these assumptions. In such cases, separation is
instead characterized relative to the effective bandwidth and concentration
of the analysis window~\cite{harris_windows,oberlin_fsst}. The proposition
therefore serves only as an idealized resolution reference for the employed
time--frequency representation; it is not a performance bound for source
detection, segmentation, characterization, or the learned model itself.

\begin{proposition}[Non-identifiability under unrestricted overlap]
\label{prop:overlap_nonidentifiability}
If two interference components are unrestricted within a common signal space
and overlap completely in time and frequency, their source-specific
decomposition cannot be uniquely determined from a single-channel mixture.
\end{proposition}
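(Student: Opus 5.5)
The plan is to prove the statement by an explicit counterexample to injectivity of the map from source tuples to the observed mixture. First I will formalize ``unrestricted within a common signal space'': both components $x_1,x_2$ may be arbitrary elements of a common linear space $\mathcal{S}\subseteq\mathbb{C}^{L}$ (or its STFT image), containing at least one nonzero element whose STFT support covers the shared time--frequency region. The observation is the single-channel mixture $x=x_1+x_2$, or equivalently its STFT $V=V_1+V_2$ by linearity. A source-specific decomposition is \emph{uniquely determined} if the map $(x_1,x_2)\mapsto x_1+x_2$ is injective on the admissible set, up to relabeling of the two sources (which is the permutation ambiguity already built into the set formulation of Section~\ref{label_method_problem}).

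The key step is to exhibit two distinct admissible pairs with the same mixture. Given any admissible $(x_1,x_2)$, I will take a nonzero $e\in\mathcal{S}$ and set $x_1'=x_1+e$ and $x_2'=x_2-e$. Both lie in $\mathcal{S}$ by linearity, and $x_1'+x_2'=x$. I then need $(x_1',x_2')$ to differ from both $(x_1,x_2)$ and the swapped pair $(x_2,x_1)$. The first holds because $e\neq 0$. For the second, swapping coincides only if $e=x_2-x_1$, so it suffices to pick $e\notin\{0,x_2-x_1\}$; for instance take $e=\alpha u$ with $u\neq0$ and $\alpha$ outside at most one forbidden value, which leaves a continuum of choices. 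Because complete overlap means $\operatorname{supp}(V_1)$ and $\operatorname{supp}(V_2)$ coincide, I can choose $u$ with STFT support inside this common region. Then the perturbed components still overlap completely, so the counterexample remains within the hypothesis class. Choosing, for example, $u=x_1$ with $\alpha\notin\{0,-1\}$ rescaled appropriately gives a concrete family.

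The same argument applies to the power spectrogram and to the masks. The mixture spectrogram $|V|^2$ is unchanged, while the per-source quantities $|V_1+\alpha U|^2$ vary with $\alpha$. Consequently the per-source RSSI, and hence any per-source attribute that depends on the component, is not determined either. This contrasts with Proposition~\ref{prop:blackman_separation}: there, disjoint supports force any cross-support perturbation to alter the mixture, whereas here the shared support removes that constraint.

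The main obstacle is the formalization rather than the algebra. One must make precise that ``unrestricted'' means closure under the additive perturbation $\pm e$, and that ``complete overlap'' is preserved by the perturbation. I will handle this by defining the admissible set as all pairs in $\mathcal{S}\times\mathcal{S}$ whose STFT supports both equal a fixed region $\mathcal{R}$. I then choose $u$ with support in $\mathcal{R}$ and restrict $\alpha$ to avoid the finitely many values at which some bin of $V_1+\alpha U$ or $V_2-\alpha U$ vanishes. For each bin this excludes at most one $\alpha$, so infinitely many valid choices remain.
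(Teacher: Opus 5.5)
Your proposal is correct and is essentially the paper's argument: the paper also replaces $(x_1,x_2)$ by $(x_1+h,\,x_2-h)$ for a nonzero $h$ supported in the common time--frequency region, and observes that the mixture, and therefore its spectrogram, is unchanged. Your version is more careful than the paper's, because you also check two things the paper leaves implicit: that the perturbed pair still has STFT support exactly $\mathcal{R}$ (by excluding finitely many values of $\alpha$), and that it is not merely the relabeled pair $(x_2,x_1)$.
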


\begin{proof}
Let the observed mixture be
\begin{equation}
    x=x_1+x_2.
\end{equation}
For any nonzero signal \(h\) supported in the common time--frequency region,
define
\begin{equation}
    \widetilde{x}_1=x_1+h,
    \qquad
    \widetilde{x}_2=x_2-h.
\end{equation}
Then
\begin{equation}
    \widetilde{x}_1+\widetilde{x}_2
    =
    x_1+x_2
    =
    x.
\end{equation}
Thus, infinitely many source pairs produce exactly the same observation.
Because the spectrogram is a deterministic transformation of \(x\), all such
pairs also produce the same input spectrogram. The decomposition is therefore
not identifiable unless additional structural assumptions or learned source
priors are introduced.
\end{proof}

Proposition~\ref{prop:overlap_nonidentifiability} clarifies that the proposed
network does not circumvent a general information-theoretic impossibility.
Instead, its ability to distinguish overlapping interference sources is enabled
by waveform structure and source-specific patterns learned from the training
distribution.

\subsection{Training and Source-Assignment Algorithm}
\label{label_appendix_training_algorithm}

Algorithm~\ref{alg:joint_training} summarizes the implemented training
procedure. Each complex-valued IQ recording is transformed into a two-sided
power spectrogram using a 512-sample Blackman window and an overlap of
256 samples. After centering the frequency axis, the power is converted to
decibels and standardized independently for each recording. Ground-truth
sources outside the selected observation bandwidth are discarded, while the
remaining annotations are padded to the maximum number of \(S=6\) sources.
Padding entries are ignored during matching and loss computation.

For every mini-batch, the network jointly produces six characterization
vectors and six full-resolution mask-logit maps. A pairwise cost matrix is
constructed between the predicted slots and the valid ground-truth sources,
and the Hungarian algorithm determines the minimum-cost one-to-one assignment.
During the first \(10\%\) of the training epochs, mask costs are excluded from
the assignment to stabilize the prediction heads. Mask supervision remains
active, but its assignment is initially determined from objectness,
classification, and the enabled characterization attributes. After this
warm-up period, mask BCE and Dice costs are added to the matching cost.
Objectness is optimized over all slots, whereas the remaining objectives are
evaluated only for matched source--slot pairs.

\begin{algorithm}[!t]
\caption{Joint set-based segmentation and characterization}
\label{alg:joint_training}
\begin{algorithmic}[1]
\Require Training set \(\mathcal{D}\), epochs \(E\), slots \(S=6\),
         task weights \(\boldsymbol{\lambda}\)
\Require Mask-matching warm-up \(E_{\mathrm{w}}=\lfloor0.1E\rfloor\)
\State Initialize model parameters \(\theta\), optimizer, and scheduler
\For{\(e=1,\ldots,E\)}
    \If{\(e\leq E_{\mathrm{w}}\)}
        \State Set mask-matching weights to zero
    \Else
        \State Enable mask BCE and Dice matching costs
    \EndIf
    \For{each mini-batch
          \(\{(x_b,\mathcal{Y}_b,\mathcal{M}_b)\}_{b=1}^{B}\)}
        \For{\(b=1,\ldots,B\)}
            \State Compute the Blackman-windowed spectrogram
            \[
                X_b \gets
                \operatorname{Standardize}
                \left(
                    10\log_{10}
                    \left|
                        \operatorname{STFT}_{512,256}(x_b)
                    \right|^2
                \right)
            \]
        \EndFor
        \State Predict slot attributes and masks:
        \[
            (\widehat{\mathcal{Y}},
             \widehat{\mathcal{M}})
             \gets F_{\theta}(X)
        \]
        \State Initialize mini-batch loss \(\mathcal{L}\gets0\)
        \For{\(b=1,\ldots,B\)}
            \State Remove padded entries from \(\mathcal{Y}_b\)
            \State Construct pairwise assignment-cost matrix
            \[
                \mathbf{C}_b
                \gets
                \Call{MatchingCost}{
                    \widehat{\mathcal{Y}}_b,
                    \widehat{\mathcal{M}}_b,
                    \mathcal{Y}_b,
                    \mathcal{M}_b}
            \]
            \State Obtain assignment
            \[
                \pi_b\gets\Call{Hungarian}{\mathbf{C}_b}
            \]
            \State Construct objectness targets for all \(S\) slots
            \State Compute objectness loss over all slots
            \State Compute type, frequency, signal-strength,
            \State \,\,\,bandwidth, BOC, and mask losses over
            \State \,\,\,pairs in $\pi_b$
            \State Accumulate the weighted multi-task objective
            \[
                \mathcal{L}\gets\mathcal{L}
                +\sum_{q}
                \lambda_q\mathcal{L}_{q}^{(b)}
            \]%
        \EndFor
        \State \(\mathcal{L}\gets\mathcal{L}/B\)
        \State Update \(\theta\) by backpropagation
    \EndFor
    \State Update the learning-rate scheduler
\EndFor
\State \Return trained parameters \(\theta\)
\end{algorithmic}
\end{algorithm}

\subsection{Evaluation Metrics}
\label{label_appendix_evaluation_metrics}

This section formally defines the metrics used throughout the evaluation.
Let \(N\) denote the number of evaluated spectrograms, \(K_n\) the number of
ground-truth sources in sample \(n\), and \(S=6\) the number of available
prediction slots. The ground-truth sources are denoted by
\(\mathcal{Y}_n=\{\mathbf{y}_{n,k}\}_{k=1}^{K_n}\).

\paragraph{Source Activation}
For slot \(s\), the objectness probability is obtained as
\begin{equation}
    p^{\mathrm{O}}_{n,s}
    =
    \sigma\!\left(\widehat{o}_{n,s}\right).
\end{equation}
A slot is considered active if
\begin{equation}
    p^{\mathrm{O}}_{n,s}\geq\tau_{\mathrm{O}},
    \qquad
    \tau_{\mathrm{O}}=0.5.
\end{equation}
The set of active slots and the corresponding predicted source count are
\begin{equation}
    \widehat{\mathcal{S}}_n
    =
    \left\{
        s\in\{1,\ldots,S\}:
        p^{\mathrm{O}}_{n,s}\geq\tau_{\mathrm{O}}
    \right\},
    \qquad
    \widehat{K}_n
    =
    \left|\widehat{\mathcal{S}}_n\right|.
\end{equation}

\paragraph{Evaluation Assignment}
Because neither the prediction slots nor the ground-truth sources possess a
predefined order, the active predictions are assigned to the ground-truth
sources before the metrics are computed. This evaluation assignment is
distinct from the training assignment in
Eqs.~\ref{eq:hungarian_cost} and~\ref{eq:hungarian_assignment}, since it
is performed only over slots retained by the objectness threshold. For active prediction \(s\) and ground-truth source \(k\), the evaluation
cost is
\begin{align}
    D_{n,s,k}
    ={}&
    20\,
    \mathbb{I}
    \left[
        \widehat{c}_{n,s}\neq c_{n,k}
    \right]
    +
    \eta_{\mathrm{CF}}
    \left|
        \widehat{f}_{n,s}-f_{n,k}
    \right|
    \nonumber\\
    &+
    \eta_{\mathrm{BW}}
    \left|
        \widehat{b}_{n,s}-b_{n,k}
    \right|
    +
    0.5\,\eta_{\mathrm{SS}}
    \left|
        \widehat{r}_{n,s}-r_{n,k}
    \right|,
    \label{eq:evaluation_cost}
\end{align}
where \(\widehat{c}_{n,s}\), \(\widehat{f}_{n,s}\),
\(\widehat{b}_{n,s}\), and \(\widehat{r}_{n,s}\) denote the predicted type,
center frequency, bandwidth, and signal strength. The corresponding
ground-truth quantities are denoted without hats. The indicator
\(\eta_j\in\{0,1\}\) specifies whether objective \(j\) is enabled in the
evaluated configuration. Consequently, disabled regression tasks do not
contribute to the evaluation assignment.

Applying the Hungarian algorithm to \(\mathbf{D}_n\) yields the set of
one-to-one prediction--target assignments
\begin{equation}
    \mathcal{M}_n
    =
    \operatorname{Hungarian}
    \left(\mathbf{D}_n\right),
\end{equation}
with
\begin{equation}
    \left|\mathcal{M}_n\right|
    =
    \min\!\left(\widehat{K}_n,K_n\right).
\end{equation}
Objectness is not included in Eq.~\ref{eq:evaluation_cost}, because it has
already been used to construct the active prediction set. BOC predictions
and segmentation masks do not contribute to the evaluation assignment.

\paragraph{Valid Source Detection}
An assigned pair \((s,k)\in\mathcal{M}_n\) is considered a valid source
detection if its predicted type is correct and all enabled continuous
attributes satisfy their respective tolerances. The corresponding indicator
is
\begin{align}
    z_{n,s,k}
    ={}&
    \mathbb{I}
    \left[
        \widehat{c}_{n,s}=c_{n,k}
    \right]
    \nonumber\\
    &\times
    \mathbb{I}
    \left[
        \eta_{\mathrm{CF}}=0
        \ \lor\
        \left|
            \widehat{f}_{n,s}-f_{n,k}
        \right|
        \leq\epsilon_{\mathrm{CF}}
    \right]
    \nonumber\\
    &\times
    \mathbb{I}
    \left[
        \eta_{\mathrm{BW}}=0
        \ \lor\
        \left|
            \widehat{b}_{n,s}-b_{n,k}
        \right|
        \leq\epsilon_{\mathrm{BW}}
    \right]
    \nonumber\\
    &\times
    \mathbb{I}
    \left[
        \eta_{\mathrm{SS}}=0
        \ \lor\
        \left|
            \widehat{r}_{n,s}-r_{n,k}
        \right|
        \leq\epsilon_{\mathrm{SS}}
    \right],
    \label{eq:evaluation_valid_match}
\end{align}
where
\begin{equation}
    \epsilon_{\mathrm{CF}}=5\,\mathrm{MHz},
    \qquad
    \epsilon_{\mathrm{BW}}=5\,\mathrm{MHz},
    \qquad
    \epsilon_{\mathrm{SS}}=10\,\mathrm{dB}.
\end{equation}
BOC correctness and mask overlap are evaluated separately and are not
required for an assigned pair to constitute a valid detection.

A valid assigned pair contributes one true positive. An assigned pair that
fails at least one condition in
Eq.~\ref{eq:evaluation_valid_match} contributes one false positive and one
false negative, since the prediction does not correctly represent its
assigned ground-truth source. Each unmatched active prediction contributes
one additional false positive, whereas each unmatched ground-truth source
contributes one additional false negative. Aggregated over the complete test
set, the counts are
\begin{align}
    \mathrm{TP}
    &=
    \sum_{n=1}^{N}
    \sum_{(s,k)\in\mathcal{M}_n}
    z_{n,s,k},
    \\
    \mathrm{FP}
    &=
    \sum_{n=1}^{N}
    \left[
        \widehat{K}_n
        -
        \sum_{(s,k)\in\mathcal{M}_n}
        z_{n,s,k}
    \right],
    \\
    \mathrm{FN}
    &=
    \sum_{n=1}^{N}
    \left[
        K_n
        -
        \sum_{(s,k)\in\mathcal{M}_n}
        z_{n,s,k}
    \right].
\end{align}
Thus, if no slot is active, all \(K_n\) ground-truth sources are counted as
false negatives. Predictions exceeding the number of ground-truth sources
are counted as false positives.

\paragraph{Precision, Recall, and F1-Score}
Precision and recall are micro-averaged over all interference sources in the
test set:
\begin{equation}
    \mathrm{Precision}
    =
    \frac{\mathrm{TP}}
         {\mathrm{TP}+\mathrm{FP}},
    \qquad
    \mathrm{Recall}
    =
    \frac{\mathrm{TP}}
         {\mathrm{TP}+\mathrm{FN}}.
\end{equation}
The F1-score is their harmonic mean:
\begin{equation}
    \mathrm{F1}
    =
    2 \cdot
    \frac{
        \mathrm{Precision}\,
        \mathrm{Recall}
    }{
        \mathrm{Precision}
        +
        \mathrm{Recall}
    }.
\end{equation}
Precision and recall are reported as scores in \([0,1]\), whereas the
F1-score is multiplied by \(100\) and reported in percent.

\paragraph{Count Accuracy}
Count accuracy measures the fraction of test samples for which the number of
active predictions exactly equals the number of ground-truth sources:
\begin{equation}
    \mathrm{Acc}_{\mathrm{count}}
    =
    \frac{1}{N}
    \sum_{n=1}^{N}
    \mathbb{I}
    \left[
        \widehat{K}_n=K_n
    \right].
\end{equation}
Count accuracy considers only the number of predicted sources. A sample can
therefore have a correct count even if one or more assigned attributes are
incorrect.

\paragraph{Type and BOC Accuracy}
Let
\begin{equation}
    N_{\mathrm{M}}
    =
    \sum_{n=1}^{N}
    \left|\mathcal{M}_n\right|
\end{equation}
denote the total number of Hungarian-assigned prediction--target pairs. Type
accuracy is defined as
\begin{equation}
    \mathrm{Acc}_{\mathrm{type}}
    =
    \frac{1}{N_{\mathrm{M}}}
    \sum_{n=1}^{N}
    \sum_{(s,k)\in\mathcal{M}_n}
    \mathbb{I}
    \left[
        \widehat{c}_{n,s}=c_{n,k}
    \right].
\end{equation}

The predicted BOC label is obtained by thresholding its probability at
\(0.5\):
\begin{equation}
    \widehat{q}_{n,s}
    =
    \mathbb{I}
    \left[
        \sigma\!\left(\widehat{q}^{\,\mathrm{logit}}_{n,s}\right)
        \geq0.5
    \right].
\end{equation}
BOC accuracy is then
\begin{equation}
    \mathrm{Acc}_{\mathrm{BOC}}
    =
    \frac{1}{N_{\mathrm{M}}}
    \sum_{n=1}^{N}
    \sum_{(s,k)\in\mathcal{M}_n}
    \mathbb{I}
    \left[
        \widehat{q}_{n,s}=q_{n,k}
    \right].
\end{equation}
Both accuracies are conditional on a prediction being included in the
Hungarian assignment. Missed ground-truth sources and unmatched predictions
do not enter their denominators; their effects are captured by recall and
precision, respectively. Consequently, type and BOC accuracies can remain
high even when not all sources are detected and must therefore be interpreted
jointly with the detection metrics.

\paragraph{Continuous-Parameter Errors}
Center-frequency, bandwidth, and signal-strength errors are reported as mean
absolute errors over all assigned prediction--target pairs:
\begin{align}
    E_{\mathrm{CF}}
    &=
    \frac{1}{N_{\mathrm{M}}}
    \sum_{n=1}^{N}
    \sum_{(s,k)\in\mathcal{M}_n}
    \left|
        \widehat{f}_{n,s}-f_{n,k}
    \right|,
    \\
    E_{\mathrm{BW}}
    &=
    \frac{1}{N_{\mathrm{M}}}
    \sum_{n=1}^{N}
    \sum_{(s,k)\in\mathcal{M}_n}
    \left|
        \widehat{b}_{n,s}-b_{n,k}
    \right|,
    \\
    E_{\mathrm{SS}}
    &=
    \frac{1}{N_{\mathrm{M}}}
    \sum_{n=1}^{N}
    \sum_{(s,k)\in\mathcal{M}_n}
    \left|
        \widehat{r}_{n,s}-r_{n,k}
    \right|.
\end{align}
The respective units are \(\mathrm{MHz}\), \(\mathrm{MHz}\), and
\(\mathrm{dB}\). These errors include every assigned pair, irrespective of
whether it satisfies the detection tolerances in
Eq.~\ref{eq:evaluation_valid_match}. Unmatched sources and unmatched
predictions do not enter these conditional regression metrics. Accordingly,
the regression errors characterize the parameter quality of assigned
predictions rather than the end-to-end ability to recover every
ground-truth source.

\paragraph{Segmentation Metrics}
When mask performance is evaluated quantitatively, the predicted mask
probabilities are binarized at \(0.5\). If necessary, the ground-truth mask
is resized to the resolution of the predicted mask using nearest-neighbor
interpolation. For an assigned pair \((s,k)\), the mask intersection over
union is
\begin{equation}
    \operatorname{IoU}_{n,s,k}
    =
    \frac{
        \left|
            \widehat{M}_{n,s}
            \cap
            M_{n,k}
        \right|
    }{
        \left|
            \widehat{M}_{n,s}
            \cup
            M_{n,k}
        \right|
    }.
\end{equation}
If both masks are empty, their IoU is defined as one. The mean matched-mask
IoU is
\begin{equation}
    \overline{\operatorname{IoU}}
    =
    \frac{1}{N_{\mathrm{M}}}
    \sum_{n=1}^{N}
    \sum_{(s,k)\in\mathcal{M}_n}
    \operatorname{IoU}_{n,s,k}.
\end{equation}
In addition, mask accuracy at an IoU threshold of \(0.5\) is defined as
\begin{equation}
    \mathrm{Acc}_{\mathrm{mask}}
    =
    \frac{1}{N_{\mathrm{M}}}
    \sum_{n=1}^{N}
    \sum_{(s,k)\in\mathcal{M}_n}
    \mathbb{I}
    \left[
        \operatorname{IoU}_{n,s,k}\geq0.5
    \right].
\end{equation}
As with the characterization metrics, masks belonging to missed sources do
not enter these conditional averages; their absence is reflected through
the false-negative count and the resulting recall.

\paragraph{Stratification by Source Count and Overlap}
For the results in Table~\ref{tab:overlap_jammers}, the preceding metrics are
recomputed separately for samples containing a fixed number of ground-truth
sources. A sample is considered overlapping if at least two ground-truth
masks share one or more active time--frequency bins:
\begin{equation}
    \exists\,k\neq l:
    \sum_{u=1}^{N_f}
    \sum_{v=1}^{N_t}
    M_{n,k,u,v}M_{n,l,u,v}>0.
\end{equation}
Otherwise, the sample is categorized as non-overlapping. If no prediction--target assignment is available, the conditional type,
BOC, regression, and segmentation metrics are undefined and are reported as
``--''. Missing sources are not assigned artificial parameter errors; they
affect the detection metrics through the false-negative count.

\subsection{Additional Multi-Interference Characterization Results}
\label{label_appendix_mask_predictions}

\begin{figure}[!t]
    \centering
	\begin{minipage}[t]{0.24\linewidth}
        \centering
    	\includegraphics[trim=26 12 54 10, clip, width=1.0\linewidth]{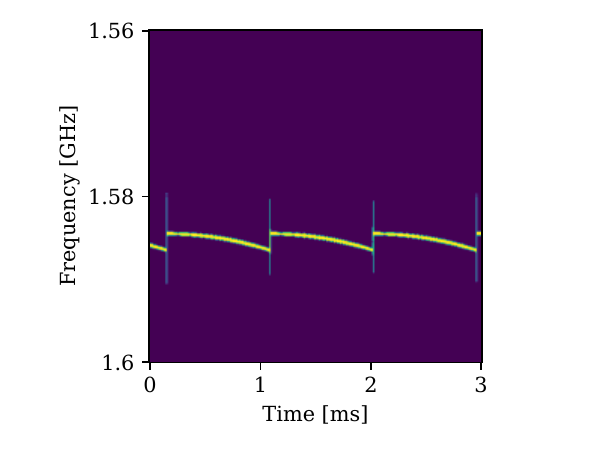}
    \end{minipage}
    \hfill
	\begin{minipage}[t]{0.24\linewidth}
        \centering
    	\includegraphics[trim=26 12 54 10, clip, width=1.0\linewidth]{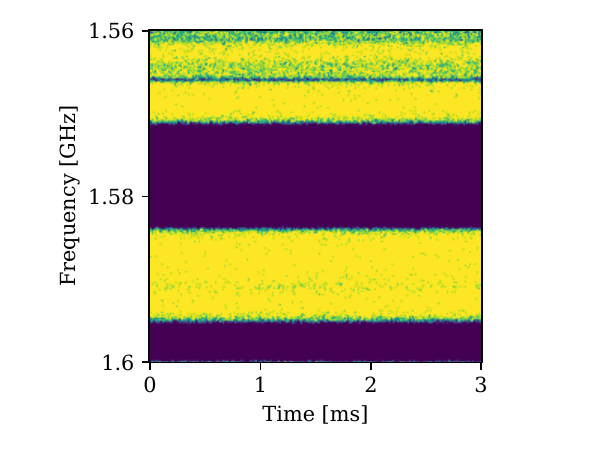}
    \end{minipage}
    \hfill
	\begin{minipage}[t]{0.24\linewidth}
        \centering
    	\includegraphics[trim=26 12 54 10, clip, width=1.0\linewidth]{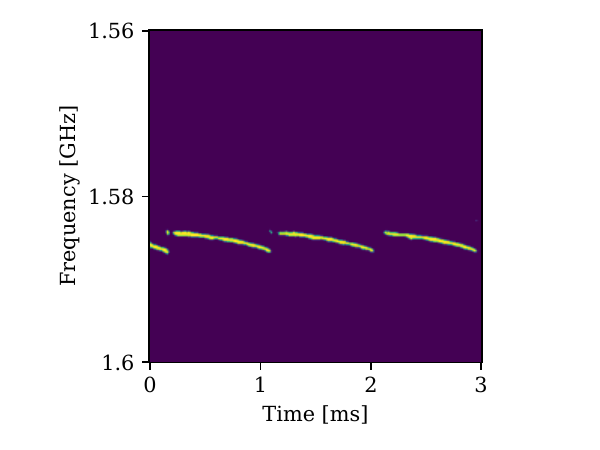}
    \end{minipage}
    \hfill
	\begin{minipage}[t]{0.24\linewidth}
        \centering
    	\includegraphics[trim=26 12 54 10, clip, width=1.0\linewidth]{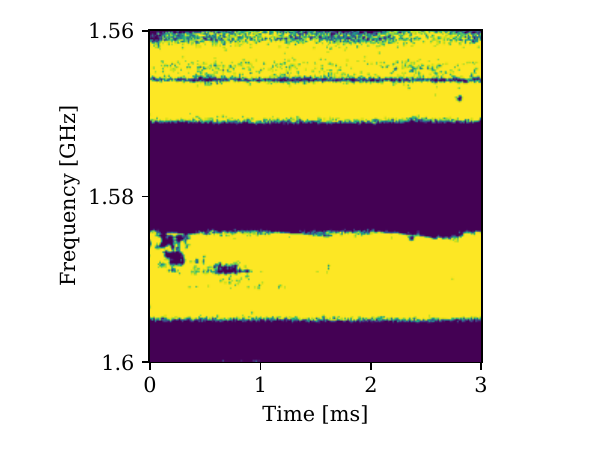}
    \end{minipage}
    \caption{Qualitative source-specific mask predictions for two overlapping interferences; 1--2: ground truth; 3--4: prediction.}
    \label{label_app_mask_example_2}
\end{figure}

\begin{figure}[!t]
    \centering
	\begin{minipage}[t]{0.32\linewidth}
        \centering
    	\includegraphics[trim=26 12 54 10, clip, width=1.0\linewidth]{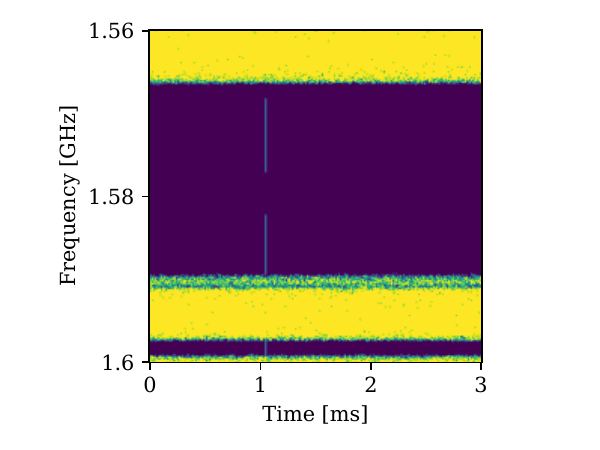}
    \end{minipage}
    \hfill
	\begin{minipage}[t]{0.32\linewidth}
        \centering
    	\includegraphics[trim=26 12 54 10, clip, width=1.0\linewidth]{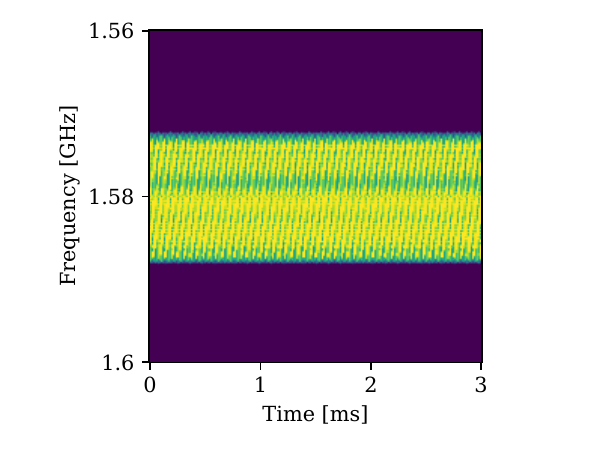}
    \end{minipage}
    \hfill
	\begin{minipage}[t]{0.32\linewidth}
        \centering
    	\includegraphics[trim=26 12 54 10, clip, width=1.0\linewidth]{images/mask_prediction/lab_3_2.pdf}
    \end{minipage}
	\begin{minipage}[t]{0.32\linewidth}
        \centering
    	\includegraphics[trim=26 12 54 10, clip, width=1.0\linewidth]{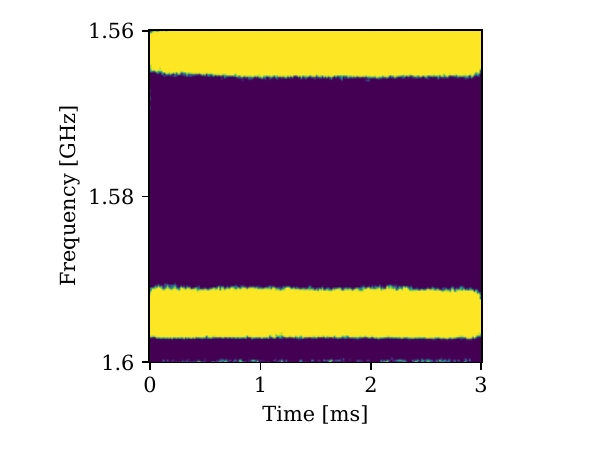}
    \end{minipage}
    \hfill
	\begin{minipage}[t]{0.32\linewidth}
        \centering
    	\includegraphics[trim=26 12 54 10, clip, width=1.0\linewidth]{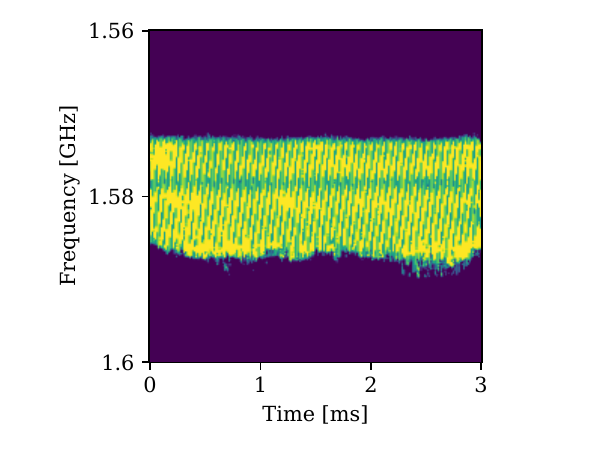}
    \end{minipage}
    \hfill
	\begin{minipage}[t]{0.32\linewidth}
        \centering
    	\includegraphics[trim=26 12 54 10, clip, width=1.0\linewidth]{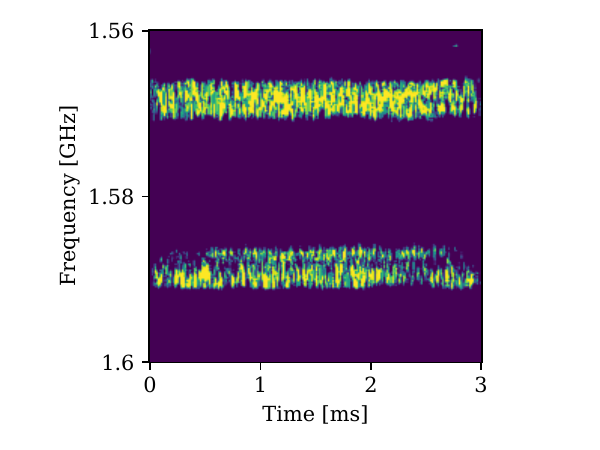}
    \end{minipage}
    \caption{Qualitative source-specific mask predictions for three overlapping interferences; top: ground truth; bottom: prediction.}
    \label{label_app_mask_example_3}
\end{figure}

\begin{figure*}[!t]
    \centering
	\begin{minipage}[t]{0.24\linewidth}
        \centering
    	\includegraphics[trim=26 12 52 10, clip, width=1.0\linewidth]{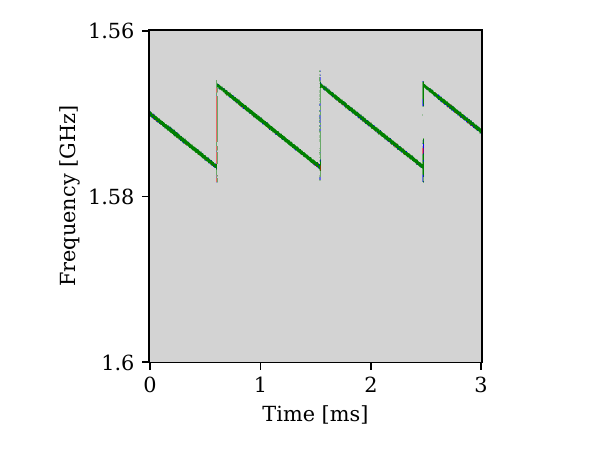}
        \subcaption{Example 1.}
        \label{label_iou_evaluation1}
    \end{minipage}
    \hfill
	\begin{minipage}[t]{0.24\linewidth}
        \centering
    	\includegraphics[trim=26 12 52 10, clip, width=1.0\linewidth]{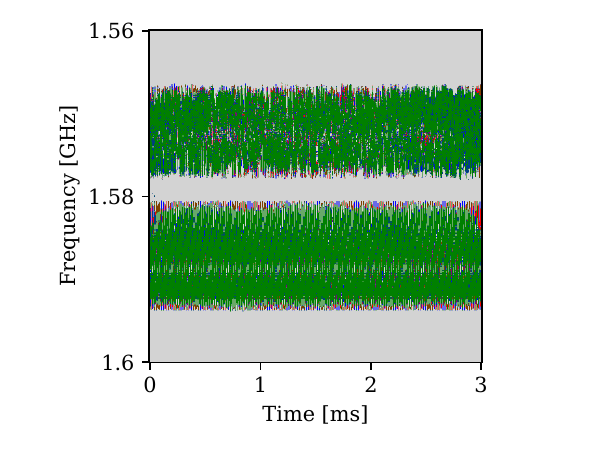}
        \subcaption{Example 2.}
        \label{label_iou_evaluation2}
    \end{minipage}
    \hfill
	\begin{minipage}[t]{0.24\linewidth}
        \centering
    	\includegraphics[trim=26 12 52 10, clip, width=1.0\linewidth]{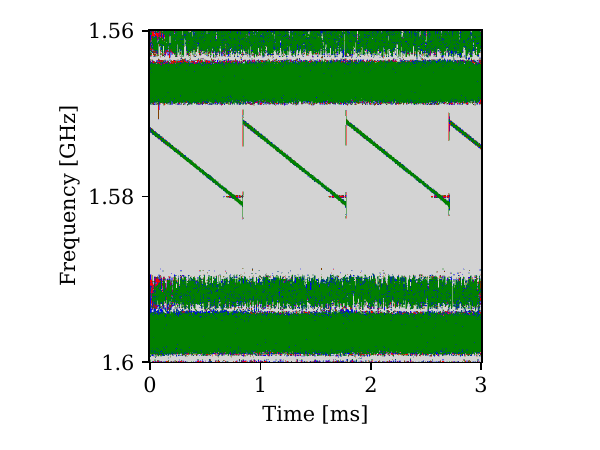}
        \subcaption{Example 3.}
        \label{label_iou_evaluation3}
    \end{minipage}
    \hfill
	\begin{minipage}[t]{0.24\linewidth}
        \centering
    	\includegraphics[trim=26 12 52 10, clip, width=1.0\linewidth]{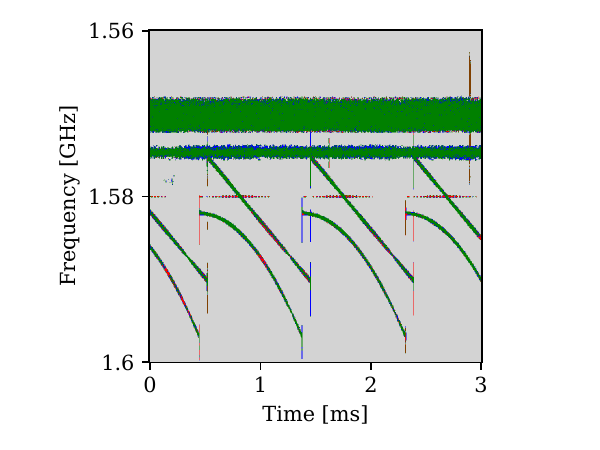}
        \subcaption{Example 4.}
        \label{label_iou_evaluation4}
    \end{minipage}
    \caption{Qualitative comparison of predicted and ground-truth segmentation masks for four exemplary spectrograms of interference sources. Green regions indicate agreement between prediction and ground truth, whereas red and blue regions represent segmentation disagreements. The principal time--frequency structures remain accurately recovered as the source count increases, although increasing spectral density and overlap reduce the IoU from \(0.890\) to \(0.795\).}
    \label{label_iou_evaluation}
\end{figure*}

Figures~\ref{label_app_mask_example_2}, \ref{label_app_mask_example_3}, \ref{label_app_mask_example_4}, \ref{label_app_mask_example_5}, and~\ref{label_app_mask_example_6}
provide additional qualitative results for the source-specific mask predictions.
The ground-truth panels show binary time--frequency masks, whereas the
prediction panels visualize the corresponding probabilities in \([0,1]\).
Since the prediction slots possess no intrinsic order, the predicted masks are
displayed according to their Hungarian assignment to the ground-truth
sources. Consequently, their original slot indices do not necessarily match
the order of the annotations.

Figure~\ref{label_app_mask_example_2} shows two sources with substantially
different spectral structures. The first mask contains a narrow, periodically
interrupted, frequency-varying component, while the second comprises broad
stationary frequency bands. The predictions recover both the narrow trajectory
and the extent of the broadband regions. Minor deviations occur around the
boundaries of the broad mask and at the discontinuities of the narrow
component, where the predictions become slightly fragmented. Nevertheless,
the two sources remain clearly separated despite their simultaneous
occurrence.

\begin{figure}[!t]
    \centering
	\begin{minipage}[t]{0.24\linewidth}
        \centering
    	\includegraphics[trim=26 12 54 10, clip, width=1.0\linewidth]{images/mask_prediction/lab_4_0.pdf}
    \end{minipage}
    \hfill
	\begin{minipage}[t]{0.24\linewidth}
        \centering
    	\includegraphics[trim=26 12 54 10, clip, width=1.0\linewidth]{images/mask_prediction/lab_4_1.pdf}
    \end{minipage}
    \hfill
	\begin{minipage}[t]{0.24\linewidth}
        \centering
    	\includegraphics[trim=26 12 54 10, clip, width=1.0\linewidth]{images/mask_prediction/lab_4_2.pdf}
    \end{minipage}
    \hfill
	\begin{minipage}[t]{0.24\linewidth}
        \centering
    	\includegraphics[trim=26 12 54 10, clip, width=1.0\linewidth]{images/mask_prediction/lab_4_3.pdf}
    \end{minipage}
	\begin{minipage}[t]{0.24\linewidth}
        \centering
    	\includegraphics[trim=26 12 54 10, clip, width=1.0\linewidth]{images/mask_prediction/pred_4_0.pdf}
    \end{minipage}
    \hfill
	\begin{minipage}[t]{0.24\linewidth}
        \centering
    	\includegraphics[trim=26 12 54 10, clip, width=1.0\linewidth]{images/mask_prediction/pred_4_3.pdf}
    \end{minipage}
    \hfill
	\begin{minipage}[t]{0.24\linewidth}
        \centering
    	\includegraphics[trim=26 12 54 10, clip, width=1.0\linewidth]{images/mask_prediction/pred_4_2.pdf}
    \end{minipage}
    \hfill
	\begin{minipage}[t]{0.24\linewidth}
        \centering
    	\includegraphics[trim=26 12 54 10, clip, width=1.0\linewidth]{images/mask_prediction/pred_4_1.pdf}
    \end{minipage}
    \caption{Qualitative source-specific mask predictions for four overlapping interferences; top: ground truth; bottom: prediction.}
    \label{label_app_mask_example_4}
\end{figure}

\begin{figure}[!t]
    \centering
	\begin{minipage}[t]{0.19\linewidth}
        \centering
    	\includegraphics[trim=26 12 54 10, clip, width=1.0\linewidth]{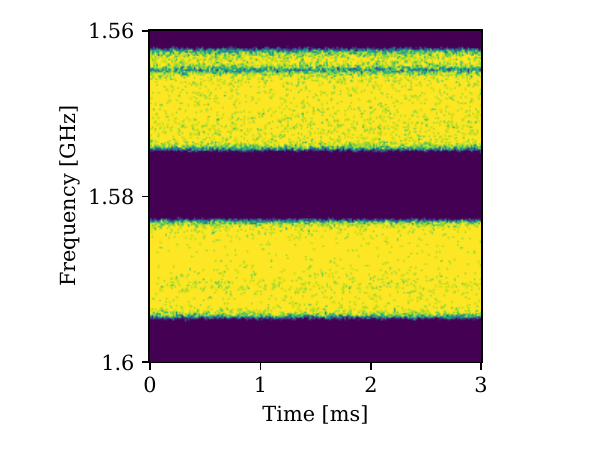}
    \end{minipage}
    \hfill
	\begin{minipage}[t]{0.19\linewidth}
        \centering
    	\includegraphics[trim=26 12 54 10, clip, width=1.0\linewidth]{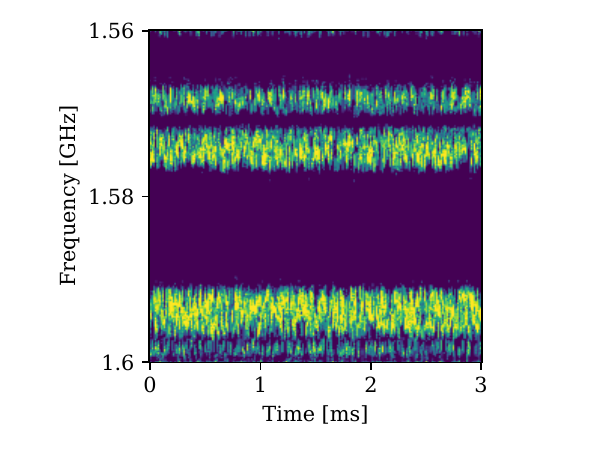}
    \end{minipage}
    \hfill
	\begin{minipage}[t]{0.19\linewidth}
        \centering
    	\includegraphics[trim=26 12 54 10, clip, width=1.0\linewidth]{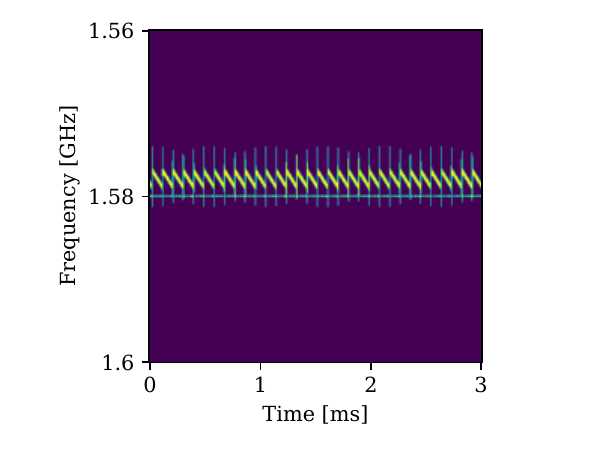}
    \end{minipage}
    \hfill
	\begin{minipage}[t]{0.19\linewidth}
        \centering
    	\includegraphics[trim=26 12 54 10, clip, width=1.0\linewidth]{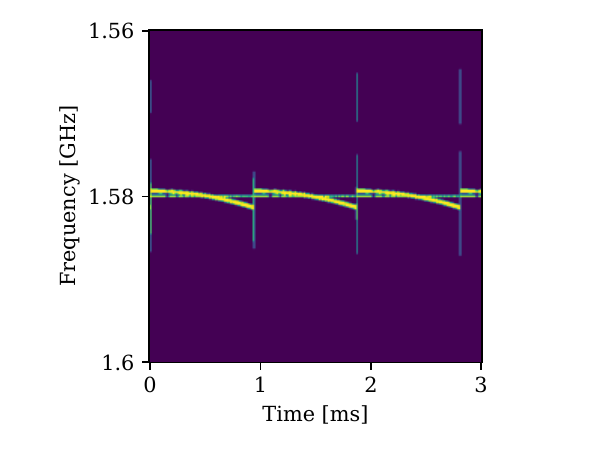}
    \end{minipage}
    \hfill
	\begin{minipage}[t]{0.19\linewidth}
        \centering
    	\includegraphics[trim=26 12 54 10, clip, width=1.0\linewidth]{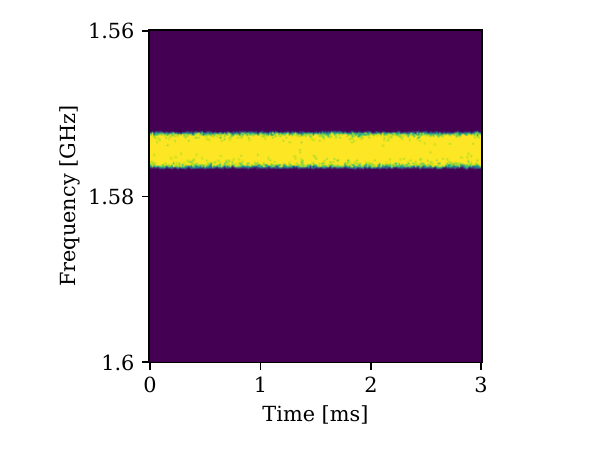}
    \end{minipage}
	\begin{minipage}[t]{0.19\linewidth}
        \centering
    	\includegraphics[trim=26 12 54 10, clip, width=1.0\linewidth]{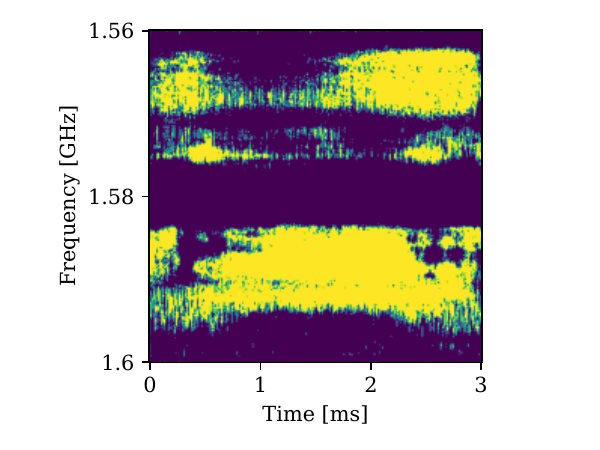}
    \end{minipage}
    \hfill
	\begin{minipage}[t]{0.19\linewidth}
        \centering
    	\includegraphics[trim=26 12 54 10, clip, width=1.0\linewidth]{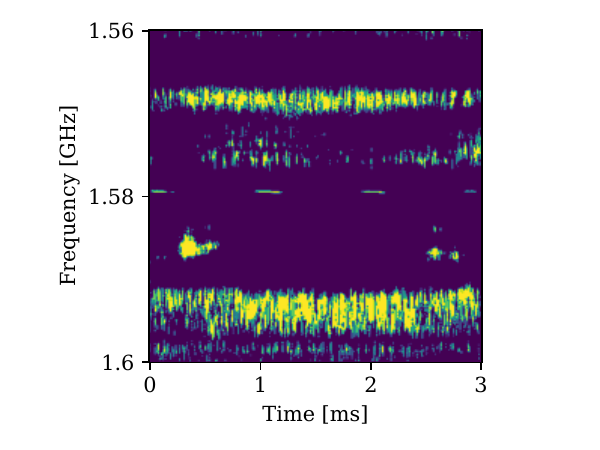}
    \end{minipage}
    \hfill
	\begin{minipage}[t]{0.19\linewidth}
        \centering
    	\includegraphics[trim=26 12 54 10, clip, width=1.0\linewidth]{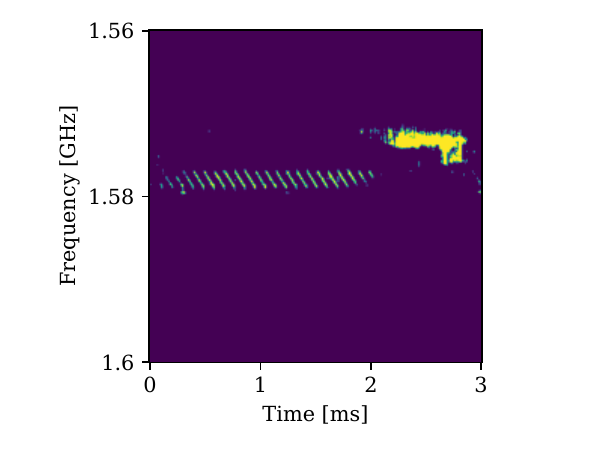}
    \end{minipage}
    \hfill
	\begin{minipage}[t]{0.19\linewidth}
        \centering
    	\includegraphics[trim=26 12 54 10, clip, width=1.0\linewidth]{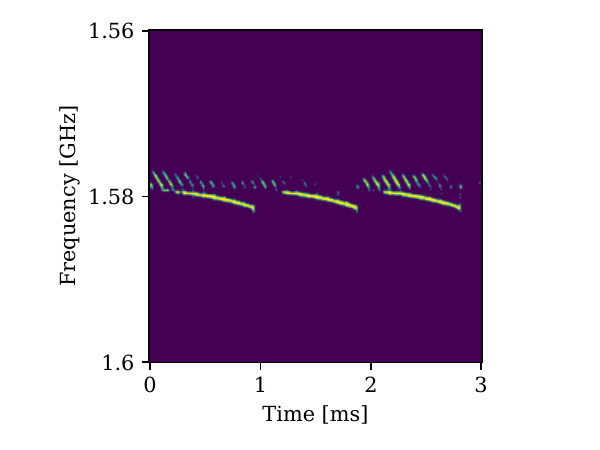}
    \end{minipage}
    \hfill
	\begin{minipage}[t]{0.19\linewidth}
        \centering
    	\includegraphics[trim=26 12 54 10, clip, width=1.0\linewidth]{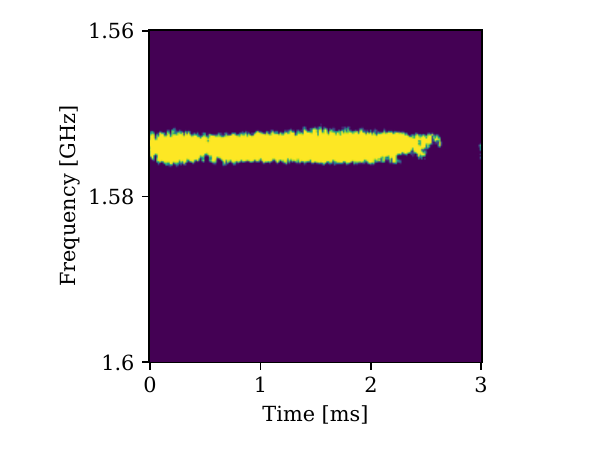}
    \end{minipage}
    \caption{Qualitative source-specific mask predictions for five overlapping interferences; top: ground truth; bottom: prediction.}
    \label{label_app_mask_example_5}
\end{figure}

\begin{figure}[!t]
    \centering
	\begin{minipage}[t]{0.155\linewidth}
        \centering
    	\includegraphics[trim=26 12 54 10, clip, width=1.0\linewidth]{images/mask_prediction/lab_6_0.pdf}
    \end{minipage}
    \hfill
	\begin{minipage}[t]{0.155\linewidth}
        \centering
    	\includegraphics[trim=26 12 54 10, clip, width=1.0\linewidth]{images/mask_prediction/lab_6_1.pdf}
    \end{minipage}
    \hfill
	\begin{minipage}[t]{0.155\linewidth}
        \centering
    	\includegraphics[trim=26 12 54 10, clip, width=1.0\linewidth]{images/mask_prediction/lab_6_2.pdf}
    \end{minipage}
    \hfill
	\begin{minipage}[t]{0.155\linewidth}
        \centering
    	\includegraphics[trim=26 12 54 10, clip, width=1.0\linewidth]{images/mask_prediction/lab_6_3.pdf}
    \end{minipage}
    \hfill
	\begin{minipage}[t]{0.155\linewidth}
        \centering
    	\includegraphics[trim=26 12 54 10, clip, width=1.0\linewidth]{images/mask_prediction/lab_6_4.pdf}
    \end{minipage}
    \hfill
	\begin{minipage}[t]{0.155\linewidth}
        \centering
    	\includegraphics[trim=26 12 54 10, clip, width=1.0\linewidth]{images/mask_prediction/lab_6_5.pdf}
    \end{minipage}
	\begin{minipage}[t]{0.155\linewidth}
        \centering
    	\includegraphics[trim=26 12 54 10, clip, width=1.0\linewidth]{images/mask_prediction/pred_6_3.pdf}
    \end{minipage}
    \hfill
	\begin{minipage}[t]{0.155\linewidth}
        \centering
    	\includegraphics[trim=26 12 54 10, clip, width=1.0\linewidth]{images/mask_prediction/pred_6_2.pdf}
    \end{minipage}
    \hfill
	\begin{minipage}[t]{0.155\linewidth}
        \centering
    	\includegraphics[trim=26 12 54 10, clip, width=1.0\linewidth]{images/mask_prediction/pred_6_5.pdf}
    \end{minipage}
    \hfill
	\begin{minipage}[t]{0.155\linewidth}
        \centering
    	\includegraphics[trim=26 12 54 10, clip, width=1.0\linewidth]{images/mask_prediction/pred_6_1.pdf}
    \end{minipage}
    \hfill
	\begin{minipage}[t]{0.155\linewidth}
        \centering
    	\includegraphics[trim=26 12 54 10, clip, width=1.0\linewidth]{images/mask_prediction/pred_6_4.pdf}
    \end{minipage}
    \hfill
	\begin{minipage}[t]{0.155\linewidth}
        \centering
    	\includegraphics[trim=26 12 54 10, clip, width=1.0\linewidth]{images/mask_prediction/pred_6_0.pdf}
    \end{minipage}
    \caption{Qualitative source-specific mask predictions for six overlapping interferences; top: ground truth; bottom: prediction.}
    \label{label_app_mask_example_6}
\end{figure}

\begin{table}[t]
\centering
\caption{Segmentation performance by number of interference sources corresponding to examples of Figure~\ref{label_iou_evaluation}.}
\label{tab:mask_metrics_by_count}
\begin{tabular}{@{}crrrrr@{}}
\toprule
\textbf{Example} &
\textbf{IoU} &
\textbf{Dice} &
\textbf{Precision} &
\textbf{Recall} &
\textbf{Specificity} \\
\midrule
1 (Fig.~\ref{label_iou_evaluation1}) & 0.890 & 0.942 & 0.969 & 0.916 & 1.000 \\
2 (Fig.~\ref{label_iou_evaluation2}) & 0.884 & 0.938 & 0.954 & 0.923 & 0.985 \\
3 (Fig.~\ref{label_iou_evaluation3}) & 0.852 & 0.920 & 0.906 & 0.935 & 0.985 \\
4 (Fig.~\ref{label_iou_evaluation4}) & 0.795 & 0.885 & 0.869 & 0.903 & 0.994 \\
\bottomrule
\end{tabular}
\end{table}

Figure~\ref{label_iou_evaluation} shows that the predicted masks recover the principal time--frequency structures, ranging from isolated chirps to dense combinations of broadband, narrowband, and nonlinear components. The quantitative results in Table~\ref{tab:mask_metrics_by_count} confirm a gradual degradation with increasing source complexity: the IoU decreases from \(0.890\) to \(0.795\), while the Dice score decreases from \(0.942\) to \(0.885\). Precision declines from \(0.969\) to \(0.869\), whereas recall remains above \(0.90\) for all examples, indicating that increasing complexity primarily introduces additional or enlarged mask regions rather than causing substantial omissions. Specificity remains at least \(0.985\), demonstrating reliable background rejection. Overall, PI-SCAN preserves accurate source-specific segmentation under increasing spectral density, with most discrepancies occurring near mask boundaries and overlapping regions.

\begin{figure*}[!t]
    \centering
	\begin{minipage}[t]{0.19\linewidth}
        \centering
    	\includegraphics[trim=6 6 6 6, clip, width=1.0\linewidth]{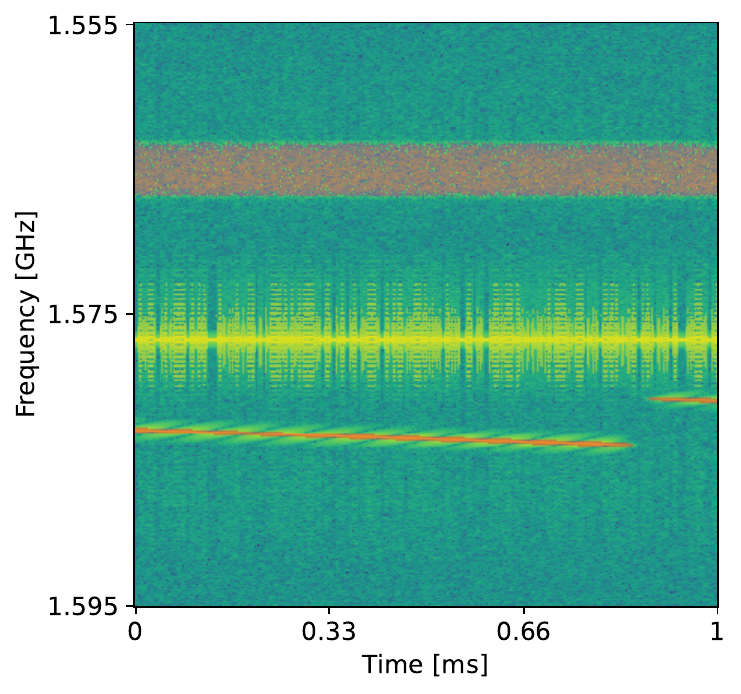}
    \end{minipage}
    \hfill
	\begin{minipage}[t]{0.19\linewidth}
        \centering
    	\includegraphics[trim=6 6 6 6, clip, width=1.0\linewidth]{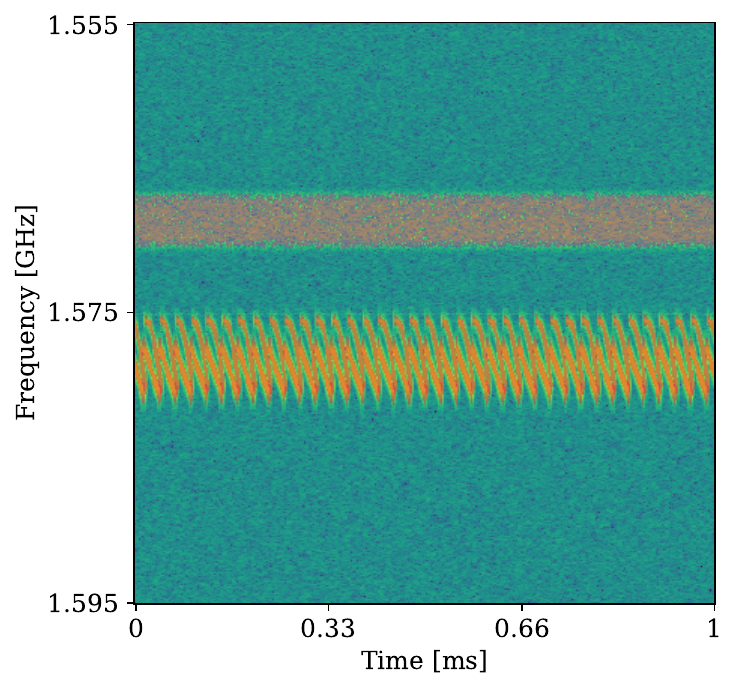}
    \end{minipage}
    \hfill
	\begin{minipage}[t]{0.19\linewidth}
        \centering
    	\includegraphics[trim=6 6 6 6, clip, width=1.0\linewidth]{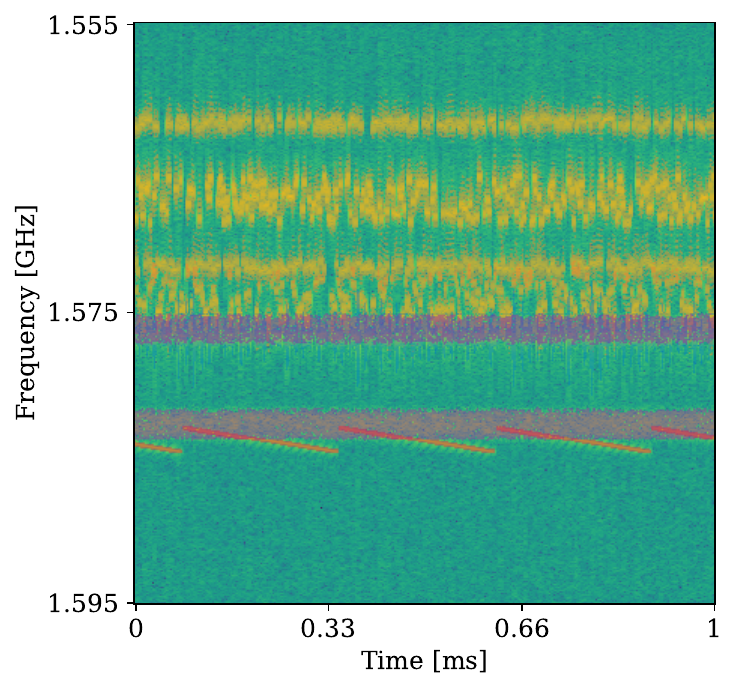}
    \end{minipage}
    \hfill
	\begin{minipage}[t]{0.19\linewidth}
        \centering
    	\includegraphics[trim=6 6 6 6, clip, width=1.0\linewidth]{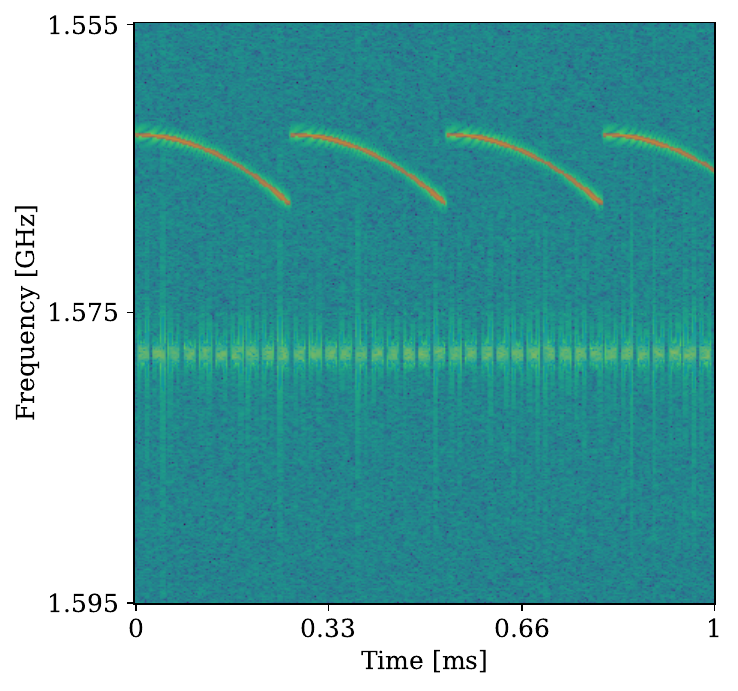}
    \end{minipage}
    \hfill
	\begin{minipage}[t]{0.19\linewidth}
        \centering
    	\includegraphics[trim=6 6 6 6, clip, width=1.0\linewidth]{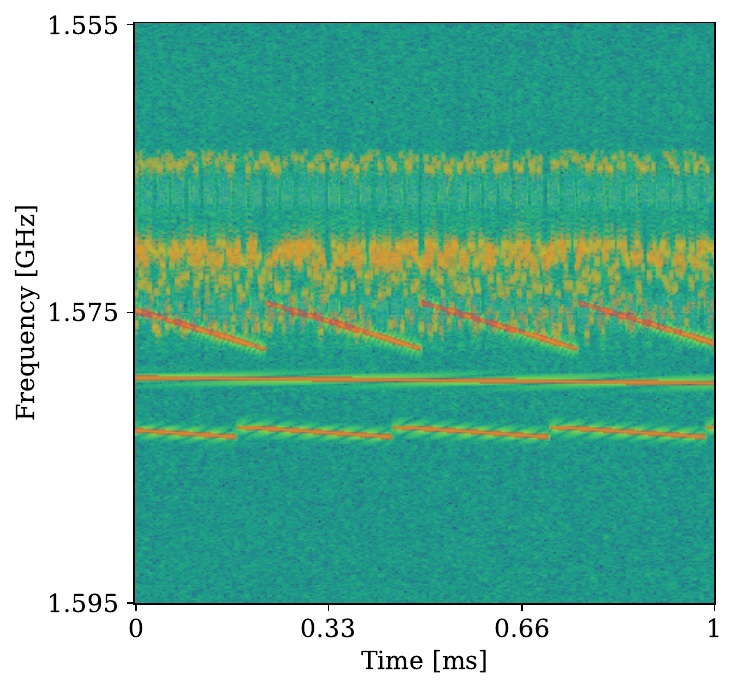}
    \end{minipage}
	\begin{minipage}[t]{0.19\linewidth}
        \centering
    	\includegraphics[trim=6 6 6 6, clip, width=1.0\linewidth]{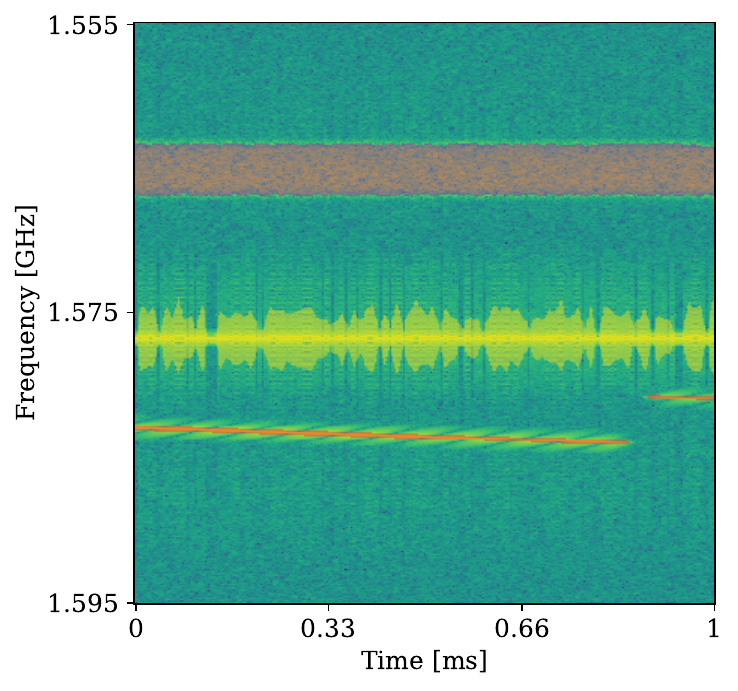}
    \end{minipage}
    \hfill
	\begin{minipage}[t]{0.19\linewidth}
        \centering
    	\includegraphics[trim=6 6 6 6, clip, width=1.0\linewidth]{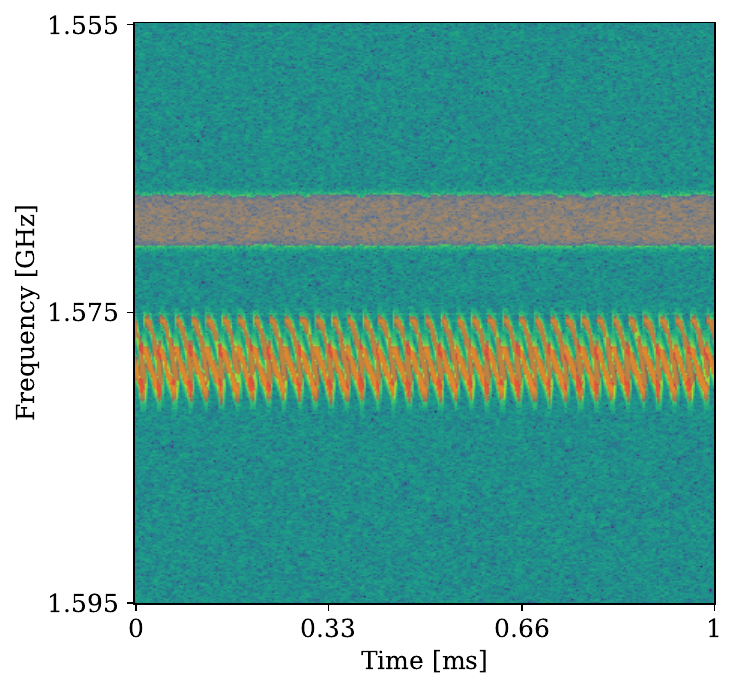}
    \end{minipage}
    \hfill
	\begin{minipage}[t]{0.19\linewidth}
        \centering
    	\includegraphics[trim=6 6 6 6, clip, width=1.0\linewidth]{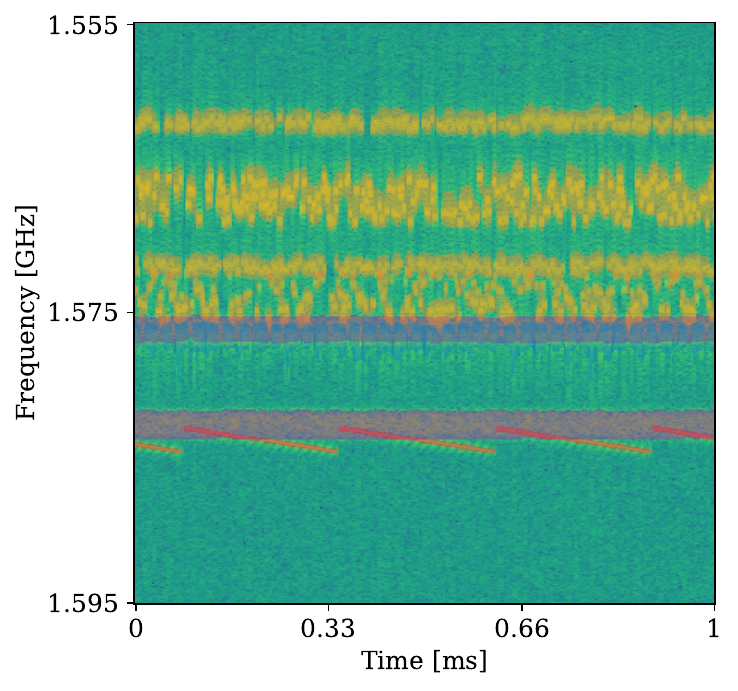}
    \end{minipage}
    \hfill
	\begin{minipage}[t]{0.19\linewidth}
        \centering
    	\includegraphics[trim=6 6 6 6, clip, width=1.0\linewidth]{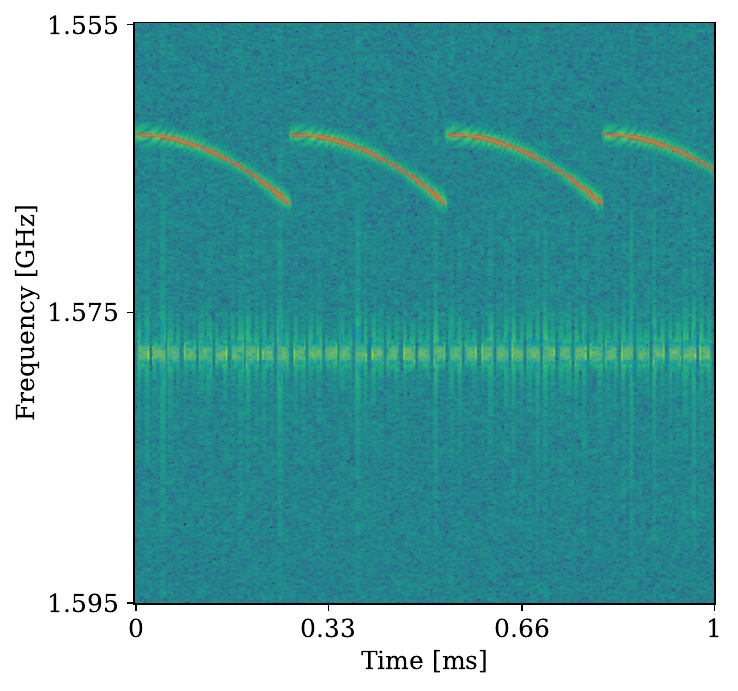}
    \end{minipage}
    \hfill
	\begin{minipage}[t]{0.19\linewidth}
        \centering
    	\includegraphics[trim=6 6 6 6, clip, width=1.0\linewidth]{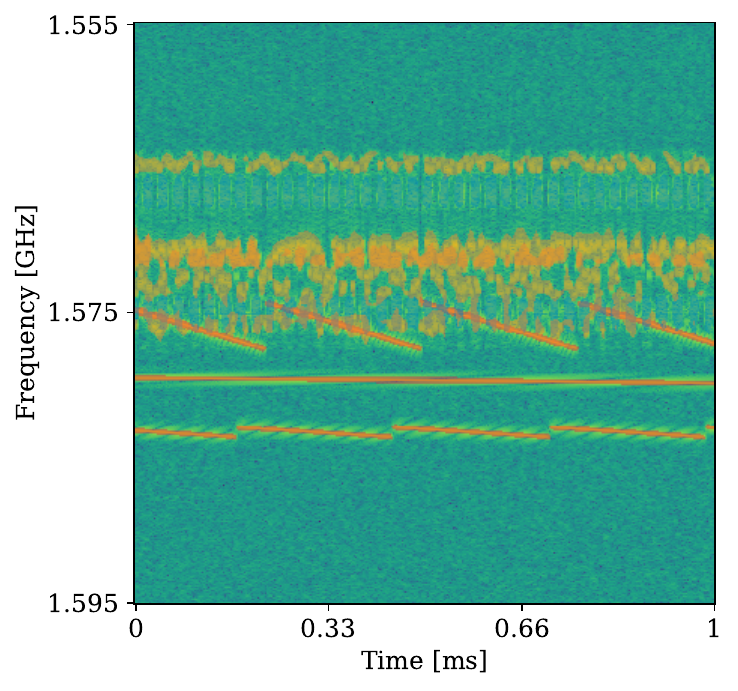}
    \end{minipage}
    \caption{Qualitative RF-DETR instance-segmentation results for five representative multi-interference spectrograms. Each column shows the same sample, with the ground-truth masks in the top row and the corresponding RF-DETR predictions in the bottom row. The colored overlays denote individual interference regions. RF-DETR recovers the principal broadband, narrowband, periodic, and chirped structures, with deviations occurring mainly at mask boundaries and in weak or overlapping regions.}
    \label{label_app_rf_detr_results}
\end{figure*}

To provide a transformer-based segmentation baseline, we trained the
medium-sized RF-DETR model~\cite{robinson_rf_detr} for \(50\) epochs using the
same dataset, comprising scenes with one to six interference sources, and the
same ground-truth masks as PI-SCAN. RF-DETR achieves an F1-score of \(0.912\),
demonstrating strong segmentation performance. The qualitative examples in
Figure~\ref{label_app_rf_detr_results} confirm that the model accurately
recovers the dominant broadband, narrowband, periodic, and chirped structures,
although minor deviations occur around mask boundaries and weak or overlapping
regions. While RF-DETR provides an effective segmentation baseline, it does
not jointly estimate the source-specific physical characteristics predicted by
PI-SCAN.

\subsection{RSSI-Based Evaluation}
\label{label_appendix_rssi_evaluation}

\begin{figure*}[!t]
    \centering
	\begin{minipage}[t]{0.16\linewidth}
        \centering
    	\includegraphics[trim=12 12 10 10, clip, width=1.0\linewidth]{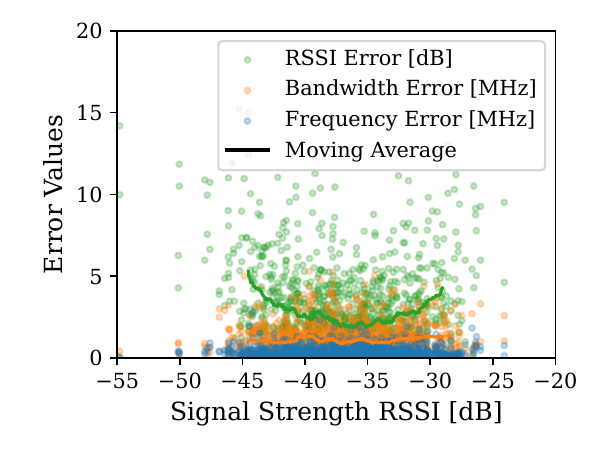}
        \subcaption{One source.}
        \label{label_figure_app_rssi_evaluation1}
    \end{minipage}
    \hfill
	\begin{minipage}[t]{0.16\linewidth}
        \centering
    	\includegraphics[trim=12 12 10 10, clip, width=1.0\linewidth]{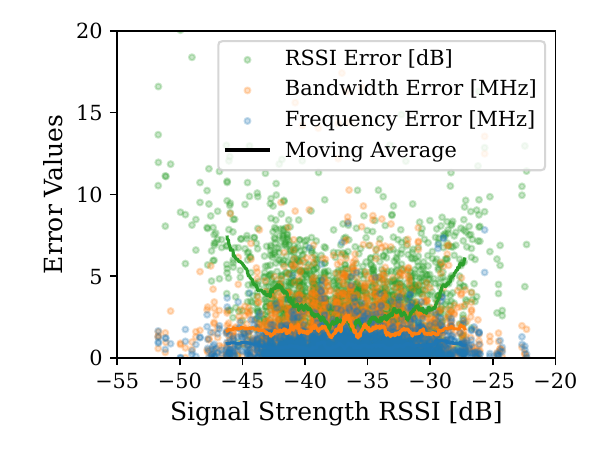}
        \subcaption{Two sources.}
        \label{label_figure_app_rssi_evaluation2}
    \end{minipage}
    \hfill
	\begin{minipage}[t]{0.16\linewidth}
        \centering
    	\includegraphics[trim=12 12 10 10, clip, width=1.0\linewidth]{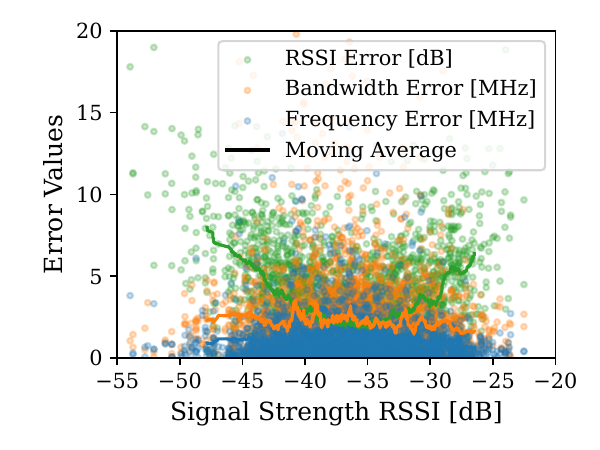}
        \subcaption{Three sources.}
        \label{label_figure_app_rssi_evaluation3}
    \end{minipage}
    \hfill
	\begin{minipage}[t]{0.16\linewidth}
        \centering
    	\includegraphics[trim=12 12 10 10, clip, width=1.0\linewidth]{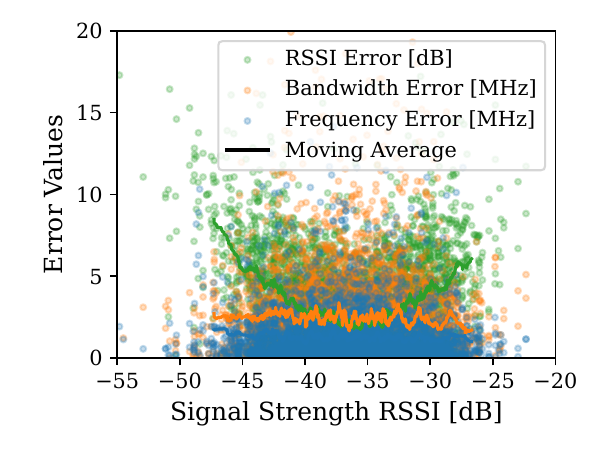}
        \subcaption{Four sources.}
        \label{label_figure_app_rssi_evaluation4}
    \end{minipage}
    \hfill
	\begin{minipage}[t]{0.16\linewidth}
        \centering
    	\includegraphics[trim=12 12 10 10, clip, width=1.0\linewidth]{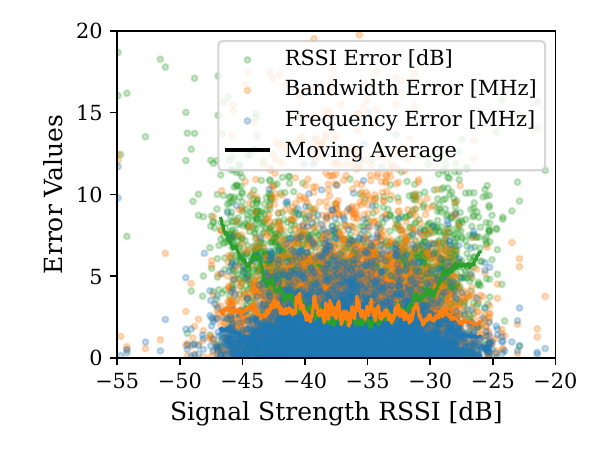}
        \subcaption{Five sources.}
        \label{label_figure_app_rssi_evaluation5}
    \end{minipage}
    \hfill
	\begin{minipage}[t]{0.16\linewidth}
        \centering
    	\includegraphics[trim=12 12 10 10, clip, width=1.0\linewidth]{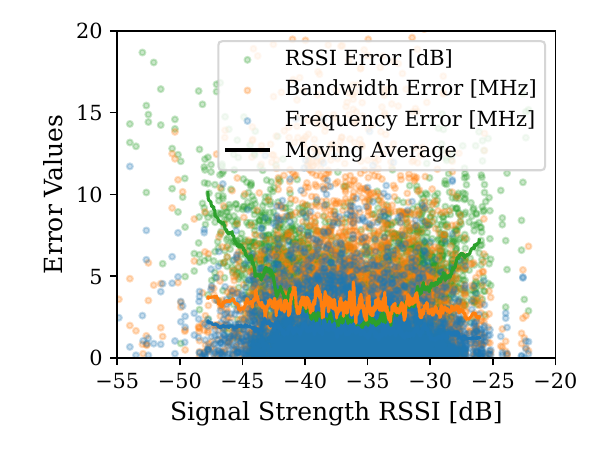}
        \subcaption{Six sources.}
        \label{label_figure_app_rssi_evaluation6}
    \end{minipage}
    \caption{Source-wise characterization errors as a function of the reference RSSI for test scenes containing one to six interference sources. The scatter points show the absolute RSSI error in dB and the absolute bandwidth and center-frequency errors in MHz. The solid curves indicate the corresponding moving averages over the reference RSSI.}
    \label{label_figure_app_rssi_evaluation}
\end{figure*}

Figure~\ref{label_figure_app_rssi_evaluation} shows that center frequency is
estimated with the lowest error, whereas bandwidth estimation exhibits a
larger dispersion that generally increases with the number of simultaneously
active sources. RSSI estimation produces the largest and most widely
distributed errors. Its moving-average error follows a pronounced U-shaped
trend, increasing for weak sources between approximately \(-50\) and
\(-45\,\mathrm{dB}\) and for strong sources between approximately \(-30\) and
\(-25\,\mathrm{dB}\), while remaining lower in the intermediate range. This
behavior indicates regression toward the central RSSI range and can be
attributed to the sample-wise normalization of the input spectrograms, which
largely removes the absolute amplitude scale required for reliable RSSI
estimation. Consequently, the normalized representation retains sufficient
spectral information for frequency and bandwidth estimation but makes absolute
signal-strength recovery intrinsically more difficult.

\subsection{Limitations}
\label{label_appendix_limitations}

The fixed set size \(S=6\) limits the model to at most six individually
resolved interference sources per spectrogram. If more than six sources are
present, additional sources may remain undetected or be merged into the masks
and characteristics of other predictions, while the estimated source count
saturates at six. This limit is a design choice based on the maximum mixture
complexity represented in the dataset rather than a restriction of the
set-based formulation. Larger mixtures can be supported by increasing \(S\),
extending the training data accordingly, and retraining the model, at the
cost of additional prediction-head, mask-decoder, and matching computation.

Furthermore, all models are trained and evaluated using synthetically
generated signals propagated within the considered Sionna ray-tracing
environment. Although this enables controlled variation of source types,
powers, bandwidths, and overlap conditions, the experiments do not establish
generalization to real-world measurements. Differences in receiver hardware,
noise characteristics, propagation conditions, and previously unseen
interference waveforms may cause a simulation-to-real domain gap. Validation
on measured data and adaptation to previously unseen environments therefore
remain important directions for future work.

\vfill

\end{document}